\newtheorem{theorem}{Theorem}
\newtheorem{corollary}{Corollary}
\newtheorem{remark}{Remark}
\newtheorem{proposition}[]{Proposition}
\newtheorem{lemma}{Lemma}
\numberwithin{equation}{section}
\newcommand{\tr}{\operatorname{tr}}
\newcommand\Vtextvisiblespace[1][.5em]{%
  \mbox{\kern.06em\vrule height.3ex}%
  \vbox{\hrule width#1}%
  \hbox{\vrule height.3ex}}
\tikzset{wave/.style={decorate, decoration=snake}}
\definecolor{MyBlue}{rgb}{0.25,0.5,0.75}
\colorlet{NextBlue}{MyBlue!20}
\colorlet{SecondBlue}{MyBlue!40}
\NewDocumentCommand{\tens}{t_}
 {%
  \IfBooleanTF{#1}
   {\tensop}
   {\otimes}%
 }
\NewDocumentCommand{\tensop}{m}
 {%
  \mathbin{\mathop{\otimes}\displaylimits_{#1}}%
 }
\DeclareMathOperator{\res}{Res}
\newcommand{\Li}{\mathrm{Li}}
\newcommand{\Yt}{\widetilde{Y}}
\newcommand{\sigmat}{\widetilde{\sigma}}
\newcommand{\Qt}{\widetilde{Q}}
\newcommand{\Pt}{\widetilde{P}}
\newcommand{\Ht}{\widetilde{H}}
\newcommand{\Tt}{\widetilde{\mathcal{T}}}
\newcommand{\Zt}{\widetilde{Z}}
\newcommand{\ttau}{\widetilde{\tau}}
\newcommand{\ta}{\widetilde{a}}
\newcommand{\teta}{\widetilde{\eta}}
\newcommand{\tnu}{\widetilde{\nu}}
\newcommand{\Gt}{\widetilde{G}}
\newcommand{\rhot}{\widetilde{\rho}}
\newcommand{\cBt}{\widetilde{\mathcal{B}}}
\newcommand{\T}{\mathcal{T}}
\newcommand{\e}{\varepsilon}
\newcommand{\dd}{{\rm d}}
\newcommand{\taut}{\widetilde{\tau}}
\newcommand{\at}{\ta}
\newcommand{\nut}{\tnu}
\newcommand{\cA}{\mathcal{A}}
\newcommand{\cM}{\mathcal{M}}
\newcommand{\cB}{\mathcal{B}}
\definecolor{darkspringgreen}{rgb}{0.05, 0.5, 0.06}
\definecolor{MyBlue}{rgb}{0.25,0.25,0.75}
\definecolor{MyRed}{rgb}{0.75,0.25,0.25}
\colorlet{NextBlue}{MyBlue!20}
\colorlet{SecondBlue}{MyBlue!40}
\newcommand{\sigmab}{\widetilde{\sigma}}
\definecolor{lightbeige}{RGB}{245, 245, 220} 
\title{Modular transformations of tau functions and conformal blocks on the torus}
\abstract{The connection problem for isomonodromic tau functions on the one-punctured torus concerns the ratio between the tau function and its modular transform, associated to dual pants decompositions of the torus. In this paper, we study the modular transformations of the tau function and consequently derive the connection constant. Moreover, through the relation with two-dimensional Conformal Field Theory, we also obtain an exact closed formula for the $c=1$ Virasoro modular kernel, whose expression was previously unknown, and relate it to the $c\rightarrow\infty$ (semiclassical) modular kernel and $SL_2(\mathbb{C})$ complex Chern-Simons amplitudes. Finally, we prove that the connection constant and the two, $c=1$ and $c\to \infty$, modular kernels are generating functions of canonical transformations on the character variety of the one-punctured torus. Our results are also relevant for the $\mathcal{N}=2^*$ gauge theory.}
\author{Fabrizio Del Monte\(^a\), Harini Desiraju\(^{b, c}\), Pavlo Gavrylenko\(^{d, e}\)}
\affiliation{\({}^a\)School of Mathematics, University of Birmingham, Watson Building, Edgbaston, Birmingham B15 2TT, UK}
\affiliation{\({}^b\)Sydney Mathematical Research Institute, A14 Quadrangle, Camperdown campus, University of Sydney, Sydney 2010, Australia}
\affiliation{\({}^c\)Mathematical Institute, University of Oxford, Oxford OX2 6GG, UK}
\affiliation{\({}^d\)SISSA, Via Bonomea 265, 34136 Trieste, Italy}
\affiliation{\({}^e\)Bogolyubov Institute for Theoretical Physics, 14-B Metrolohichna str., Kyiv 03143, Ukraine }
    \emailAdd{f.delmonte.mp@gmail.com}
    \emailAdd{harini.desiraju@gmail.com}
    \emailAdd{pasha.145@gmail.com}
\date{}
\begin{document}

\maketitle
\section{Introduction}

Isomonodromic deformation equations (IDEs), of which Painlev\'e equations are special cases, play a fundamental role in mathematical physics, with applications ranging from algebraic geometry to statistical mechanics. The central object of study in these equations is the tau function, which is deeply connected to the solutions of IDEs and acts as the generator of their Hamiltonians. A classical challenge in the theory of integrable systems is the connection problem, which seeks to understand the asymptotics of the tau function at one critical point based on its behaviour at another. This problem was first sought after due its ties to statistical physics. For instance, the connection constant arising in the Ising model—equivalent to that of a special solution of Painlev\'e III—was determined in \cite{tracy1991asymptotics}. Since then, there have been several approaches to solve this problem, but for generic IDEs the solution has long been elusive. 

A key aspect that makes this problem hard to solve is the lack of a closed-form expression for the tau function. In the past decade, the  relation between Painlev\'e equations and two-dimensional Conformal Field Theories (CFTs) started to become clear through the works of \cite{GavrylenkoLisovyi2016b} and others. Due to this correspondence, the Painlev\'e tau functions (in most of the cases) are expressible in terms of conformal blocks, which are the basis of representations for the Virasoro algebra that encodes the symmetries of a given CFT. Such an expression led to the resolution of the connection problem for the cases of Painlev\'e II, III, and VI \cite{Iorgov2013,Its2014,ILP2016}. 

In \cite{DMDG2020} (see also \cite{Bonelli2019,bonelli2021a} for previous physical constructions), we extended the Kyiv formula to the case of Fuchsian systems on surfaces of genus 1. Specifically, we showed that the associated tau function is expressible as a Fredholm determinant of a trace class operator explicitly described in terms of hypergeometric functions. A key feature of our construction, which can be viewed as the genus 1 generalization of \cite{GavrylenkoLisovyi2016b}, was the pants decomposition of the torus along its $A$-cycle. This effectively let us describe the local behavior of Fuchsian systems on the torus in terms of Fuchsian systems on three-punctured spheres, which are explicitly solvable in terms of hypergeometric functions. The resultant tau function turns out to be suitable to study the asymptotics for $\tau\to i \infty$. In the case of one puncture that we will consider in this paper, this tau function depends on the modular parameter $\tau$ and the monodromy variables.
This tau function is well defined up to a constant independent of $\tau$. 

As a next step in the analytical understanding of these tau functions, in \cite{DMDG2022} we used our Fredholm determinant representation to study their monodromy dependence
by calculating the {\it total} logarithmic derivatives, with respect to both times and monodromy variables. In this way, the normalisation is encoded into a closed one-form on the monodromy variety, an idea originating from \cite{ILP2016} for the case of a sphere with singular points. By identifying this one-form with a symplectic potential for the symplectic form on the space of monodromy data, we highlighted the role of the tau function as the generator of the monodromy symplectomorphism, generalising the results of \cite{BertolaKorotkin2019,ItsProkhorov2018} to the case of genus 1 surfaces.  
In this paper, we tackle the final open problem left by the above construction, namely the explicit integration of the difference between the two one-forms normalising tau functions defined by different pants decomposition, thus obtaining the explicit form of the connection constant between the $\tau\rightarrow i\infty$ and $\tau\rightarrow 0$ asymptotics of the isomonodromic problem. As we show in this paper, the connection constant is closely related to the modular transformation $\tau \to -1/\tau$ of the tau function. Through the aforementioned relation between the tau function and $c=1$ Virasoro conformal blocks, we also obtain the first explicit formula for the $c=1$ Virasoro modular kernel, describing modular transformations of these conformal blocks. As a consequence of the AGT correspondence \cite{Alday:2009aq} and our previous work \cite{Bonelli2019,desiraju2022painleve}, the connection constant and the modular kernel can also be seen to describe the modular transformation of the dual partition function and the Nekrasov instanton partition function respectively \cite{Nekrasov2006}, for the $\mathcal{N}=2^*$ gauge theory. Finally, we give a new explicit relation between the $c=1$ and $c\to \infty$ Virasoro modular kernel, which has a natural interpretation in terms of the symplectic geometry of the underlying character variety.

\textbf{Outline of the paper. } The paper is organised as follows. In Section \ref{sec:setup}, we give a brief overview of isomonodromic deformations on a once-punctured torus, and of relevant previous results on the associated tau function. Section \ref{sec:pants} is devoted to the computation of the asymptotics of the isomonodromic problem in the two dual regimes, $\tau\rightarrow i\infty$ and $\tau\rightarrow 0$, associated to two dual pants decompositions. This allows us to compute the modular connection constant of the tau function in Section \ref{sec:Upsilon}. In Section \ref{sec:SDual}, we interpret our results in terms of the modular properties of the linear system and the tau function. The last two sections use the computation of the modular connection constant to derive new results in two-dimensional Conformal Field Theory: in Section \ref{sec:c1}, we compute the modular kernel for $c=1$ Virasoro conformal blocks, and in Section \ref{sec:cinfty} we relate $c=1$ and $c=\infty$ modular kernels through their relation to the connection constant, while giving all these quantities an interpretation as generating functions on the character variety.

\textbf{Acknowledgements}

This research was partly conducted during the authors' visit to the Okinawa Institute of Science and Technology (OIST) through the Theoretical Sciences Visiting Program (TSVP), for the Thematic Program ``Exact Asymptotics: From Fluid Dynamics to Quantum Geometry''.

FDM and PG thanks the Galileo Galilei Institute for Theoretical Physics for the hospitality, during the program ``BPS Dynamics and Quantum Mathematics'', and the INFN for partial support during the completion of this work.

Part of this project was carried out while H.D. was visiting the Hausdorff
Research Institute for Mathematics (HIM) in 2025,
supported by the Deutsche Forschungsgemeinschaft (DFG, German Research
Foundation) under Germany’s Excellence Strategy EXC-2047/1-390685813. She also acknowledges the generous support of the SMRI Postdoctoral Fellowship, and Marie Skłodowska-Curie Postdoctoral Fellowship \#101203697.

The work of PG was partly supported by the INFN Iniziativa Specifica GAST, and by the MIUR PRIN Grant 2020KR4KN2 ``String Theory as a bridge between Gauge Theories and Quantum Gravity''.  
PG also acknowledges funding from the EU project Caligola (HORIZON-MSCA-2021-SE-01), Project ID: 101086123, and CA21109 - COST Action CaLISTA.

\section{General setup and overview of the results}\label{sec:setup}
In this section we review the relevant setup regarding isomonodromic deformations on the torus (see \cite{DMDG2020,DMDG2022} for more details), and then state our main results. We consider the following isomonodromic system on a once-punctured torus with the singularity at $z=0$:
\begin{align}
  Y(z,\tau)^{-1} \partial_z Y(z,\tau) &=:  L_{z} (z,\tau) =\left(\begin{array}{cc}
        P(\tau) & m\,x(-2Q(\tau),z) \\
        m\,x(2Q(\tau),z) & -P(\tau)
    \end{array}\right),\\
    Y(z,\tau)^{-1} \partial_{\tau} Y(z,\tau) &=: L_{\tau}(z,\tau) =-m\left( \begin{array}{cc}
        0 & y(-2Q,z), \\
        y(2Q,z) & 0
    \end{array} \right), \label{linear_systemCM}
\end{align}
where $x, y$ are the Lam\'e functions defined as 
\begin{align}
    x(\xi, z) = \frac{\theta_{1}(z- \xi) \theta_{1}'(0)}{\theta_{1}(z) \theta_{1}(\xi)}, && y (\xi, z) = \partial_{\xi} x(\xi, z),
\end{align}
and $\theta_1$ is the Jacobi theta function 
\begin{equation}
\theta_1(z):=-i\sum_{n\in\mathbb{Z}}(-1)^ne^{i\pi\tau\left(n+\frac{1}{2} \right)^2}e^{2\pi i\left(n+\frac{1}{2} \right)z}. \label{eq:Theta1def}
\end{equation}
The compatibility condition of the above system yields the Non-Autonomous Elliptic Calogero-Moser (NAECM) model
\begin{align}\label{eq:NAECM}
    2\pi i\frac{\dd P}{\dd\tau}=m^2\wp'(2Q|\tau), && 2\pi i\frac{\dd Q}{\dd\tau}=P.
\end{align}
These equations also arise as the Hamilton equations for the Hamiltonian
\begin{gather}\label{def:Ham}
    H :=\oint_{A}\frac{dz}{2}\tr \left(L_{z}(z,\tau)\right)^2= P^2-m^2\wp(2Q|\tau)-2m^2\eta_1(\tau),
\end{gather}
where 
\begin{align}\label{eta1_def}
    \eta_1(\tau)=-\frac{1}{6}\frac{\theta_1'''(0)}{\theta_1'(0)}.
\end{align}
The generating function for $H$ is called the tau function, and it will be the main object of our study:
\begin{align}\label{def:Tau}
     2\pi i \partial_{\tau} \log \T := H.
\end{align}

The tau function can be described in terms of the functions $P(\tau),\, Q(\tau),\,\tau,\,m$, or equivalently as a function of the monodromy data of the linear system \eqref{linear_systemCM} which we present below. These are part of local coordinates on spaces $\mathcal{A}_{1,1}$ and $\mathcal{M}_{1,1}$ respectively that we define below, and the map between these spaces is known as the (extended) Riemann-Hilbert map. Let us define these coordinates in detail. 

The residue of the Lax matrix in \eqref{linear_systemCM} is given by
\begin{align}\label{def:G}
	\res_{z=0}L_{z}=-m\sigma_1=:G^{-1}m\sigma_3G,\end{align}
where
\begin{align}\label{eq:GG0g}
	G=e^{\frac{1}{2}g\sigma_3}G_0, && G_0:=\left( \begin{array}{cc}
	1 & -1 \\
	1 & 1		
	\end{array} \right).
\end{align}
 The space of Lax matrices $L_z$ in \eqref{linear_systemCM} is defined as 
\begin{equation}\label{eq:A1n}
    \mathcal{A}_{1,1}:=\bigg\{\tau,\,(G, m),\,(P, Q):\tau\in\mathbb{H},\,G\in SL(2),
    \,P,Q\in\mathbb{C}\bigg\}/\sim,
\end{equation}
where $\sim$ is the equivalence relation  $G\rightarrow D G$, where $\,D\in SL(2)$ is diagonal.
The space $\cA_{1,1}$ is parameterized by $m,P,Q,g,\tau$, and $ \dim\mathcal{A}_{1,1}= 5$.
Similarly, to obtain the monodromy variety, consider the solution of the linear system \eqref{linear_systemCM} $Y(z,\tau)$ which has the following monodromies around the $A,B$-cycles, and the puncture $z=0$ respectively:
\begin{align}
    Y(z+1, \tau) = M_A Y(z,\tau), && Y(z+\tau, \tau) = M_B Y(z,\tau) e^{2\pi i Q(\tau) \sigma_3}, && Y(e^{2\pi i } z, \tau) = M_0 Y(z,\tau).
\end{align}
The monodromy matrices above are 
\begin{align}\label{eq:MAM0MB}
	M_A=e^{2\pi ia\sigma_3}, && M_0=C_0 e^{2\pi i m \sigma_3}C_0^{-1}, && M_B= \frac{1}{\sin 2\pi a}\left(
\begin{array}{cc}
 e^{-\frac{i \nu}{2}  }  \sin (\pi  (2 a-m)) & e^{\frac{i \nu }{2}}  \sin (\pi  m) \\
 -e^{-\frac{i \nu}{2}}  \sin (\pi  m) & e^{\frac{i \nu }{2}}  \sin (\pi  (2 a+m)) \\
\end{array}\right),
\end{align}
with the constraint
\begin{gather}
    M_{B}^{-1} M_{A}^{-1} M_{B} M_{A} M_{0} = \mathbb{1}.  \label{eq:n_moncons}
\end{gather}
The space of matrices $M_A,\,M_B,\,M_0$ up to conjugation is known as the character variety. This space carries a Poisson bracket called the Goldman bracket, for which $a,\,\nu$ above are canonical coordinates and $m$ is a Casimir (see Appendix \ref{sec:charvar} for more details). While Poisson, this space is odd-dimensional and so it cannot be symplectic. We will then work with an extension of this space, known as the extended character variety 
(see \cite{BertolaKorotkin2019} for the genus zero case)
 \begin{align}\label{eq:M1n}
    \mathcal{M}_{1,1} = \left\lbrace M_{A}, M_{B}, (C_0, m):\, M_A,M_B,C_0\in SL(2),\, \eqref{eq:MAM0MB}, \eqref{eq:n_moncons}\right\rbrace/\sim ,
\end{align}
where $/\sim$ means that we identify monodromy representations related by an overall conjugation.
The space $\mathcal{M}_{1,1}$ has dimension 4 and has coordinates $m,a,\nu,c$, where $c$ parametrizes the freedom of sending $C_0\mapsto C_0e^{c\sigma_3}$ in \eqref{eq:MAM0MB} without changing $M_0$.

The main technique used to obtain our result in \cite{DMDG2020} was the pants decomposition along the $A$-cycle. Namely, the one-point torus can be decomposed into a trinion which 
is a once-punctured cylinder with the identification $z\sim z+1$. It is conformally equivalent to a three-punctured sphere under the map $z\to e^{2\pi iz}$, and the solution of the corresponding linear system serves as an approximation for that on the torus.
\begin{figure}[H]
\begin{center}
\begin{subfigure}{.45\textwidth}
\centering
\begin{tikzpicture}[scale=1.5]
\draw[thick,decoration={markings, mark=at position 0.25 with {\arrow{>}}}, postaction={decorate}] (0,0) circle [x radius=0.5, y radius =0.2];
\draw(-0.5,0)  to[out=270,in=90] (-1.5,-1.5) to[out=-90,in=-90] (1.5,-1.5);
\draw(0.5,0) to[out=270,in=90] (1.5,-1.5);

\draw(-0.7,-1.4) to[out= -30,in=210] (0.7,-1.4);
\draw(-0.55,-1.469) to[out= 30,in=-210] (0.55,-1.469);

\fill[red!30!white] (-0.2,-2.375) to[out=135,in=225] (-0.2,-1.605)
to (0.2,-1.6) to[out=190,in=170] (0.2,-2.373) --cycle;

\draw[dashed,color=black!60!white](-0.2,-2.375) to[out=135,in=225] (-0.2,-1.605)
to (0.2,-1.6) to[out=190,in=170] (0.2,-2.373) --cycle;

\fill[red](-0.2,-2.375) to[out=0,in=0] (-0.2,-1.605)
to (0.2,-1.6) to[out=10,in=-10] (0.2,-2.373) --cycle;

\draw(-0.2,-2.375) to[out=0,in=0] (-0.2,-1.605)
to (0.2,-1.6) to[out=10,in=-10] (0.2,-2.373) --cycle;
\end{tikzpicture}
\caption{One-puncture torus}
\label{fig:Torus}
\end{subfigure}
\hfill
\begin{subfigure}{.45\textwidth}
\centering
\begin{tikzpicture}[scale=1.5]
\draw[thick,decoration={markings, mark=at position 0.25 with {\arrow{>}}}, postaction={decorate}](0,0) circle[x radius=0.5, y radius =0.2];
\draw[red,thick,decoration={markings, mark=at position 0.25 with {\arrow{>}}}, postaction={decorate}] (-1,-2) circle[x radius=0.5, y radius =0.2];
\draw[red,thick,decoration={markings, mark=at position 0.25 with {\arrow{<}}}, postaction={decorate}] (1,-2) circle[x radius=0.5, y radius =0.2];

\draw(-0.5,0)  to[out=270,in=90] (-1.5,-2);
\draw(0.5,0) to[out=270,in=90] (1.5,-2);
\draw(-0.5,-2) to[out=90,in=180] (0,-1.5) to[out=0, in=90] (0.5,-2);
\node at ($(0,-0.7)$) {\Large $\mathscr{T}$};

\end{tikzpicture}
\caption{Trinion}
\label{fig:Trinion}
\end{subfigure}
\end{center}
\vspace{-0.5cm}
\caption{$A$-pants decomposition}
\label{fig:TorusTrinion}
\end{figure}
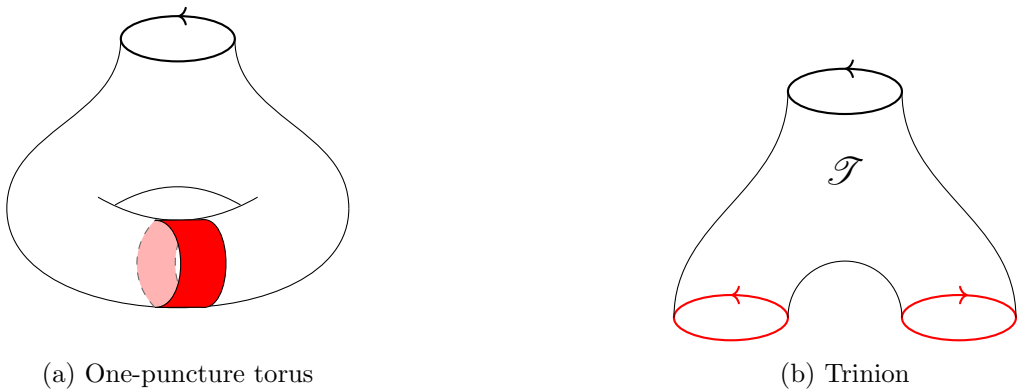

The three-point problem associated to this trinion $\mathscr{T}$ reads 
\begin{align}
    \partial_z Y_{3pt}^{A}(z)=Y_{3pt}^{A}(z) L_{3pt}^{A}(z), && L_{3pt}(z)=-2\pi i L_{-}^{A}-2\pi i\frac{L_0^{A}}{1-e^{2\pi iz}},\label{eq:3ptk}
\end{align}
where the residue matrices are  diagonalizable:
\begin{align}\label{eq:AGmGinv3pt}
    L_{-}^{A}=-(G_{-}^{A})^{-1} a \sigma_3 G_{-}^{A}, && L_0^{A}=(G_0^{A})^{-1} m \sigma_3 G_0^{A},
\end{align} 
\begin{align}\label{def:GA+}
        L_{+}^{A} = - L_{-}^{A}- L_{0}^{A} = (G_{+}^{A})^{-1} a \sigma_3 G_{+}^{A},
\end{align}
and the notation $G_{\bullet}^{A}$ indicates the diagonalization matrices of the three-point problem obtained through a cut along the $A$-cycle of the torus. The local solution on the trinion \(\Phi_{3pt}(z)\) is such that the ratio
\begin{equation*}
\Phi_{3pt}(z)^{-1}\Phi(z)
\end{equation*}
is regular and single-valued around \(z=0\). The solution $\Phi_{3pt}(z)$ therefore approximates the analytic behavior of \(\Phi(z)\) near $z=0$.

In \cite{DMDG2020}, we showed that the tau function \eqref{def:Tau} can be written as a Fredholm determinant that is a function of the modular parameter $\tau$ and the monodromy data. The minor expansion of this Fredholm determinant expresses the tau function as a discrete Fourier transform (more precisely, a Zak transform) of $c=1$ conformal blocks on the torus with one puncture, thereby providing a proof of the results in \cite{Bonelli2019}.
Using the Fredholm determinant representation of the tau function in \cite{DMDG2020}, in \cite{DMDG2022} we obtained the closed form expression for the logarithmic derivative of the tau function with respect to the modular parameter $\tau$ as well as the monodromy data. Let us recall our main result. Since these results follow from studying the pants decomposition along the $A$-cycle, we will denote this tau function by $\T^{A}$.
\begin{theorem}[\cite{DMDG2022}] The {\it total} derivative of the tau function $\T^A$ with respect to the time $\tau$ and the monodromy coordinates $ (a,\,\nu,\, m)$ defined in \eqref{eq:MAM0MB} is given by
\begin{align}\label{this}
    \dd\log\T^{A} = \omega-\omega_{3pt}^{A}, && \dd:= \dd\tau \,\frac{\partial}{\partial 
    \tau} + \dd_{\mathcal{M}},\quad \dd_{\mathcal{M}}:=\dd a\frac{\partial}{\partial a}+\dd\nu\frac{\partial}{\partial \nu}+\dd m\frac{\partial}{\partial m},
\end{align}
where
\begin{align}
    &\omega = 2P\dd_\cM Q+H\frac{\dd\tau}{2\pi i}+\tr\left(m\sigma_3\dd_\cM G\,G^{-1} \right) \mathop{=}^{\eqref{eq:GG0g}} 2 P\dd Q + m \dd_{\mathcal{M}} g + \frac{1}{2\pi i} H \dd\tau, \label{def:omega}\\
    &\omega_{3pt}^{A} = -\tr \left(-{a}\sigma_3 \dd G_{-}^{A} \left(G_{-}^{A}\right)^{-1} +  {a} \sigma_3 \dd G_{+}^{A}\left(G_{+}^{A}\right)^{-1} +  {m} \sigma_3 \dd G_0^{A} \left(G_0^{A}\right)^{-1}\right).\label{def:omega3A}
\end{align}
The Hamiltonian $H$ is defined in \eqref{def:Ham}, the diagonalization matrices $G, G_0^A, G_{+}^{A}, G_{-}^{A}$ are defined in \eqref{def:G}, \eqref{eq:AGmGinv3pt}, \eqref{def:GA+} respectively.
\end{theorem}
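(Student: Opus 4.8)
The plan is to start from the Fredholm determinant representation $\T^{A}=\det(1-K_{A})$ established in \cite{DMDG2020}, in which $K_{A}$ is the trace-class operator on a small circle $\mathcal{C}$ around the puncture whose kernel is assembled from the exact solution $\Phi$ of \eqref{linear_systemCM} on the torus (the ``outer'' data) and from the three-point solution $\Phi_{3pt}$ on the trinion $\mathscr{T}$ of the $A$-pants decomposition (the ``inner'' data). The total derivative then follows from the standard identity
\begin{equation}
\dd\log\T^{A}=-\tr\!\left((1-K_{A})^{-1}\,\dd K_{A}\right),\qquad \dd=\dd\tau\,\partial_{\tau}+\dd_{\mathcal{M}},
\end{equation}
so the task splits into (i) computing the variation $\dd K_{A}$ and (ii) showing that the resolvent trace, a priori a nonlocal object on $\mathcal{C}$, localises onto exactly the boundary contributions that build $\omega$ and $\omega_{3pt}^{A}$.

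The $\dd\tau$-component is immediate: by the definition \eqref{def:Tau}, $2\pi i\,\partial_{\tau}\log\T^{A}=H$, which is precisely the $\dd\tau$-part of $\omega$, while $\omega_{3pt}^{A}$ has no $\dd\tau$-component because the trinion data $a,m,G_{\bullet}^{A}$ are $\tau$-independent. The content of the theorem is therefore the statement in the monodromy directions $(a,\nu,m)$, which simultaneously pins down the $\tau$-independent normalisation of $\T^{A}$ left undetermined by \cite{DMDG2020}. Here one uses that the resolvent $(1-K_{A})^{-1}$ is ``dressed'' by $\Phi$ and $\Phi_{3pt}$: inserting their explicit dependence on $(a,\nu,m)$ and on the diagonalising matrices $G_{\bullet}^{A}$, the trace $\tr((1-K_{A})^{-1}\dd_{\mathcal{M}}K_{A})$ collapses, through residue calculus and period (Riemann bilinear) relations on the one-punctured torus, into a sum of local terms: one supported at the puncture $z=0$ of the torus, and ones at the marked points of $\mathscr{T}$ carrying exponents $\pm a$ and $m$ with diagonalising matrices $G_{\pm}^{A}$, $G_{0}^{A}$. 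The term at $z=0$, together with the accessory-parameter dependence carried by $P$ and $Q$ in $L_{z}$, must reorganise into $2P\,\dd_{\mathcal{M}}Q+\tr\!\left(m\sigma_{3}\,\dd_{\mathcal{M}}G\,G^{-1}\right)=2P\,\dd_{\mathcal{M}}Q+m\,\dd_{\mathcal{M}}g$; the trinion terms must reorganise into $-\omega_{3pt}^{A}$, i.e.\ the combination of $a\sigma_{3}\,\dd G_{\pm}^{A}(G_{\pm}^{A})^{-1}$ and $m\sigma_{3}\,\dd G_{0}^{A}(G_{0}^{A})^{-1}$ traces in \eqref{def:omega3A}, which are exactly the genus-zero symplectic potentials of the three-punctured-sphere building block familiar from \cite{ILP2016,BertolaKorotkin2019}.

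The main obstacle is precisely this localisation and its bookkeeping in step (ii): one must verify that the resolvent trace produces the boundary terms with the correct coefficients and, crucially, that no spurious $\tau$-dependent piece survives along the monodromy directions, so that $\T^{A}$ is at once a genuine Jimbo--Miwa--Ueno tau function in $\tau$ and carries exactly the stated monodromy dependence. This demands careful handling of the multivaluedness of $\Phi$ around the $A$- and $B$-cycles (including the $e^{2\pi i Q\sigma_{3}}$ twist), and the theta-function identities needed to reproduce $\eta_{1}(\tau)$ and to match the Lam\'e functions $x,y$ against the trinion hypergeometric data. A useful final consistency check is closedness of $\omega-\omega_{3pt}^{A}$: $\dd\omega$ and $\dd\omega_{3pt}^{A}$ are the pullbacks of the symplectic forms on $\mathcal{M}_{1,1}$ and on the trinion moduli, respectively, and they agree because the $A$-gluing map is a symplectomorphism — consistent with $\omega-\omega_{3pt}^{A}$ being exact.
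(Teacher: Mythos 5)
First, note that this theorem carries no proof in the present paper: it is imported verbatim from \cite{DMDG2022}, so there is no in-paper argument to compare yours against line by line. Your outline does reconstruct the strategy of that reference correctly: the tau function is realised as a Fredholm determinant built from the exact toral solution and the trinion parametrix of the $A$-pants decomposition, the variation is computed via $\dd\log\det(1-K_A)=-\tr\left((1-K_A)^{-1}\dd K_A\right)$, and the resolvent trace is localised at the puncture and at the marked points of the trinion, producing the Jimbo--Miwa--Ueno-type term $2P\,\dd_\cM Q+H\,\dd\tau/2\pi i$, the $\tr\left(m\sigma_3\,\dd_\cM G\,G^{-1}\right)$ piece, and the genus-zero potentials assembling $\omega_{3pt}^{A}$. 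Your observations that the $\dd\tau$-component reduces to \eqref{def:Tau} and that $\omega_{3pt}^{A}$ has no $\dd\tau$-component (the $G_{\bullet}^{A}$ depend only on $(a,\nu,m)$ and gauge parameters) are both correct.

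That said, as a proof the proposal has a genuine gap exactly where you flag ``the main obstacle'': the entire nontrivial content of the theorem --- that the resolvent trace localises with precisely the coefficients appearing in \eqref{def:omega} and \eqref{def:omega3A}, with no leftover nonlocal or $\tau$-dependent piece in the monodromy directions --- is asserted (``must reorganise into'') rather than derived. The expression of the resolvent through the exact solutions, the handling of the $e^{2\pi i Q\sigma_3}$ twist along the $B$-cycle, and the cancellations that convert the a priori global trace over the circle into residues at $z=0$ and at the trinion punctures \emph{are} the theorem; without them the argument is a roadmap, not a proof. A smaller quibble: $2\pi i\,\partial_\tau\log\T^{A}=H$ is not ``immediate by the definition'' once $\T^{A}$ is fixed as a concrete Fredholm determinant --- that equality is itself a theorem of \cite{DMDG2020} and must be invoked as such, since the point of the normalised object $\T^{A}$ is precisely that both the $\tau$- and monodromy-derivatives of one and the same function are being computed.
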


Closed expressions like the one above are useful, among other things, in obtaining the ratio of the  tau functions normalised at different critical points, in our case $\tau\to 0$ and $\tau\to i\infty$. This ratio is known as the {\it connection constant}  \cite{ILP2016}. 
The goal of this paper is to obtain the connection constant for the tau function on the one-point torus \eqref{eq:NAECM}, and consequently obtain the modular properties of associated conformal blocks.

\subsubsection*{The modular connection constant}

Equation \eqref{this} is conducive to asymptotic analysis at $\tau\to i\infty$. 
In order to study the behaviour of the tau function for $\tau \to 0$ we need to study instead the pants decomposition obtained by cutting along the $B$-cycle. As we will see below, this amounts to analysing the modular transformation $\tau \to -1/\tau$. We call this the  $B$-pants decomposition and denote the tau function normalised in this way by $\T^{B}$. The relevant monodromy coordinates are denoted by $(\at,\nut,m)$.
Note that $\omega$ in \eqref{this} does not depend on the choice of pants decomposition: the normalisation of the tau function is entirely contained in the term $\omega_{3pt}^A$. $\T^B$ then satisfies 
\begin{align}
    \dd\log \T^{B} = \omega - \omega_{3pt}^{B}.
\end{align}
Therefore, the difference of logarithmic derivatives
\begin{align}\label{def:Upsilon}
    \dd\log \Upsilon_S:= 
     \dd\log \T^{A} - \dd\log \T^{B} 
    = - \omega^{A}_{3pt}+ {\omega}^B_{3pt},
\end{align}
is a 1-form on $T^*\mathcal{M}_{1,1}$ only, independent of the modular parameter $\tau$, and defines the modular connection constant $\Upsilon_S$, where $S$ denotes the $S$-transformation of $SL(2,\mathbb{Z})$.

Our first main result is the integration of \eqref{def:Upsilon}, leading to Theorem \ref{eq:thm:ConnCon}, namely the following explicit formula for the moduar connection constant $\Upsilon_S$:

\begin{align}
    \Upsilon_S(a,\at) &=  \frac{G(1+2a)G(1-2a)}{G(1-m+2a)G(1-m-2a)}\frac{G(1-m+2\ta)G(1-m-2\ta)}{G(1+2\ta)G(1-2\ta) }\cdot \widehat{\Upsilon}_S(a,\at), \label{eq:ConnConstIntro}
\end{align}
\begin{equation}
    \widehat{\Upsilon}_S:=e^{i\,\ta\, \tnu}\frac{\widehat{G}\left(a-m/2+\tnu/(4\pi)\right)\widehat{G}\left(a-m/2-\tnu/(4\pi)\right)}{\widehat{G}\left(a+m/2+\tnu/(4\pi)\right)\widehat{G}\left(a+m/2-\tnu/(4\pi)\right)}\frac{(2\pi)^m\widehat{G}(m)}{e^{i\pi m^2/2}}
\end{equation}

    where $G$ is Barnes' $G$-function, and $\widehat{G}(x):=\frac{G(1+x)}{G(1-x)} $.

\subsubsection*{Modular kernel, conformal blocks and generating functions}

The tau function satisfying \eqref{this} was shown to be related to $c=1$ conformal blocks $\mathcal{B}(a,m,\tau)$ through genus one generalization of the Kyiv formula \cite{desiraju2022painleve}:
\begin{equation}\label{eq:KyivIntro}
\begin{split}
	\frac{1}{\eta(\tau)}\T(a,\nu,m,\tau)&\theta_1(Q(\tau)+\rho|\tau)\theta_1(Q(\tau)-\rho|\tau)\\
    &=\sum_{n,k \in \mathbb{Z}}e^{\frac{i\nu n}{2}}e^{2\pi i\tau\left(k+\frac{n}{2}+\frac{1}{2} \right)^2}e^{4\pi i \left(k+\frac{n}{2}+\frac{1}{2} \right)\left(\rho+\frac{1}{2} \right)}\cB\left(a+\frac{n}{2},m,\tau \right),
    \end{split}
\end{equation}
where $\rho$ is an arbitrary auxiliary parameter, and $\eta(\tau)$ is Dedekind's eta function. Conformal blocks, contrary to tau functions, do not have connection constants, but rather connection \textit{kernels}. The kernel for a modular transformation, known as the modular kernel $S(a,\at)$, relates conformal blocks related by an exchange of $A$- and $B$-cycle through
\begin{equation}\label{eq:SkerneldefIntro}
	\cB(a,m,\tau)=(e^{-i\pi}\tau)^{-\Delta(m)}\int_{-\infty+i\e}^{\infty+i\e}\dd\at\,S(a,\at)\cBt(\at,m,\taut),
\end{equation}
where $\Delta(m)$ is the conformal weight of the primary field inserted at the origin. The factor $(e^{-i\pi}\tau)^{-\Delta(m)}$ takes into account the tensorial transformation properties of conformal blocks, although it is often omitted in recent literature on modular kernels\footnote{The general modular transformation given by the matrix \({\begin{pmatrix}c_1&c_2\\c_3&c_4\end{pmatrix}}\) corresponds to the prefactor \({(c_3\tau+c_4)^{-\Delta(m)}}\).
Notice that here we have the action of the full \(SL(2,\mathbb{Z})\) instead of the modular group \(PSL(2,\mathbb{Z})\).
In particular, \(S^2=-\mathbb{1}\) acts non-trivially.
We also need to choose a representative for \(S\), and we fix it as \(S=\begin{pmatrix}0 & 1\\ -1& 0\end{pmatrix}\).
}.

Determining the modular kernel for $c=1$ conformal blocks is still, to our knowledge, an open problem in two-dimensional conformal field theory. Equation \eqref{eq:KyivIntro}, together with the explicit form of the modular connection constant \eqref{eq:ConnConstIntro}, allows us to give the following explicit expression in Theorem \ref{thm:c1ker}:
    \begin{equation}\label{eq:c1modular}
        S(a,\at)=\frac{\sqrt{2}}{4\pi}\frac{\partial\nu}{\partial\at}\,\widehat{\Upsilon}_S(a,\at).
    \end{equation}
In Section \ref{sec:cinfty}, we conclude by computing the modular kernel for $c\rightarrow\infty$ conformal blocks, and relate it to the connection constant. This also gives a new exact relation between the modular kernels for $c=1$ and $c=\infty$ Virasoro conformal blocks.

Central to proving these results is Proposition \ref{prop:GenFn}, where we prove that both the modular connection constant and the $c\rightarrow\infty$ modular kernel are naturally related to generating functions on the character variety. This is the connection constant counterpart of the relation between isomonodromic tau functions and symplectic geometry of flat connections \cite{BertolaKorotkin2019,DMDG2022}.

\section{Pants decompositions and associated linear systems}\label{sec:pants}

The main goal of this section is to study the trinions obtained through the $A$- and $B$-pants decompositions in detail. We then obtain the explicit form of the diagonalization matrices of the linear systems associated to these trinions in terms of the monodromy data, and obtain the expressions for $\omega_{3pt}^A$ and $\omega_{3pt}^B$, that will be used to compute connection constant $\Upsilon_S$ in the next section. 

\subsection{Linear problem of the trinion from the $A$-pants decomposition.}
In this section we aim to obtain the expression for $\omega_{3pt}^A$ starting from the linear system \eqref{eq:3ptk} on the three-punctured sphere derived through pants decomposition along the $A$-cycle. First, let us recall the following:

\begin{proposition} The behaviour of the solution of NAECM model \eqref{eq:NAECM} $Q(\tau)$ as $\tau\to i\infty$ is given by
\begin{equation}\label{eq:Qetanu}
Q\sim a\tau+\frac{\nu}{4\pi}+\frac{i}{2\pi}\log \frac{\Gamma(2a)\Gamma(1-2a-m)}{\Gamma(1-2a)\Gamma(2a-m)}=a\tau+\eta,\quad e^{2\pi i\eta}:=\frac{\Gamma(1-2a)\Gamma(2a-m)}{\Gamma(2a)\Gamma(1-2a-m)} e^{\frac{i\nu}{2}},
\end{equation}
where $a,\,\nu\in\mathbb{C}$ are monodromy coordinates\footnote{We use the coordinates $\nu$ and $\eta$ interchangeably in this paper.} on the character variety, as in \eqref{eq:MAM0MB}.
\end{proposition}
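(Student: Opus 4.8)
The plan is to extract the $\tau \to i\infty$ asymptotics of $Q(\tau)$ directly from the monodromy data of the linear system, using the $A$-pants decomposition already set up in the excerpt. In the regime $\tau \to i\infty$ the torus degenerates and the NAECM solution is well approximated by the three-punctured sphere problem \eqref{eq:3ptk}; concretely, $\operatorname{Im}\tau \to \infty$ means $e^{2\pi i \tau} \to 0$, so the Lamé functions $x(\xi,z)$ and $y(\xi,z)$ entering the Lax pair reduce to their trigonometric ($\theta_1 \to$ sine) limits, and the elliptic Calogero–Moser Hamiltonian \eqref{def:Ham} reduces to the trigonometric one. First I would therefore take the degeneration limit of the compatibility equations \eqref{eq:NAECM}, or equivalently work with the leading-order flat connection, and observe that to leading order $2\pi i \, dQ/d\tau \approx P \approx$ const, which already forces the linear-in-$\tau$ behaviour $Q \sim a\tau + \eta$ with $a$ the coefficient read off from the $A$-monodromy $M_A = e^{2\pi i a \sigma_3}$ — this identification of the slope with $a$ follows from the quasi-periodicity $Y(z+\tau,\tau) = M_B Y(z,\tau) e^{2\pi i Q \sigma_3}$ combined with $Y(z+1,\tau) = e^{2\pi i a\sigma_3} Y(z,\tau)$ and the degeneration of the $B$-cycle into two pairs-of-pants.

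The substantive part is pinning down the \emph{constant} $\eta$. Here I would use the gluing/factorization picture: in the $A$-pants decomposition the torus solution $\Phi(z)$ is glued from the trinion solution $\Phi_{3pt}(z)$ of \eqref{eq:3ptk}, whose monodromies are hypergeometric connection matrices. The matrix $M_B$ in \eqref{eq:MAM0MB} is exactly the product of the relevant connection matrices of the hypergeometric system around $z=0$ and $z=\infty$ on the sphere, and its off-diagonal structure encodes $\nu$. Matching the explicit $M_B$ against the product of two $c=1$/hypergeometric connection matrices — each of which contributes ratios of $\Gamma$-functions with arguments built from $2a$ and $m$ (the exponents $a\sigma_3$ at the glued punctures and $m\sigma_3$ at the actual puncture, cf. \eqref{eq:AGmGinv3pt}–\eqref{def:GA+}) — produces precisely the combination $\frac{\Gamma(2a)\Gamma(1-2a-m)}{\Gamma(1-2a)\Gamma(2a-m)}$. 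Taking $\frac{i}{2\pi}\log$ of the appropriate ratio and adding the $\nu/(4\pi)$ piece coming from the phase in $M_B$ then yields \eqref{eq:Qetanu}. An alternative, perhaps cleaner, route is to feed the degenerate linear system into the known small-$e^{2\pi i\tau}$ expansion of the tau function (the $c=1$ torus conformal block of \eqref{eq:KyivIntro}, or the Fredholm determinant of \cite{DMDG2020}), read off $P,Q$ from the leading Fourier modes, and compare; the conformal-block normalization is where the $\Gamma$-ratios appear canonically.

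The main obstacle I expect is bookkeeping of normalization conventions: the constant $\eta$ is only meaningful once one fixes the normalization of $Y(z,\tau)$ (and hence of $\Phi_{3pt}$) at the puncture, the branch of $\log$, and the representative in the equivalence class $G \to DG$ of \eqref{eq:A1n}; getting the split of $\eta$ into the "geometric" piece $\nu/(4\pi)$ and the "$\Gamma$-function" piece exactly right, with the correct signs and with $m$ versus $-m$, requires care in tracking which of $G_{\pm}^A, G_0^A$ sits at which puncture and how the cut along the $A$-cycle is oriented. I would resolve this by cross-checking the final formula for $e^{2\pi i\eta}$ in a degenerate limit (e.g. $m\to 0$, where $Q$ should become the free value $a\tau + \nu/(4\pi)$, consistent with $\Gamma$-ratios $\to 1$), and against the known genus-zero analogues in \cite{ILP2016, DMDG2022}.
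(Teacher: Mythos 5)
Your outline is correct and follows essentially the route the paper itself relies on: the stated proof is a citation to \cite{Bonelli2019}, but the actual mechanism --- degenerating the Lax matrix to the hypergeometric system \eqref{eq:3ptk}, reading the slope $a$ of $Q$ from $P\to 2\pi i a$ and the $A$-monodromy, and fixing the constant $\eta$ by matching the twisted $B$-periodicity $Y(z+\tau)=M_BY(z)e^{2\pi iQ\sigma_3}$ against the Kummer connection matrix --- is exactly the computation carried out in Proposition \ref{Prop:GAs} (equations \eqref{eq:piinf}--\eqref{eq:MBA}), where the $\Gamma$-ratio in $e^{2\pi i\eta}$ is precisely what is needed for \eqref{eq:MBA} to coincide with the parametrization of $M_B$ in \eqref{eq:MAM0MB}. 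Your $m\to 0$ sanity check is likewise consistent with Appendix \ref{sec:zerom}.
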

\begin{proof}
    See \cite[Appendix D]{Bonelli2019}. 
\end{proof}
The local system \eqref{eq:3ptk} is obtained by studying the behaviour of the Lax matrix at $\tau\to i \infty$. The asymptotics of \(x(u,z|\tau)\) reads
\begin{equation}\label{asympxiinf}
x(u,z|\tau)\sim \frac{2\pi i}{e^{2\pi i u}-1}-\frac{2\pi i}{e^{2\pi i z}-1},
\end{equation}
and so, in the same limit, assuming that the monodromy exponent $a$ has small positive real part:
\begin{equation}\label{def:LA3pt}
L(z)\sim 
2\pi i\begin{pmatrix}
a & -\frac{m }{e^{2\pi iz}-1}\\
-\frac{m e^{2\pi i z}}{e^{2\pi i z}-1} & -a
\end{pmatrix}
=:L_{3pt}^{A}(z).
\end{equation}
This is the Lax matrix of the three-point problem associated to the $A$-pants decomposition, which has singularities at $z\rightarrow\pm i\infty$, and $z=0$. In order to compute $\omega_{3pt}^{A}$ given by \eqref{def:omega3A}, all we need is to obtain the matrices $G_{\pm}^{A}, G_0^{A}$ defined in \eqref{eq:AGmGinv3pt}, \eqref{def:GA+} that diagonalize the residue matrices of the linear problem above. 

\begin{proposition}\label{Prop:GAs}
    The matrices $G_{\pm}^{A}, G_{0}^{A}$ read
\begin{align}
    G_{-}^A =
\begin{pmatrix}
e^{i\delta_{-i\infty}} & \\
0 & e^{-i\delta_{-i\infty}}
\end{pmatrix}
\begin{pmatrix}
1 & 0\\
\frac{m}{2a} & 1
\end{pmatrix}, && G_0^A &=
\begin{pmatrix}
e^{i\delta_0} & 0\\
0 & e^{-i\delta_0}
\end{pmatrix}
\begin{pmatrix}
1 & -1\\
1 & 1
\end{pmatrix},
\end{align}
\begin{align}
G_{+}^A =
\begin{pmatrix}
e^{i\nu/2+i\delta_{+i\infty}} \frac{\Gamma(1-2a)\Gamma(2a-m)}{\Gamma(2a)\Gamma(1-2a-m)} & 0\\
0 & e^{-i\nu/2 + i\delta_{+i\infty}}\frac{\Gamma(2a)\Gamma(1-2a-m)}{\Gamma(1-2a)\Gamma(2a-m)}
\end{pmatrix}
\begin{pmatrix}
1 & \frac{m}{2a}\\
0 & 1
\end{pmatrix},
\end{align}
which are determined up to left multiplication by constant diagonal matrices parameterized by $\delta_{-i\infty}$, $\delta_0$, and $\delta_{+i\infty}$ respectively. The parameters $a, \nu, m$ define the monodromy data as in \eqref{eq:MAM0MB}.

\end{proposition}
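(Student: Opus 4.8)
The plan is to diagonalize the three residue matrices of $L_{3pt}^A(z)$ in \eqref{def:LA3pt} explicitly, and to fix the residual freedom so that the resulting matrices match the monodromy data $(a,\nu,m)$ of Proposition on $Q(\tau)$. The residue at $z=0$ is computed by expanding $e^{2\pi iz}-1\sim 2\pi iz$, giving $\res_{z=0}L_{3pt}^A = -m\sigma_1$, whose eigenvalues are $\pm m$; this is diagonalized by $G_0^A$ with the Hadamard-type matrix $\left(\begin{smallmatrix}1&-1\\1&1\end{smallmatrix}\right)$, since $\left(\begin{smallmatrix}1&-1\\1&1\end{smallmatrix}\right)(-m\sigma_1)\left(\begin{smallmatrix}1&-1\\1&1\end{smallmatrix}\right)^{-1}=m\sigma_3$, consistent with \eqref{eq:AGmGinv3pt}. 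For the residue at $z\to -i\infty$ one sends $e^{2\pi iz}\to 0$, so $L_{3pt}^A\to 2\pi i\left(\begin{smallmatrix}a&-m\\0&-a\end{smallmatrix}\right)$; this upper-triangular matrix with eigenvalues $\pm 2\pi i a$ is brought to $-2\pi i\,a\sigma_3\cdot(-1)=2\pi i\,a\,(\text{diag})$ — more precisely to the normalization $L_-^A=-(G_-^A)^{-1}a\sigma_3 G_-^A$ — by a lower-triangular matrix, and a direct check gives the $\left(\begin{smallmatrix}1&0\\m/(2a)&1\end{smallmatrix}\right)$ factor in $G_-^A$. Similarly the residue $L_+^A=-L_-^A-L_0^A$ is upper-triangular with eigenvalues $\pm 2\pi i a$, diagonalized by a matrix with the $\left(\begin{smallmatrix}1&m/(2a)\\0&1\end{smallmatrix}\right)$ factor.

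The subtle point — and the main obstacle — is fixing the diagonal left-multiplication ambiguity in $G_+^A$. Each $G_\bullet^A$ is only determined up to left multiplication by a diagonal $SL(2)$ matrix; the parameters $\delta_{-i\infty},\delta_0,\delta_{+i\infty}$ encode this, and these drop out of the diagonalized residues, so they are genuinely free at the level of \eqref{eq:AGmGinv3pt}–\eqref{def:GA+}. However, the \emph{relative} normalization between $G_-^A$ and $G_+^A$ is not free: it is fixed by the global structure of the solution $\Phi_{3pt}$ on the trinion, equivalently by the connection matrix of the hypergeometric system relating the behavior at $z\to-i\infty$ and $z\to+i\infty$. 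Concretely, the solution of \eqref{eq:3ptk} is built from ${}_2F_1$ functions, and the monodromy $M_B$ (along the $B$-cycle, which is the cut direction here) together with the asymptotics \eqref{eq:Qetanu} of $Q(\tau)$ forces the diagonal factor in $G_+^A$ to be $\mathrm{diag}\!\left(e^{i\nu/2}\tfrac{\Gamma(1-2a)\Gamma(2a-m)}{\Gamma(2a)\Gamma(1-2a-m)},\,e^{-i\nu/2}\tfrac{\Gamma(2a)\Gamma(1-2a-m)}{\Gamma(1-2a)\Gamma(2a-m)}\right)$ — exactly the combination $e^{2\pi i\eta}$ appearing in \eqref{eq:Qetanu}. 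So the key computation is: extract the hypergeometric connection coefficients for the system \eqref{eq:3ptk} and read off how the solution normalized at $z\to-i\infty$ re-expands at $z\to+i\infty$, matching the result against $e^{2\pi i\eta}$ from the Proposition.

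I would organize the argument as follows. First, write the solution $\Phi_{3pt}^A(z)$ of \eqref{eq:3ptk} explicitly in terms of ${}_2F_1$ (this is the classical solution of the three-punctured sphere / hypergeometric equation, already available from \cite{DMDG2020}). Second, compute its local behavior near each of the three singular points $z\to\mp i\infty$, $z=0$, reading off that the diagonalizing frames are precisely $G_-^A$, $G_+^A$, $G_0^A$ up to diagonal factors. Third, use the standard Gauss connection formula for ${}_2F_1$ to pin down the relative diagonal factor between the $z\to-i\infty$ and $z\to+i\infty$ frames, obtaining the Gamma-function ratio. Fourth, match the $e^{i\nu/2}$-part against the monodromy $M_B$ in \eqref{eq:MAM0MB} and against \eqref{eq:Qetanu}, which is where the coordinate $\nu$ (equivalently $\eta$) enters. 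Finally, collect the remaining free phases into $\delta_{-i\infty},\delta_0,\delta_{+i\infty}$. The routine parts are the linear algebra of diagonalizing $2\times 2$ triangular matrices; the content is entirely in the hypergeometric connection step and its identification with the asymptotics of $Q(\tau)$.
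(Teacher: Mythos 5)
Your proposal is correct and follows essentially the same route as the paper: write the fundamental solution of \eqref{eq:3ptk} explicitly in terms of ${}_2F_1$, read off the triangular diagonalizing frames at the three singular points, and fix the nontrivial diagonal normalization of $G_{+}^{A}$ (the $\Gamma$-ratio and $e^{i\nu/2}$) via the Gauss/Kummer connection formula matched against the $B$-cycle monodromy and the asymptotics \eqref{eq:Qetanu} of $Q(\tau)$, leaving the residual phases as $\delta_{-i\infty},\delta_0,\delta_{+i\infty}$. One small slip to fix when carrying it out: $e^{2\pi iz}\to 0$ corresponds to $z\to+i\infty$ (residue $L_{+}^{A}$, upper-triangular, whence the $\bigl(\begin{smallmatrix}1&m/2a\\0&1\end{smallmatrix}\bigr)$ factor), while $z\to-i\infty$ gives $e^{2\pi iz}\to\infty$ and a lower-triangular limit, whence the $\bigl(\begin{smallmatrix}1&0\\m/2a&1\end{smallmatrix}\bigr)$ factor in $G_{-}^{A}$ — your stated final factors are right but the intermediate limits are swapped.
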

\begin{proof} We now compute each of the matrices above. Note that the fundamental matrix solution of the linear system
\begin{equation}
\frac{dY_{3pt}^{A}(z)}{dz}=Y_{3pt}^{A}(z)L_{3pt}^{A}(z)
\end{equation}
with $L_{3pt}^{A}(z)$ defined in \eqref{def:LA3pt} is given by hypergeometric functions:
\begin{multline}\footnotesize
Y_{3pt}^{A}(z)=
X_0\,(1-e^{-2\pi i z})^m
\begin{pmatrix}
(-e^{-2\pi i z})^{-a} & 0\\
0 & (-e^{-2\pi iz})^{a}
\end{pmatrix}\\
\times\begin{pmatrix}
_{2}F_1(m,1+m-2a,1-2a,e^{-2\pi iz}) & \frac{-m e^{-2\pi i z}}{2a-1}\, {}_2F_1(1+m,1+m-2a,2-2a,e^{-2\pi iz}) \\
 \frac{m}{2a}\, _{2}F_1(1+m,m+2a,1+2a,e^{-2\pi iz}) & {}_2F_1(m,m+2a,2a,e^{-2\pi i z})
\end{pmatrix},\label{eq:ApY0}
\end{multline}
where \(X_0\) is a normalization matrix that we will specify below. We now study the asymptotics of the above equation to obtain the matrices $G_{+}^{A}, G_{-}^{A}, G_0^{A}$ in the following way.

\paragraph{1. Computation of \(G_{-}^{A}\):}

 To begin with, in the limit $z\to -i \infty$, the solution \eqref{eq:ApY0} has the behaviour
\begin{equation}
\begin{split} 
\label{eq:miinf}
Y_{3pt}^{A}(z\to-i\infty)&\simeq \widetilde{Y}^{A(-i\infty)}_{3pt}(z)=
X_0\begin{pmatrix}
(-e^{-2\pi i z})^{-a} & 0\\
0 & (-e^{-2\pi iz})^{a}
\end{pmatrix}
\begin{pmatrix}
1 & 0 \\
 \frac{m}{2a}  & 1
\end{pmatrix} \\
& =: 
C_{-}^{A}\begin{pmatrix}
(-e^{-2\pi i z})^{-a} & 0\\
0 & (-e^{-2\pi iz})^{a}
\end{pmatrix}
G_{-}^A.
\end{split}
\end{equation}
The matrices
\begin{align}
C_{-}^A=
X_0
\begin{pmatrix}
e^{-i\delta_{-i\infty}} & 0\\
0 & e^{i\delta_{-i\infty}}
\end{pmatrix}, &&
G_{-}^A=
\begin{pmatrix}
e^{i\delta_{-i\infty}} & \\
0 & e^{-i\delta_{-i\infty}}
\end{pmatrix}
\begin{pmatrix}
1 & 0\\
\frac{m}{2a} & 1
\end{pmatrix},
\end{align}
are defined up to an ambiguity of left/right multiplication by a diagonal matrix we parameterize by $\delta_{-i\infty}$. Furthermore, the $A$-cycle monodromy, defined as the analytic continuation $z\mapsto z+1$, with $z$ in the fundamental region, is 
\begin{equation}\label{eq:MAA}
Y_{3pt}^{A}(z+1)=M_AY_{3pt}^{A}(z)=
X_0\begin{pmatrix}
e^{2\pi ia} & 0\\
0 & e^{-2\pi ia}
\end{pmatrix}X_0^{-1}Y_{3pt}^{A}(z), \quad M_A=X_0\begin{pmatrix}
e^{2\pi ia} & 0\\
0 & e^{-2\pi ia}
\end{pmatrix}X_0^{-1}.
\end{equation}

\paragraph{2. Computation of \(G_{+}^{A}\):}

To find the $B$-cycle monodromy we need to analytically continue the hypergeometric function from $x\to -0$ to $x\to -\infty$, obtained through Kummer's formula \cite[\href{https://dlmf.nist.gov/15.10.E25}{eq. (15.10.25)}]{NIST:DLMF}
\begin{multline}
{}_2F_1(a,b,c,x)=\frac{\Gamma(c)\Gamma(b-a)}{\Gamma(b)\Gamma(c-a)}(-x)^{-a}\, {}_2F_1(a,a-c+1,a-b+1,z^{-1})+\\+
\frac{\Gamma(c)\Gamma(a-b)}{\Gamma(a)\Gamma(c-b)}(-x)^{-b}\, {}_2F_1(b,b-c+1,b-a+1,x^{-1}).
\end{multline}
The analytic continuation of the fundamental matrix of solutions \eqref{eq:ApY0} then reads
\begin{multline}
Y_{3pt}^{A}(z)=(1-e^{2\pi iz})^mX_0
\begin{pmatrix}
\frac{\Gamma(1-2a)^2}{\Gamma(1-2a-m)\Gamma(1-2a+m)} & \frac{(2a-1)\Gamma(1-2a)\Gamma(-1+2a)}{m\Gamma(-m)\Gamma(m)}\\
\frac{m \Gamma(1-2a)\Gamma(1+2a)}{2a \Gamma(1-m)\Gamma(1+m)} & \frac{(2a-1)\Gamma(-1+2a)\Gamma(1+2a)}{2a \Gamma(2a-m)\Gamma(2a+m)}
\end{pmatrix}
\begin{pmatrix}
(-e^{-2\pi i z})^{-a} & 0\\
0 & (-e^{-2\pi iz})^{a}
\end{pmatrix}
\times\\\times
\begin{pmatrix}
{}_2F_1(m,2a+m,2a,e^{2\pi iz}) & \frac{m}{2a}\, {}_2F_1(1+m,2a+m,1+2a,e^{2\pi iz})\\
\frac{m e^{2\pi iz}}{1-2a}\, {}_2F_1(1+m,1-2a+m,2-2a,e^{2\pi iz}) &  {}_2F_1(m,1-2a+m,1-2a,e^{2\pi iz})
\end{pmatrix},
\end{multline}
with leading asymptotics as $z\rightarrow+i\infty$ given by
\begin{multline}
Y_{3pt}^{A}(z)\sim
X_0
\begin{pmatrix}
\frac{\Gamma(1-2a)^2}{\Gamma(1-2a-m)\Gamma(1-2a+m)} & \frac{(2a-1)\Gamma(1-2a)\Gamma(-1+2a)}{m\Gamma(-m)\Gamma(m)}\\
\frac{m \Gamma(1-2a)\Gamma(1+2a)}{2a \Gamma(1-m)\Gamma(1+m)} & \frac{(2a-1)\Gamma(-1+2a)\Gamma(1+2a)}{2a \Gamma(2a-m)\Gamma(2a+m)}
\end{pmatrix}
\times\\\times
\begin{pmatrix}
(-e^{-2\pi i z})^{-a} & 0\\
0 & (-e^{-2\pi iz})^{a}
\end{pmatrix}
\begin{pmatrix}
1 & \frac{m}{2a} \\
0 & 1
\end{pmatrix}.
\end{multline}
Rewriting the above expression as
\begin{multline}
\label{eq:piinf}
Y_{3pt}^{A}(z)\sim
X_0
\begin{pmatrix}
\frac{\Gamma(1-2a)^2}{\Gamma(1-2a-m)\Gamma(1-2a+m)} & \frac{(2a-1)\Gamma(1-2a)\Gamma(-1+2a)}{m\Gamma(-m)\Gamma(m)}\\
\frac{m \Gamma(1-2a)\Gamma(1+2a)}{2a \Gamma(1-m)\Gamma(1+m)} & \frac{(2a-1)\Gamma(-1+2a)\Gamma(1+2a)}{2a \Gamma(2a-m)\Gamma(2a+m)}
\end{pmatrix}
\begin{pmatrix}
e^{-2\pi i\eta} & 0\\
0 & e^{2\pi i\eta}
\end{pmatrix}
X_0^{-1}
\widetilde{Y}_{3pt}^{A(+i\infty)}(z),
\end{multline}
where
\begin{equation}
\widetilde{Y}_{3pt}^{A(+i\infty)}(z):=X_0
\begin{pmatrix}
(-e^{-2\pi i z})^{-a} & 0\\
0 & (-e^{-2\pi iz})^{a}
\end{pmatrix}
\begin{pmatrix}
e^{2\pi i\eta} & 0\\
0 & e^{-2\pi i\eta}
\end{pmatrix}
\begin{pmatrix}
1 & \frac{m}{2a} \\
0 & 1
\end{pmatrix}.
\end{equation}

By comparing \eqref{eq:miinf} and \eqref{eq:piinf} order by order in $e^{2\pi i\tau a}$ we can observe that, in the regime $\tau\rightarrow i\infty$, \(z=-\tau/2+\phi\), \(\phi\in \mathbb{R}\), where $Q(\tau)$ is approximated by equation \eqref{eq:Qetanu}, the following asymptotic identity holds:

\begin{align}
&\widetilde{Y}_{3pt}^{A(+i\infty)}(z+\tau) e^{-2\pi i Q\sigma_3} \left( \widetilde{Y}_{3pt}^{A(-i\infty)}(z) \right)^{-1} \nonumber\\
&=
X_0
\begin{pmatrix}1 & \frac{m}{2a} (-e^{-2\pi i(z+\tau)})^{-2a}e^{4\pi i\eta} \\0 & 1\end{pmatrix}
\begin{pmatrix}1 & 0 \\ -\frac{m}{2a} (-e^{-2\pi iz})^{2a} & 1\end{pmatrix} X_0^{-1} \nonumber\\
&=
X_0
\begin{pmatrix}1 & \frac{m}{2a} (-e^{-2\pi i \phi})^{-2a}e^{4\pi i\eta} e^{2\pi i\tau a}\\0 & 1\end{pmatrix}
\begin{pmatrix}1 & 0 \\ -\frac{m}{2a} (-e^{-2\pi i\phi})^{2a} e^{2\pi i \tau a} & 1\end{pmatrix} X_0^{-1}\nonumber\\
&=\mathbb{1}+O(e^{2\pi i a\tau}).
\end{align}
Therefore,
\begin{equation}
Y_{3pt}^{A}(z+\tau)\sim M_BY_{3pt}^{A}(z)e^{2\pi iQ\sigma_3 },
\end{equation}
where 
\begin{align}
M_B&=X_0
\begin{pmatrix}
\frac{\Gamma(1-2a)^2}{\Gamma(1-2a-m)\Gamma(1-2a+m)} & \frac{(2a-1)\Gamma(1-2a)\Gamma(-1+2a)}{m\Gamma(-m)\Gamma(m)}\\
\frac{m \Gamma(1-2a)\Gamma(1+2a)}{2a \Gamma(1-m)\Gamma(1+m)} & \frac{(2a-1)\Gamma(-1+2a)\Gamma(1+2a)}{2a \Gamma(2a-m)\Gamma(2a+m)}
\end{pmatrix}
\begin{pmatrix}
e^{-2\pi i\eta} & 0\\
0 & e^{2\pi i\eta}
\end{pmatrix}
X_0^{-1}.\\
&\mathop{=}^{\eqref{eq:Qetanu}}X_0 
\begin{pmatrix}
\frac{\sin \pi(2a-m)}{\sin 2\pi a} & - \frac{\Gamma(1-2a)^2\Gamma(2a-m)\sin \pi m}{\pi \Gamma(1-2a-m)}\\
\frac{\Gamma(2a)^2\Gamma(1-2a-m)\sin \pi m}{\pi \Gamma(2a-m)} & \frac{\sin \pi(2a+m)}{\sin 2\pi a}
\end{pmatrix}
\begin{pmatrix}
e^{-i\nu/2} & 0\\
0 & e^{i\nu/2}
\end{pmatrix}X_0^{-1}.\label{eq:MBA}
\end{align}
In conclusion, the asymptotic behaviour of the solution in terms of \(C\) and \(G\) matrices is written as
\begin{equation}
Y_{3pt}^{A}(z)=C_{+}^A
\begin{pmatrix}
(-e^{-2\pi i z})^{-a} & 0\\
0 & (-e^{-2\pi iz})^{a}
\end{pmatrix} G_{+}^A,
\end{equation}
where
\begin{align}
C_{+}^A &=M_B X_0
\begin{pmatrix}
e^{-i\delta_{+i\infty}} & 0\\
0 & e^{i\delta_{+i\infty}}
\end{pmatrix},\\
G_{+}^A &=
\begin{pmatrix}
e^{i\nu/2+i\delta_{+i\infty}} \frac{\Gamma(1-2a)\Gamma(2a-m)}{\Gamma(2a)\Gamma(1-2a-m)} & 0\\
0 & e^{-i\nu/2 + i\delta_{+i\infty}}\frac{\Gamma(2a)\Gamma(1-2a-m)}{\Gamma(1-2a)\Gamma(2a-m)}
\end{pmatrix}
\begin{pmatrix}
1 & \frac{m}{2a}\\
0 & 1
\end{pmatrix}.
\end{align}
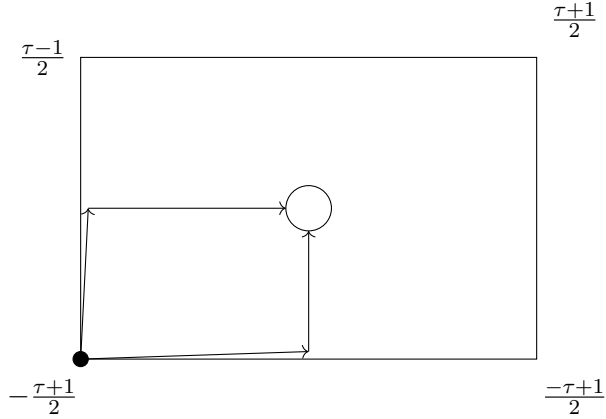
\begin{figure}[H]
\begin{center}
\begin{tikzpicture}[scale=1]
\draw[](-3,-2)--(-3,2)--(3,2)--(3,-2)--cycle;
\draw[radius=0.3](0,0) circle;
\draw[radius=0.1,fill=black](-3,-2) circle;
\draw[->](-3,-2)--(-2.9,0);\draw[->](-2.9,0)--(-0.3,0);
\draw[->](-3,-2)--(0,-1.9);\draw[->](0,-1.9)--(0,-0.3);
\node  at (-3.5,-2.5) {\(-\frac{\tau+1}{2}\)};
\node  at (-3.5,2) {\(\frac{\tau-1}{2}\)};
\node  at (3.5,-2.5) {\(\frac{-\tau+1}{2}\)};
\node  at (3.5,2.5) {\(\frac{\tau+1}{2}\)};
\end{tikzpicture}
\end{center}
\caption{\label{fig:monodromies} Monodromies and analytic continuations}
\end{figure}

\paragraph{3. Computation of \(G_{0}^{A}\):}
As the next step, we compute the asymptotics of $Y_{3pt}^{A}(z)$ around \(z=0\). To do this, we first need to perform the analytic continuation to the vicinity of \(z=-\frac{\tau}{2}\), and then use the formulae \cite[\href{https://dlmf.nist.gov/15.10.E21}{eq. (15.10.21)}]{NIST:DLMF} for the analytic continuation of \({}_2F_1(\cdot, x)\) from \(0+\epsilon\) to \(1-\epsilon\) (see fig.~\ref{fig:monodromies}):
\begin{align}
{}_2F_1(a,b,c,x)&=\frac{\Gamma(c)\Gamma(c-a-b)}{\Gamma(c-a)\Gamma(c-b)}{}_2F_1(a,b,1+a+b-c,1-x)+\nonumber\\ 
&+ \frac{\Gamma(a+b-c)\Gamma(c)}{\Gamma(a)\Gamma(b)}(1-x)^{c-a-b}{}_2F_1(c-a,c-b,c-a-b+1,1-x),\\
{}_2F_1(a,b,c,x)&=(1-x)^{c-a-b}{}_2F_1(c-a,c-b,c,x).
\end{align}
With the above expressions, we obtain that
\begin{multline}
Y_{3pt}^{A}(z)=X_0(e^{-2\pi iz})^{-a}
\begin{pmatrix}
\frac{e^{\pi i a}\Gamma(1-2a)\Gamma(-2m)}{\Gamma(1-2a-m)\Gamma(-m)} & -\frac{e^{\pi i a}\Gamma(1-2a)\Gamma(2m)}{\Gamma(m)\Gamma(1-2a+m)}\\
-\frac{e^{-\pi i a}\Gamma(2a)\Gamma(-2m)}{\Gamma(2a-m)\Gamma(-m)} & -\frac{e^{-\pi ia} \Gamma(2a) \Gamma(2m)}{\Gamma(m)\Gamma(2a+m)}
\end{pmatrix}
\begin{pmatrix}
(1-e^{-2\pi i z})^m & 0\\
0 & (1-e^{-2\pi i z})^{-m}
\end{pmatrix}\times\\\times
\begin{pmatrix}
{}_2F_1(m,1-2a+m,1+2m,1-e^{2\pi iz}) & -e^{-2\pi iz}{}_2F_1(1+m,1-2a+m,1+2m,1-e^{-2\pi i z})\\
{}_2F_1(-m,1-2a-m,1-2m,1-e^{-2\pi i z}) & e^{-2\pi iz}{}_2F_1(1-m,1-2a-m,1-2m,1-e^{-2\pi iz})
\end{pmatrix},
\end{multline}
which, in the limit $z\to 0$, behaves as
\begin{equation}
Y_{3pt}^{A}(z)\sim X_0 
\begin{pmatrix}
\frac{e^{\pi i a}\Gamma(1-2a)\Gamma(-2m)}{\Gamma(1-2a-m)\Gamma(-m)} & -\frac{e^{\pi i a}\Gamma(1-2a)\Gamma(2m)}{\Gamma(m)\Gamma(1-2a+m)}\\
-\frac{e^{-\pi i a}\Gamma(2a)\Gamma(-2m)}{\Gamma(2a-m)\Gamma(-m)} & -\frac{e^{-\pi ia} \Gamma(2a) \Gamma(2m)}{\Gamma(m)\Gamma(2a+m)}
\end{pmatrix}
\begin{pmatrix}
(2\pi i z)^m & 0\\
0 & (2\pi iz)^{-m}
\end{pmatrix}
\begin{pmatrix}
1 & -1\\
1 & 1
\end{pmatrix}.
\end{equation}
The branch of \((2\pi i z)^m\) is chosen by noting that \((1-e^{-2\pi iz})^m\) is initially defined for \(z=-\frac{\tau}{2}\), and so, \(\arg (i z)=0\) for \(z=-i0\). The asymptotics at $z\sim 0$ is written in terms of \(C\) and \(G\) matrices as
\begin{equation}
\label{eq:Y3pt0A}
Y_{3pt}^{A}(z)\simeq C_0^A 
\begin{pmatrix}
(2\pi i z)^m & 0 \\
0 & (2\pi i z)^{-m}
\end{pmatrix}
G_0^A,
\end{equation}
where
\begin{align}
C_0^A &=
X_0\begin{pmatrix}
\frac{e^{\pi i a}\Gamma(1-2a)\Gamma(-2m)}{\Gamma(1-2a-m)\Gamma(-m)} & -\frac{e^{\pi i a}\Gamma(1-2a)\Gamma(2m)}{\Gamma(m)\Gamma(1-2a+m)}\\
-\frac{e^{-\pi i a}\Gamma(2a)\Gamma(-2m)}{\Gamma(2a-m)\Gamma(-m)} & -\frac{e^{-\pi ia} \Gamma(2a) \Gamma(2m)}{\Gamma(m)\Gamma(2a+m)}
\end{pmatrix}
\begin{pmatrix}
e^{-i\delta_0} & 0\\
0 & e^{i\delta_0}
\end{pmatrix},\label{eq:C0A}\\
G_0^A &=
\begin{pmatrix}
e^{i\delta_0} & 0\\
0 & e^{-i\delta_0}
\end{pmatrix}
\begin{pmatrix}
1 & -1\\
1 & 1
\end{pmatrix}.
\end{align}
\end{proof}
The above information gives us all we need to compute $\omega_{3pt}^{A}$. Instead of computing $\omega_{3pt}^{A}$ at this point, we will soon see that it will be easier to compute $\omega_{3pt}^{A}- \omega_{3pt}^{B}$ directly.

\subsection{Linear problem of the trinion from $B$-pants decomposition.}
The asymptotic analysis for $\tau\to 0$ can be inferred by repeating the above computations, constructing the local parametrix by a cut along the $B$-cycle of the torus. As $\tau$ is always assumed to lie on the upper half-plane, this limit will be approached along a positive imaginary direction, and sometimes denoted by $\tau\rightarrow i0$. Consider the linear problem associated to the trinion: 
\begin{align}
    \partial_z Y_{3pt}^{B}(z)=Y_{3pt}^{B}(z) L_{3pt}^{B}(z), && L_{3pt}^{B}(z)=-2\pi i L_{-}^{B}-2\pi i\frac{L_0^{B}}{1-e^{2\pi iz}},\label{eq:3ptkB}
\end{align}
where, in analogy with \eqref{eq:AGmGinv3pt}, the residue matrices are diagonalizable
\begin{align}\label{eq:BGmGinv3pt}
    L_{-}^{B}=-(G_{-}^{B})^{-1}  \widetilde{ a}\sigma_3\, G_{-}^{B}, && L_0^{B}=(G_0^{B})^{-1} m\sigma_3 \,G_0^{B},
\end{align} 
\begin{align}
        L_{+}^{B} = - L_{-}^{B}- L_{0}^{B} = (G_{+}^{B})^{-1} \widetilde{a} \sigma_3 \, G_{+}^{B}.
\end{align}
We will refer to $(\at,\nut)$ as the dual monodromy coordinates, and the precise relation between the dual monodromy data we call $\left(\widetilde{a}, \widetilde{\nu} \right)$ and the coordinates $(a, \nu)$ will be derived in the next section. The $B$-cycle analogue of the one-form \eqref{def:omega3A} reads
\begin{align}
    \omega_{3pt}^{B} &= -\tr (-{\ta}\sigma_3) \dd G_{-}^{B} \left(G_{-}^{B}\right)^{-1} - \tr {\ta}\sigma_3 \dd G_{+}^{B}\left(G_{+}^{B}\right)^{-1}- \tr {m}\sigma_3 \dd G_0^{B} \left(G_0^{B}\right)^{-1}.\label{def:omega3B}
\end{align}
Using the above relation, we find the following result.
\begin{proposition}
    The asymptotics of solution $Q(\tau)$ in \eqref{eq:NAECM} for $\tau\rightarrow +i0$ reads
\begin{equation}\label{0asymp:Q}
Q(\tau\rightarrow0)\sim \ta-\tau \left( \frac{\tnu}{4\pi}+ \frac{i}{2\pi} \log \frac{\Gamma(2\ta)\Gamma(1-2\ta-m)}{\Gamma(1-2\ta)\Gamma(2\ta-m)} \right)=\ta-\teta\,\tau,
\end{equation}
\begin{align}\label{def:teta}
    e^{2\pi i\teta}:=\frac{\Gamma(1-2\ta)\Gamma(2\ta-m)}{\Gamma(2\ta)\Gamma(1-2\ta-m)} e^{\frac{i\tnu}{2}}.
\end{align}
\end{proposition}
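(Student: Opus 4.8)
The plan is to deduce the $\tau\to+i0$ asymptotics \eqref{0asymp:Q} from the already-established $\tau\to i\infty$ behaviour \eqref{eq:Qetanu} by using the modular covariance of the NAECM system. The point is that the $B$-pants decomposition is the image of the $A$-pants decomposition under the modular $S$-transformation $\tau\mapsto\ttau:=-1/\tau$, which acts on the universal cover by $z\mapsto\tilde z:=-z/\tau$; this exchanges the $A$- and $B$-cycles of the torus while keeping the puncture at the origin, so that the dual three-point problem \eqref{eq:3ptkB} is literally the $A$-type three-point problem of Proposition \ref{Prop:GAs} read in the transformed frame.

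First I would make the $S$-transformation explicit at the level of the linear system \eqref{linear_systemCM}. Using the modular transformation law of $\theta_1$ one checks that the Lamé functions $x,y$ are covariant up to a quadratic-exponential (Gaussian) prefactor, so that $Y(z,\tau)$ is mapped to a solution $\widetilde{Y}(\tilde z,\ttau)$ of a linear system of exactly the same NAECM form, now with modular parameter $\ttau$, Calogero--Moser position $\widetilde{Q}=-Q/\tau$ (the marked point on $\mathbb{C}/(\mathbb{Z}+\tau\mathbb{Z})$ rescaling like $z$) and a correspondingly redefined momentum $\widetilde{P}$; in particular the quasi-periodicity twist $e^{2\pi i Q\sigma_3}$ in the $B$-monodromy of \eqref{linear_systemCM} becomes the plain $A$-monodromy of the transformed system. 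Under this map the monodromy exponent along the new $A$-cycle is, by definition, the dual exponent $\ta$, with conjugate coordinate $\tnu$; the explicit dictionary $(\ta,\tnu)\leftrightarrow(a,\nu)$ is postponed to the next section and is not needed here.

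Since $\tau\to+i0$ corresponds to $\ttau\to+i\infty$, I would then apply \eqref{eq:Qetanu} \emph{in the tilde frame} (valid for $\mathrm{Re}\,\ta$ a small positive number):
\begin{equation*}
\widetilde{Q}(\ttau)\sim\ta\,\ttau+\teta,\qquad e^{2\pi i\teta}=\frac{\Gamma(1-2\ta)\Gamma(2\ta-m)}{\Gamma(2\ta)\Gamma(1-2\ta-m)}\,e^{i\tnu/2},
\end{equation*}
which is already \eqref{def:teta}. Substituting back $\widetilde{Q}=-Q/\tau$ and $\ttau=-1/\tau$, i.e.\ $Q(\tau)=-\tau\,\widetilde{Q}(\ttau)$, gives $Q(\tau)\sim-\tau(\ta\,\ttau+\teta)=\ta-\teta\,\tau$, which is exactly \eqref{0asymp:Q}. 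The particular choice $\tilde z=-z/\tau$ (rather than $+z/\tau$) is the one dictated by the $SL(2,\mathbb{Z})$ action $z\mapsto z/(\gamma\tau+\delta)$ for the representative $S$ fixed in the introduction, and it is what makes all signs consistent; any remaining ambiguity is the harmless Weyl/gauge symmetry $Q\mapsto-Q$ of \eqref{linear_systemCM}.

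The step I expect to be the main obstacle is the first one: keeping careful track of the quadratic-exponential prefactors produced by the theta modular transformation together with the $e^{2\pi i Q\sigma_3}$ quasi-periodicity twist, and checking that they reassemble into a consistent redefinition $(P,Q)\mapsto(\widetilde{P},\widetilde{Q})$ preserving the NAECM form rather than generating an anomalous additive term in the Lax matrix. An equivalent route, which avoids invoking modular covariance altogether, is to repeat the parametrix construction of Proposition \ref{Prop:GAs} verbatim but along the $B$-cycle: expanding $x(\pm2Q,z|\tau)$ as $\tau\to0$ via the modular transform of $\theta_1$ produces the dual Lax matrix $L_{3pt}^{B}$ of \eqref{eq:3ptkB}, and matching its $B$-cycle monodromy --- exactly as was done for $M_B$ in \eqref{eq:MBA} --- yields \eqref{0asymp:Q}--\eqref{def:teta} directly.
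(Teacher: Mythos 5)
Your proposal is correct and follows essentially the same route as the paper: both use the modular symmetry $\ttau=-1/\tau$, $\Qt=-Q/\tau$ of the NAECM equation, apply the known $\ttau\to i\infty$ asymptotics \eqref{eq:Qetanu} in the tilde frame, and substitute back to obtain $Q\sim\ta-\teta\,\tau$. The extra care you devote to verifying modular covariance at the level of the linear system is handled in the paper by citing \cite{manin1996sixth} and deferring the details to Proposition \ref{prop:ModNAECM} in Section \ref{sec:SDual}.
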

\begin{proof}
Let $\taut:=-1/\tau$. Together with $\Qt(\taut):=-Q(\tau)/\tau$ this is a symmetry of the isomonodromic equation \eqref{eq:NAECM} \cite{manin1996sixth}\footnote{See Section \ref{sec:SDual} for more details.}. The asymptotics as $\taut\rightarrow i\infty$ will then have the same form as \eqref{eq:Qetanu}, with different parameters:
 \begin{align*}
     \widetilde{Q}(\ttau) \sim\ta\, \ttau + \teta = -\left(\ta -\teta \tau\right)/\tau.
 \end{align*}
 Then noting that $\widetilde{\tau} \to +i\infty$ is equivalent to $\tau\to +i0$, and using the relation $\Qt=-Q/\tau$, we obtain the result \eqref{0asymp:Q}. 
\end{proof}

For now \eqref{def:teta} is simply a definition of $\nut$ in terms of the asymptotics of $\Qt$, which will be justified in the following.

As in the previous subsection, we use the behaviour \eqref{0asymp:Q} to study the behaviour of the Lax matrix $L_z$ in \eqref{linear_systemCM}, for which, we begin by computing asymptotics of \(x(u,z|\tau)\) around \(\tau\to +i0\). We start with the following identities of theta function for $\ttau = -1/\tau$  \cite[20.7]{NIST:DLMF}
\begin{align*}
   (- i \tau)^{1/2} \theta_1(z\vert \tau) = -i e^{i\pi z^2 \ttau} \theta_1(z\ttau\vert \ttau), &&  (- i \tau)^{1/2} \theta_1'(0\vert \tau) = - i \ttau \theta_1'(0\vert \ttau),
\end{align*}
and consequently, 
\begin{equation}
x(u,z|\tau)= \ttau e^{-2\pi izu \ttau} x(u\ttau, z\ttau|\ttau).
\end{equation}
Using \eqref{asympxiinf}, we observe that, in the limit $\ttau\to +i\infty$, the RHS of the above expression behaves as
\begin{align*}
 \ttau e^{-2\pi izu \ttau} x(u\ttau, z\ttau|\ttau)   \sim 2\pi i \ttau e^{-2\pi i zu \ttau}\left( \frac1{e^{2\pi iu \ttau}-1}-\frac1{e^{2\pi iz \ttau}-1} \right),
\end{align*}
which implies the following behaviour of $x(u,z\vert \tau)$ for $\tau\to +i0$:
\begin{align}
    x(u,z\vert \tau) \sim \frac{2\pi i}{\tau} e^{2\pi i zu/\tau}\left( \frac1{e^{2\pi iu/\tau}-1}-\frac1{e^{2\pi iz/\tau}-1} \right).
\end{align}
Assuming that \(\Re \ta>0\), we obtain the following behaviour for the Lax matrix $L_z$ \eqref{linear_systemCM} in the limit \(\tau\to +i0\):
\begin{equation}
L_z(z)\sim 
\frac{2\pi i}{\tau}\begin{pmatrix}
-{\teta}\,\tau & \frac{-m e^{-4\pi i z(\ta-{\teta}\tau)/\tau}e^{2\pi i z/\tau}}{e^{2\pi i z/\tau}-1}  \\
\frac{-m e^{4\pi i z({\ta}-{\teta}\tau)/\tau}}{e^{2\pi i z/\tau}-1} & {\teta}\,\tau
\end{pmatrix}=:{L}_{3pt}^{B}(z).
\end{equation}
The above matrix describes the linear system on a 3-point sphere obtained by cutting the torus along the $B$-cycle in \eqref{eq:3ptkB}. We now compute its diagonalization matrices \eqref{eq:BGmGinv3pt} that are needed to compute the one-form $\omega_{3pt}^{B}$ \eqref{def:omega3B}.
\begin{proposition}\label{Prop:GBs}
    The matrices $G_{\pm}^{B}, G_{0}^{B}$ are given by
\begin{align}
    G_{-}^B =
\begin{pmatrix}
e^{i\widetilde{\delta}_{-\infty}} & 0\\
0 & e^{-i\widetilde{\delta}_{-\infty}}
\end{pmatrix}
\begin{pmatrix}
1 & 0\\
\frac{m}{2\widetilde{a}} & 1
\end{pmatrix}, && G_0^B &=
\begin{pmatrix}
e^{i\widetilde{\delta}_0} & 0\\
0 & e^{-i\widetilde{\delta}_0}
\end{pmatrix}
\begin{pmatrix}
1 & -1\\
1 & 1
\end{pmatrix},
\end{align}
\begin{align}
G_{+}^B =
\begin{pmatrix}
e^{i\tnu/2+i\widetilde{\delta}_{+\infty}} \frac{\Gamma(1-2\ta)\Gamma(2\ta-m)}{\Gamma(2\ta)\Gamma(1-2\ta-m)} & 0\\
0 & e^{-i\tnu/2-i\widetilde{\delta}_{+\infty}} \frac{\Gamma(2\ta)\Gamma(1-2\ta-m)}{\Gamma(1-2\ta)\Gamma(2\ta-m)}
\end{pmatrix}
\begin{pmatrix}
1 & \frac{m}{2\ta}\\
0 & 1
\end{pmatrix},
\end{align}
which are all determined up to left multiplication by constant diagonal matrices parameterized by $\widetilde{\delta}_{-\infty}$, $\widetilde{\delta}_0$, and $\widetilde{\delta}_{+\infty}$ respectively. 

\end{proposition}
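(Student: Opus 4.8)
The plan is to mirror the proof of Proposition \ref{Prop:GAs} almost verbatim, exploiting the fact that $L_{3pt}^{B}(z)$ has exactly the same structure as $L_{3pt}^{A}(z)$ once one performs the change of variables $\ttau=-1/\tau$, $\Qt=-Q/\tau$ that maps the $\tau\to i0$ regime to the $\ttau\to i\infty$ regime. Concretely, I would first write down the fundamental matrix solution $Y_{3pt}^{B}(z)$ of \eqref{eq:3ptkB} in terms of the same ${}_2F_1$ hypergeometric functions as in \eqref{eq:ApY0}, with $a$ replaced by $\ta$ and the ``time'' variable replaced by $\ttau$, up to an overall normalization matrix $\widetilde{X}_0$. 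The key point is that the residue of $L_{3pt}^{B}$ at $z=0$ is still $(G_0^B)^{-1}m\sigma_3 G_0^B$ with $m$ unchanged (it is a Casimir), while the residues at $z\to\pm i\infty$ carry $\mp\ta$, so the diagonalization problem is formally identical.

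Then I would carry out the three asymptotic computations in the same order as in Proposition \ref{Prop:GAs}. \textbf{Step 1:} expanding $Y_{3pt}^{B}(z)$ as $z\to -i\infty$ yields the lower-triangular unipotent factor $\begin{pmatrix}1&0\\ m/(2\ta)&1\end{pmatrix}$ dressed by an arbitrary diagonal matrix $\mathrm{diag}(e^{i\widetilde\delta_{-\infty}},e^{-i\widetilde\delta_{-\infty}})$, giving $G_{-}^B$. \textbf{Step 2:} analytically continuing the hypergeometric functions from $x\to -0$ to $x\to -\infty$ via Kummer's connection formula, exactly as in the $A$-case, produces the upper-triangular unipotent factor $\begin{pmatrix}1&m/(2\ta)\\0&1\end{pmatrix}$ dressed by the diagonal matrix with entries $e^{\pm i\tnu/2\pm i\widetilde\delta_{+\infty}}\left(\frac{\Gamma(1-2\ta)\Gamma(2\ta-m)}{\Gamma(2\ta)\Gamma(1-2\ta-m)}\right)^{\pm1}$; here $\tnu$ enters precisely through the definition \eqref{def:teta} of $\teta$, which is the $B$-analogue of \eqref{eq:Qetanu}, and the $B$-cycle monodromy computation replicates \eqref{eq:MBA}. \textbf{Step 3:} continuing the hypergeometric functions from $0+\epsilon$ to $1-\epsilon$ via \cite[eq. (15.10.21)]{NIST:DLMF} and taking $z\to 0$ gives $G_0^B=\mathrm{diag}(e^{i\widetilde\delta_0},e^{-i\widetilde\delta_0})\begin{pmatrix}1&-1\\1&1\end{pmatrix}$, with the branch of $(2\pi i z)^m$ fixed as before.

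The reason the result holds with literally the same formulas (modulo tildes) is that the diagonalization matrices $G_{\bullet}^{B}$ depend only on the local data of the residues — the eigenvalue $\ta$ at the two punctures at infinity, the eigenvalue $m$ at the origin, and the relative position encoded by $\teta$ (equivalently $\tnu$) — and all of this is captured by the $\ttau\to i\infty$ reduction already analyzed. The main obstacle, and the only place requiring genuine care rather than transcription, is bookkeeping the branch choices and the diagonal conjugation ambiguities: one must verify that the $e^{-2\pi i z/\tau}$-type variable that appears after the modular rescaling sits on the correct sheet so that $\arg(iz)=0$ at $z=-i0$ survives the change of variables, and that the residual diagonal freedoms $\widetilde\delta_{-\infty},\widetilde\delta_0,\widetilde\delta_{+\infty}$ are genuinely unconstrained (they will later be fixed, or cancel, when computing $\omega_{3pt}^A-\omega_{3pt}^B$). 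Once these sign and branch conventions are pinned down consistently with the $A$-case, the proof is complete.
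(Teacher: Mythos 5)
Your proposal is correct and follows essentially the same route as the paper: write the explicit hypergeometric fundamental solution of the $B$-trinion system, then read off $G_{-}^{B}$, $G_{+}^{B}$, $G_{0}^{B}$ from the local expansions at $z\to-i\infty$, $z\to+i\infty$ (via Kummer's connection formula) and $z\to 0$, exactly mirroring Proposition \ref{Prop:GAs}. The only transcription details to watch, which your branch-bookkeeping caveat essentially covers, are that $L_{3pt}^{B}$ is a \emph{twisted} hypergeometric system, so $Y_{3pt}^{B}$ carries an extra right diagonal factor $\mathrm{diag}\bigl(e^{2\pi iz(\ta-\teta\tau)/\tau},e^{-2\pi iz(\ta-\teta\tau)/\tau}\bigr)$, and that the roles of the cycles are exchanged: in the $B$-decomposition it is $M_B$ that is diagonal while the $A$-cycle monodromy is the Gamma-function connection matrix analogous to \eqref{eq:MBA}.
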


\begin{proof}
We compute each of the matrices above. 

\paragraph{1. Computation of \(G_{-}^{B}\):}

The fundamental solution of the equation \eqref{eq:3ptkB} reads as
\begin{multline}\footnotesize
Y_{3pt}^{B}=
X_{\infty}(1-e^{2\pi i z/\tau})^m
\begin{pmatrix}
(-e^{2\pi iz/\tau})^{-{\ta}} & 0\\
0 & (-e^{2\pi i z/\tau})^{{\ta}}
\end{pmatrix}\times\\
\\\times
\begin{pmatrix}
{}_2F_1(m,1+m-2\ta,1-2\ta,e^{2\pi iz/\tau}) & \frac{-m e^{2\pi iz/\tau}}{2\ta-1}{}_2F_1(1+m,1+m-2\ta,2-2\ta,e^{2\pi iz/\tau})\\
\frac{m}{2\ta} {}_2F_1(1+m,m+2\ta,1+2\ta,e^{2\pi iz/\tau}) & {}_2F_1(m,m+2\ta,2\ta,e^{2\pi iz/\tau})
\end{pmatrix}\times\\\times
\begin{pmatrix}
e^{2\pi iz(\ta-\teta\tau)/\tau} &0 \\
0 & e^{-2\pi iz(\ta-\teta)/\tau}
\end{pmatrix}\\=:
Y_{\infty}^{B}(z)\begin{pmatrix}
e^{2\pi iz(\ta-\teta\tau)/\tau} &0 \\
0 & e^{-2\pi iz(\ta-\teta)/\tau}
\end{pmatrix}.\label{eq:fsY3ptB}
\end{multline}
Sometimes we prefer to work with solutions for the 3-point problem that do not have twists, in this case we can choose \(Y_{\infty}^{B}(z)\).
This solution contains the matrix \(X_{\infty}\) which should be chosen in such a way that its monodromies are consistent with monodromies of \(Y_{3pt}^{A}(z)\) in \eqref{eq:ApY0}.
In order to obtain $G_{-}^{B}$, we begin by computing the $B$-cycle monodromy:
\begin{equation}
Y_{3pt}^{B}(z+\tau)=M_BY_{3pt}^{B}(z)e^{2\pi iQ\sigma_3},
\end{equation}
and so,
\begin{equation}\label{eq:MBB}
M_B=X_{\infty}
\begin{pmatrix}
e^{-2\pi i \ta} & 0\\
0 & e^{2\pi i \ta}
\end{pmatrix}
X_{\infty}^{-1}.
\end{equation}
In the limit $z\to -i \infty$, the solution \eqref{eq:fsY3ptB} is given by
\begin{equation}
{Y}_{3pt}^{B}(z)=C_{-}^{B}
\begin{pmatrix}
(-e^{2\pi iz/\tau})^{-\ta} & 0\\
0 & (-e^{2\pi i z/\tau})^{\ta}
\end{pmatrix}
G_{-}^{B}\begin{pmatrix}
e^{2\pi iz(\ta-\teta\tau)/\tau} &0 \\
0 & e^{-2\pi iz(\ta-\teta)/\tau}
\end{pmatrix},
\end{equation}
where
\begin{align}
C_{-}^{B}=X_{\infty} 
\begin{pmatrix}
e^{-i\widetilde{\delta}_{-i\infty}} & 0\\
0 & e^{+i\widetilde{\delta}_{-i\infty}}
\end{pmatrix},&&
G_{-}^{B}=
\begin{pmatrix}
e^{i\widetilde{\delta}_{-i\infty}} & 0\\
0 & e^{-i\widetilde{\delta}_{-i\infty}}
\end{pmatrix}
\begin{pmatrix}
1 & 0\\
\frac{m}{2\widetilde{a}} & 1
\end{pmatrix},
\end{align}
for an arbitrary parameter $\widetilde{\delta}_{-i\infty}$.

\paragraph{2. Computation of \(G_{+}^{B}\):}

In order to compute $G_{+}^{B}$, we need to study the behaviour of \({Y}_{3pt}^{B}\) near $+i\infty$. This is done by analytic continuation of hypergeometric functions which give us the following expression for $Y_{\infty}^{B}$ in \eqref{eq:fsY3ptB}:
\begin{multline}
{Y}_{\infty}^{B}(z)=X_{\infty} (1-e^{-2\pi iz/\tau})^m
\begin{pmatrix}
\frac{\Gamma(1-2\ta)^2}{\Gamma(1-2\ta-m)\Gamma(1-2\ta+m)} & \frac{(2\ta-1)\Gamma(1-2\ta)\Gamma(2\ta-1)}{m\Gamma(-m)\Gamma(m)}\\
\frac{m \Gamma(1-2\ta)\Gamma(1+2\ta)}{2\ta\Gamma(1-m)\Gamma(1+m)} & \frac{(2\ta-1)\Gamma(2\ta-1)\Gamma(1+2\ta)}{2\ta \Gamma(2\ta-m)\Gamma(2\ta+m)}
\end{pmatrix}
\begin{pmatrix}
(-e^{2\pi i z/\tau})^{-\ta} & 0\\
0 & (-e^{2\pi i z/\tau})^{\ta}
\end{pmatrix}
\\\times
\begin{pmatrix}
{}_2F_1(m,2\ta+m,2\ta,e^{-2\pi iz/\tau}) & \frac{m}{2 \ta} {}_2F_1(1+m,2\ta+m,1+2\ta,e^{-2\pi iz/\tau})\\
\frac{m e^{-2\pi iz/\tau}}{1-2\ta} {}_2F_1(1+m,1-2\ta+m,2-2\ta) & {}_2F_1(m,1-2\ta+m,1-2\ta,e^{-2\pi iz/\tau})
\end{pmatrix}.\label{eq:fsY3ptB2}
\end{multline}
We now compute $A$-cycle monodromy:
\begin{equation}
Y_{\infty}^{B}(z+1)=M_AY_{\infty}^{B}(z).
\end{equation}
Plugging in \eqref{eq:fsY3ptB2}, up to some infinitely small terms which appear in this approximate computation, we see that
\begin{equation}\label{eq:MAB}
M_A=X_{\infty}
\begin{pmatrix}
\frac{\Gamma(1-2\ta)^2}{\Gamma(1-2\ta-m)\Gamma(1-2\ta+m)} & \frac{(2\ta-1)\Gamma(1-2\ta)\Gamma(2\ta-1)}{m\,\Gamma(-m)\Gamma(m)}\\
\frac{m\, \Gamma(1-2\ta)\Gamma(1+2\ta)}{2\ta\,\Gamma(1-m)\Gamma(1+m)} & \frac{(2\ta-1)\Gamma(2\ta-1)\Gamma(1+2\ta)}{2\ta\, \Gamma(2\ta-m)\Gamma(2\ta+m)}
\end{pmatrix}
\begin{pmatrix}
e^{-2\pi i\teta} & 0\\
0 & e^{2\pi i \teta}
\end{pmatrix}X_{\infty}^{-1}.
\end{equation}
Using the definition of \(\teta\) \eqref{def:teta}, the above expression can be rewritten as
\begin{equation}
M_A=
X_{\infty}\begin{pmatrix}
\frac{\sin\pi(2\ta-m)}{\sin 2\pi \ta} & -\frac{\Gamma(1-2\ta)^2\Gamma(2\ta-m)\sin \pi m}{\pi \Gamma(1-2\ta-m)} \\
\frac{\Gamma(2\ta)^2\Gamma(1-2\ta-m)\sin \pi m}{\pi \Gamma(2\ta-m)} & \frac{\sin\pi(2\ta+m)}{\sin 2\pi \ta}
\end{pmatrix}
\begin{pmatrix}
e^{-i\tnu/2} & 0\\
0 & e^{i\tnu/2}
\end{pmatrix}
X_{\infty}^{-1}.
\end{equation}
As the final step, we note that the asymptotic behaviour of $Y_{\infty}^{B}$ in \eqref{eq:fsY3ptB2} for $z\to +i\infty$ reads
\begin{equation}
Y_{\infty}^{B}(z)=C_{+}^{B}
\begin{pmatrix}
(-e^{2\pi iz/\tau})^{-\ta} & 0\\
0 & (-e^{2\pi i z/\tau})^{\ta}
\end{pmatrix}
G_{+}^{B}\begin{pmatrix}
e^{2\pi iz(\ta-\teta\tau)/\tau} &0 \\
0 & e^{-2\pi iz(\ta-\teta\tau)/\tau}
\end{pmatrix},
\end{equation}
where, for arbitrary parameter $\widetilde{\delta}_{+i\infty}$,
\begin{equation}
C_{+}^{B}=M_AX_{\infty} 
\begin{pmatrix}
e^{-i\widetilde{\delta}_{+i\infty}} & 0\\
0 & e^{i\widetilde{\delta}_{+i\infty}}
\end{pmatrix},
\end{equation}
\begin{equation}
G_{+}^{B}=
\begin{pmatrix}
e^{i\tnu/2+i\widetilde{\delta}_{+i\infty}} \frac{\Gamma(1-2\ta)\Gamma(2\ta-m)}{\Gamma(2\ta)\Gamma(1-2\ta-m)} & 0\\
0 & e^{-i\tnu/2-i\widetilde{\delta}_{+i\infty}} \frac{\Gamma(2\ta)\Gamma(1-2\ta-m)}{\Gamma(1-2\ta)\Gamma(2\ta-m)}
\end{pmatrix}
\begin{pmatrix}
1 & \frac{m}{2\widetilde{a}}\\
0 & 1
\end{pmatrix}.
\end{equation}

\paragraph{3. Computation of \(G_{0}^{B}\):}

The diagonalization matrix $G_0^{B}$ is obtained by studying the behaviour of the solution \eqref{eq:fsY3ptB} near $z=0$, which can be achieved through analytic continuation to \(z=0\). To do this, as in the proof of Proposition \ref{Prop:GAs}, we first move from \(z=-\frac{\tau+1}{2}\) to \(z=-\frac{1}{2}\), and then perform analytic continuation of the hypergeometric function. We obtain that
\begin{multline}
{Y}_{3pt}^{B}(z)=
X_{\infty}\begin{pmatrix}
\frac{e^{-\pi i \ta}\Gamma(1-2\ta)\Gamma(-2m)}{\Gamma(1-2\ta-m)\Gamma(-m)} & -\frac{e^{-\pi i\ta}\Gamma(1-2\ta)\Gamma(2m)}{\Gamma(m)\Gamma(1-2\ta+m)}\\
-\frac{e^{\pi i \ta}\Gamma(2\ta)\Gamma(-2m)}{\Gamma(2\ta-m)\Gamma(-m)} & -\frac{e^{\pi i \ta}\Gamma(2\ta)\Gamma(2m)}{\Gamma(m)\Gamma(2\ta+m)}
\end{pmatrix}
(e^{2\pi iz/\tau})^{-\ta}
\begin{pmatrix}
(1-e^{2\pi iz/\tau})^m & 0\\
0 & (1-e^{2\pi iz/\tau})^{-m}
\end{pmatrix}
\\\times
\begin{pmatrix}
{}_2F_1(m,1-2\ta+m,1+2m,1-e^{2\pi iz/\tau}) & -e^{2\pi iz/\tau}{}_2F_1(1+m,1-2\ta+m,1+2m,1-e^{2\pi iz/\tau})\\
{}_2F_1(-m,1-2\ta-m,1-2m,1-e^{2\pi iz/\tau}) & e^{2\pi iz/\tau}{}_2F_1(1-m,1-2\ta-m,1-2m,1-e^{2\pi iz/\tau})
\end{pmatrix}
\\\times
\begin{pmatrix}
e^{2\pi iz(\ta-\teta\tau)/\tau} &0 \\
0 & e^{-2\pi iz(\ta-\teta\tau)/\tau}
\end{pmatrix}.
\end{multline}
The above matrix behaves as follows near $z=0$:
\begin{multline}
Y_{3pt}^{B}(z)=
X_{\infty}\begin{pmatrix}
\frac{e^{-\pi i \ta}\Gamma(1-2\ta)\Gamma(-2m)}{\Gamma(1-2\ta-m)\Gamma(-m)} & -\frac{e^{-\pi i\ta}\Gamma(1-2\ta)\Gamma(2m)}{\Gamma(m)\Gamma(1-2\ta+m)}\\
-\frac{e^{\pi i \ta}\Gamma(2\ta)\Gamma(-2m)}{\Gamma(2\ta-m)\Gamma(-m)} & -\frac{e^{\pi i \ta}\Gamma(2\ta)\Gamma(2m)}{\Gamma(m)\Gamma(2\ta+m)}
\end{pmatrix}
\begin{pmatrix}
(-2\pi iz/\tau)^m & 0\\
0 & (-2\pi iz/\tau)^{-m}
\end{pmatrix}
\begin{pmatrix}
1 & -1\\
1 & 1
\end{pmatrix},
\end{multline}
where \(\arg(-iz/\tau)=0\) for \(z=-0\).  Alternatively we can write the above expression as
\begin{equation}
\label{eq:Y3pt0B}
Y_{3pt}^{B}(z)=C_0^{B}
\begin{pmatrix}
(-2\pi iz/\tau)^m & 0\\
0 & (-2\pi iz/\tau)^{-m}
\end{pmatrix}
{G}_0^{B}
\begin{pmatrix}
e^{2\pi iz(\ta-\teta\tau)/\tau} &0 \\
0 & e^{-2\pi iz(\ta-\teta\tau)/\tau}
\end{pmatrix}
,
\end{equation}
where
\begin{equation}\label{eq:C0B}
C_0^B=
X_{\infty}\begin{pmatrix}
\frac{e^{-\pi i \ta}\Gamma(1-2\ta)\Gamma(-2m)}{\Gamma(1-2\ta-m)\Gamma(-m)} & -\frac{e^{-\pi i\ta}\Gamma(1-2\ta)\Gamma(2m)}{\Gamma(m)\Gamma(1-2\ta+m)}\\
-\frac{e^{\pi i \ta}\Gamma(2\ta)\Gamma(-2m)}{\Gamma(2\ta-m)\Gamma(-m)} & -\frac{e^{\pi i \ta}\Gamma(2\ta)\Gamma(2m)}{\Gamma(m)\Gamma(2\ta+m)}
\end{pmatrix}
\begin{pmatrix}
e^{-i\widetilde{\delta}_0} & 0\\
0 & e^{i\widetilde{\delta}_0}
\end{pmatrix},
\end{equation}
\begin{equation}
G_0^{B}=
\begin{pmatrix}
e^{i\widetilde{\delta}_0} & 0\\
0 & e^{-i\widetilde{\delta}_0}
\end{pmatrix}
\begin{pmatrix}
1 & -1\\
1 & 1
\end{pmatrix},
\end{equation}
for an arbitrary parameter $\widetilde{\delta}_{0}$.
\end{proof}

\begin{remark}\label{rmk:minus1}
Expressions for the \(A\)-cycle and \(B\)-cycle parametrices \eqref{eq:Y3pt0A} and \eqref{eq:Y3pt0B} do not have branch cuts in the lower-left part of the fundamental domain\footnote{This is so because two boundaries of this domain are the deformations of the same path for analytic continuation from \(-\frac{\tau+1}{2}\) to 2, see Fig. \ref{fig:monodromies}.}.
This allows us to conclude that for \(\tau\in i \mathbb{R}\), \(\arg(-1/\tau)=i\pi/2\).
We can additionally choose \(\arg(\tau)\in (0,\pi)\) everywhere throughout the paper. This gives us the value of the argument of \((-1)\) in the formula \eqref{eq:Y3pt0B}: \(\arg(-1)=+i\pi\).
\end{remark}

\subsection{Comparison of solutions}

From the above subsections, we have two solutions $Y_{3pt}^{A}$ \eqref{eq:3ptk} and $Y_{3pt}^{B}$ \eqref{eq:3ptkB} that approximate solution of our linear problem $Y$ \eqref{linear_systemCM} in two different limits. Their monodromy data and asymptotics should therefore be related, and we obtain the following result.
\begin{proposition}\label{prop:SMonodromy} These two formulas allow one to express all monodromy data in terms of just \(\ta\) and \(\nu\).
    \begin{align} 
a&=\frac1{4\pi i}\log \frac{\sin(\pi\ta-\pi m/2+\nu/4)\sin(\pi \ta+\pi m/2-\nu/4)}{\sin(\pi \ta-\pi m/2-\nu/4)\sin(\pi \ta+\pi m/2+\nu/4)},\label{rel:a_as_atnu}\\
\tnu&=4\pi a-2i \log \frac{\sin(\pi \ta-\pi m/2-\nu/4)}{\sin(\pi \ta+\pi m/2-\nu/4)}.\label{rel:nut_as_aatnu}
\end{align}
\end{proposition}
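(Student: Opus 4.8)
The two local solutions $Y_{3pt}^{A}$ and $Y_{3pt}^{B}$ approximate the \emph{same} global solution $Y$ of \eqref{linear_systemCM} in the dual regimes $\tau\to i\infty$ and $\tau\to i0$, and $X_\infty$ has been fixed (see the remark after \eqref{eq:fsY3ptB}) precisely so that the monodromies of the $B$-parametrix agree with those of the $A$-parametrix. Hence the two parametrices carry one and the same pair of monodromy matrices $(M_A,M_B)$. The plan is to write $(M_A,M_B)$ in the two ways furnished by Propositions \ref{Prop:GAs} and \ref{Prop:GBs} and the surrounding computations, and to equate them. Using the residual overall-conjugation freedom I would set $X_0=\mathbb{1}$, so that from the $A$-pants data $M_A=e^{2\pi i a\sigma_3}$ is diagonal (cf.\ \eqref{eq:MAA}) while $M_B$ is the explicit matrix \eqref{eq:MBA} depending on $(a,m,\nu)$; symmetrically the $B$-pants data give $M_B=X_\infty e^{-2\pi i\ta\sigma_3}X_\infty^{-1}$ (cf.\ \eqref{eq:MBB}) and $M_A$ the explicit matrix \eqref{eq:MAB} depending on $(\ta,m,\tnu)$ and $X_\infty$.

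\textbf{Relation \eqref{rel:a_as_atnu}.} Equating the traces of $M_B$ in the two presentations (trace being conjugation invariant) gives $2\cos 2\pi\ta=\bigl[\sin\pi(2a-m)e^{-i\nu/2}+\sin\pi(2a+m)e^{i\nu/2}\bigr]/\sin 2\pi a$. Expanding with $\sin\pi(2a\mp m)=\sin 2\pi a\cos\pi m\mp\cos 2\pi a\sin\pi m$ and $e^{\pm i\nu/2}=\cos(\nu/2)\pm i\sin(\nu/2)$ collapses this to $\cos 2\pi\ta=\cos\pi m\cos(\nu/2)+i\cot 2\pi a\,\sin\pi m\sin(\nu/2)$, which is linear in $\cot 2\pi a$; solving it for $e^{4\pi i a}$ and recombining the right-hand side with the product-to-sum identity $\sin X\sin Y=\tfrac12[\cos(X-Y)-\cos(X+Y)]$ produces exactly \eqref{rel:a_as_atnu}.

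\textbf{Relation \eqref{rel:nut_as_aatnu}.} Since the trace of $M_B$ already fixed $\ta$, to pin down $\tnu$ I would diagonalize the explicit matrix \eqref{eq:MBA}: its eigenvector matrix must coincide, up to the residual right-diagonal ambiguity encoded by $\widetilde\delta_\bullet$ (equivalently the coordinate $c$, $C_0\mapsto C_0e^{c\sigma_3}$), with $X_\infty$. Substituting this $X_\infty$ into \eqref{eq:MAB} and imposing that $M_A$ equal the diagonal matrix $e^{2\pi i a\sigma_3}$ of the $A$-side forces the off-diagonal entries to vanish; reading off that condition isolates $e^{i\tnu/2}$ as $e^{2\pi i a}$ times a ratio of sines of the arguments $\pi\ta\pm\pi m/2-\nu/4$, which is \eqref{rel:nut_as_aatnu}. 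Equivalently one may equate the traces of $M_A$, solve the resulting quadratic $\sin\pi(2\ta+m)\,w^{2}-2\cos 2\pi a\,\sin 2\pi\ta\,w+\sin\pi(2\ta-m)=0$ for $w=e^{i\tnu/2}$, and simplify the discriminant via $\sin(A+B)\sin(A-B)=\sin^{2}A-\sin^{2}B$ together with \eqref{rel:a_as_atnu}, the correct root being selected by the branch conventions of Remark \ref{rmk:minus1}. As a consistency check one verifies $\tr M_0=2\cos 2\pi m$ for $M_0=M_A^{-1}M_B^{-1}M_AM_B$, which should follow automatically.

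\textbf{Main obstacle.} The trigonometry is routine; the genuine difficulty is bookkeeping. One must keep track of all diagonal-conjugation ambiguities ($\delta_\bullet$, $\widetilde\delta_\bullet$, the equivalence $G\mapsto DG$, the coordinate $c$) and of the branches and arguments fixed in Remark \ref{rmk:minus1} ($\arg(-1)=i\pi$, $\arg(-1/\tau)=i\pi/2$), so that \eqref{rel:a_as_atnu}--\eqref{rel:nut_as_aatnu} hold on the nose rather than up to an undetermined constant or a sign. In particular, the additive $4\pi a$ in \eqref{rel:nut_as_aatnu} is entirely due to the specific eigenvector normalization of the $M_B$ of \eqref{eq:MBA} that is forced by matching it to the twist-free parametrix $Y_\infty^{B}$; pinning that normalization down carefully is the delicate point, after which the remaining algebra is mechanical.
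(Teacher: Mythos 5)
Your overall strategy is the paper's: identify the monodromy representations carried by the two parametrices and compare them, and your derivation of \eqref{rel:a_as_atnu} by equating the two expressions for $\tr M_B$ is exactly what the paper does (it arrives at \eqref{eq:aeqatmnu} and then applies product-to-sum identities, as you describe). Where you diverge is \eqref{rel:nut_as_aatnu}. The paper does not match full matrices or solve a quadratic: it uses the third trace coordinate $\tr M_AM_B$ and forms the combination $\tr M_A-e^{2\pi i\ta}\tr M_AM_B$, which on the $B$-side collapses to $(1-e^{4\pi i\ta})e^{i\tnu/2}\sin\pi(2\ta+m)/\sin 2\pi\ta$, i.e.\ an equation \emph{linear} in $e^{i\tnu/2}$ with no root ambiguity; this is also where the additive $4\pi a$ comes from, rather than from an eigenvector normalization. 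Your primary route (diagonalize \eqref{eq:MBA} to extract $X_\infty$, substitute into \eqref{eq:MAB}, and demand vanishing off-diagonal entries) is legitimate and contains the same information, since full-matrix matching determines $\tr M_AM_B$ implicitly; it is just heavier, and you would have to handle the eigenvalue-ordering ambiguity $\lambda\leftrightarrow\lambda^{-1}$ alongside the diagonal ones.

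Your ``equivalent'' fallback — solving the quadratic in $w=e^{i\tnu/2}$ coming from $\tr M_A$ alone and selecting the root ``by the branch conventions of Remark \ref{rmk:minus1}'' — does not work as stated. The two roots of that quadratic correspond to the two distinct points of the Fricke cubic \eqref{eq:Fricke} sharing the same $\tr M_A$ and $\tr M_B$, which the paper's subsequent Remark shows are exchanged by the involution $\tr M_AM_B\mapsto\tr M_A\tr M_B-\tr M_AM_B$ (equivalently $\nu\mapsto-\nu$, $\tnu\mapsto-\tnu$, $m\mapsto-m$). Distinguishing them is not a matter of branches of logarithms or of $(-1)^m$; it genuinely requires the value of $\tr M_AM_B$, i.e.\ equation \eqref{rel:MABs}. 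So if you take the trace-only shortcut you must use all three traces, which is precisely the paper's linear-combination trick; your consistency check $\tr M_0=2\cos 2\pi m$ is the Fricke relation itself and cannot break the tie either.
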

\begin{proof}
In order to obtain the precise map between the data $(a, \nu)$ and $(\ta, \tnu)$ respectively, it is sufficient to compare the traces of monodromies.  Comparing \eqref{eq:MAA}, \eqref{eq:MAB}, we get
\begin{equation}\label{rel:MAs}
\tr M_A=e^{2\pi i a}+e^{-2\pi i a}=e^{-i\tnu/2}\frac{\sin\pi(2\ta-m)}{\sin2\pi \ta}+e^{i\tnu/2}\frac{\sin\pi(2\ta+m)}{\sin 2\pi \ta}.
\end{equation}
Similarly, from \eqref{eq:MBA}, \eqref{eq:MBB} one obtains the equality
\begin{equation}\label{rel:MBs}
\tr M_B=e^{-i\nu/2}\frac{\sin\pi(2a-m)}{\sin 2\pi a}+e^{i\nu/2}\frac{\sin\pi(2a+m)}{\sin 2\pi a}=e^{2\pi i \ta}+e^{-2\pi i \ta}.
\end{equation}
Using the same expressions as above, we further obtain that
\begin{align}
\tr M_AM_B&=e^{-i\nu/2+2\pi ia}\frac{\sin\pi(2a-m)}{\sin 2\pi a}+e^{i\nu/2-2\pi ia}\frac{\sin\pi(2a+m)}{\sin 2\pi a}=\nonumber
\\
&=e^{-i\tnu/2-2\pi i\ta}\frac{\sin\pi(2\ta-m)}{\sin2\pi \ta}+e^{i\tnu/2+2\pi i\ta}\frac{\sin\pi(2\ta+m)}{\sin 2\pi \ta}.\label{rel:MABs}
\end{align}

Let us now analyse the expressions above. From \eqref{rel:MBs} one can get the expression for $a$ in terms of \(\ta\), $\nu$:
\begin{equation}
(e^{4\pi ia}-1)(e^{2\pi i\ta}+e^{-2\pi i \ta})=e^{-i\nu/2}(e^{4\pi i a-\pi i m}-e^{\pi i m})+e^{i\nu/2}(e^{4\pi i a+\pi i m}-e^{-\pi i m}),
\end{equation}
and therefore we get \eqref{rel:a_as_atnu}
\begin{equation}\label{eq:aeqatmnu}
e^{4\pi i a}=\frac{e^{2\pi i\ta}+e^{-2\pi i \ta}-e^{\pi im-i\nu/2}-e^{-\pi i m+ i \nu/2}}{e^{2\pi i\ta}+e^{-2\pi i\ta}-e^{\pi im+i\nu/2}-e^{-\pi i m-i \nu/2}}.
\end{equation}
The variable $\tnu$ can be expressed in terms of $a$, $\ta$, $\nu$ by considering the following combination:
\begin{align}
&\tr M_A-e^{2\pi i \ta}\tr M_AM_B \nonumber\\
&=
e^{2\pi i a}+e^{-2\pi ia}-e^{2\pi i \ta}\left(e^{-i\nu/2+2\pi ia}\frac{\sin\pi(2a-m)}{\sin 2\pi a}+e^{i\nu/2-2\pi ia}\frac{\sin\pi(2a+m)}{\sin 2\pi a}\right)
\nonumber\\
&=
(1-e^{4\pi i\ta})e^{i\tnu/2}\frac{\sin\pi(2\ta+m)}{\sin 2\pi \ta}.
\end{align}
Comparing the LHS and RHS of the equation above, we get:
\begin{align}
e^{i\tnu/2}&=\frac{\sin 4\pi a-e^{2\pi i \ta}\left( e^{-i\nu/2+2\pi ia}\sin\pi(2a-m) + e^{i\nu/2-2\pi ia}\sin\pi(2a+m) \right)}{-2i \sin2\pi a\sin\pi(2\ta+m)}
\nonumber \\
&=e^{2\pi ia}\frac{\sin 4\pi a-e^{2\pi i \ta}\left( e^{-i\nu/2+2\pi ia}\sin\pi(2a-m) + e^{i\nu/2-2\pi ia}\sin\pi(2a+m) \right)}{(1-e^{4\pi ia}) \sin2\pi a\sin\pi(2\ta+m)}.
\end{align}
Substituting \eqref{rel:a_as_atnu} in the above expression and simplifying it, we get \eqref{rel:nut_as_aatnu}.
\end{proof}

One can obtain an analogous result to Proposition \ref{prop:SMonodromy}  for the pair of variables $\ta$, $\nu$:
\begin{equation}\label{eq:nueqamnut}
\begin{split}
\ta&=\frac1{4\pi i}\log\frac{\sin(\pi a-\pi m/2+\tnu/4)\sin(\pi a+\pi m/2-\tnu/4)}{\sin(\pi a-\pi m/2-\tnu/4)\sin(\pi a+\pi m/2+\tnu/4)},\\
\nu&=4\pi \ta-2i\log \frac{\sin(\pi a-\pi m/2-\tnu/4)}{\sin(\pi a+\pi m/2-\tnu/4)}.
\end{split}
\end{equation}

We can obtain further constraint on the monodromy data as follows: since there is overall \(SL(2)\) action on the \(C\) matrices, without loss of generality, one can choose \(C_0^{A}=C^{B}_0=1\). This in turn implies that
\begin{equation}
X_0\mathop{=}^{\eqref{eq:C0A}}
\begin{pmatrix}
e^{-i\delta_0} & 0\\
0 & e^{i\delta_0}
\end{pmatrix}^{-1}
\begin{pmatrix}
\frac{e^{\pi i a}\Gamma(1-2a)\Gamma(-2m)}{\Gamma(1-2a-m)\Gamma(-m)} & -\frac{e^{\pi i a}\Gamma(1-2a)\Gamma(2m)}{\Gamma(m)\Gamma(1-2a+m)}\\
-\frac{e^{-\pi i a}\Gamma(2a)\Gamma(-2m)}{\Gamma(2a-m)\Gamma(-m)} & -\frac{e^{-\pi ia} \Gamma(2a) \Gamma(2m)}{\Gamma(m)\Gamma(2a+m)}
\end{pmatrix}^{-1}
\end{equation}
and
\begin{equation}
X_{\infty}\mathop{=}^{\eqref{eq:C0B}}
\begin{pmatrix}
e^{-i\widetilde{\delta}_0} & 0\\
0 & e^{i\widetilde{\delta}_0}
\end{pmatrix}^{-1}
\begin{pmatrix}
\frac{e^{-\pi i \ta}\Gamma(1-2\ta)\Gamma(-2m)}{\Gamma(1-2\ta-m)\Gamma(-m)} & -\frac{e^{-\pi i\ta}\Gamma(1-2\ta)\Gamma(2m)}{\Gamma(m)\Gamma(1-2\ta+m)}\\
-\frac{e^{\pi i \ta}\Gamma(2\ta)\Gamma(-2m)}{\Gamma(2\ta-m)\Gamma(-m)} & -\frac{e^{\pi i \ta}\Gamma(2\ta)\Gamma(2m)}{\Gamma(m)\Gamma(2\ta+m)}
\end{pmatrix}^{-1}.
\end{equation}

\begin{remark}
    The coordinates $\tr M_A$, $\tr M_B$ and $\tr M_AM_B$ satisfy the character variety relation (see \cite{NRS2011} for example).
    \begin{equation}\label{eq:Fricke}
        \tr M_0=(\tr M_B)^2-(\tr M_A)(\tr M_B)(\tr M_AM_B)+(\tr M_AM_B)^2+(\tr M_A)^2-2
    \end{equation}
    which is known as the Fricke cubic. Performing the transformation twice does not give the original coordinates, i.e. $\left(\widetilde{\at},\widetilde{\nut}\right) \ne\left(a,\nu\right)$. This is because, while \eqref{rel:MAs} and \eqref{rel:MBs} are invariant under this operation, \eqref{rel:MABs} is not. Instead, doing the transformation twice exchanges the two roots of \eqref{eq:Fricke} seen as a quadratic equation for $\tr M_AM_B$, namely
    \begin{equation}
        \tr M_AM_B\mapsto\tr M_A\tr M_B-\tr M_AM_B.
    \end{equation}

We can notice that
\begin{align}
\tr M_A\tr M_B-\tr M_AM_B&=e^{-i\nu/2-2\pi ia}\frac{\sin\pi(2a-m)}{\sin 2\pi a}+e^{i\nu/2+2\pi ia}\frac{\sin\pi(2a+m)}{\sin 2\pi a}=\nonumber
\\
&=e^{-i\tnu/2+2\pi i\ta}\frac{\sin\pi(2\ta-m)}{\sin2\pi \ta}+e^{i\tnu/2-2\pi i\ta}\frac{\sin\pi(2\ta+m)}{\sin 2\pi \ta}.
\end{align}
Therefore, if we fix the values of \(a\) and \(\widetilde{a}\), this involution acts by
\begin{equation}
\nu\mapsto -\nu,\qquad \widetilde{\nu}\mapsto -\widetilde{\nu}, \qquad m\mapsto -m.
\end{equation}
\end{remark}

We can further choose the following expressions for $\delta_0$, $\widetilde{\delta}_0$ so as to simplify the expressions of $M_A$, $M_B$:
\begin{equation}
\delta_0=\Delta_0-\frac{i}{2} \log \frac{2^{-4m}\Gamma(2a+m)\Gamma(1/2-m)}{\Gamma(2a-m)\Gamma(1/2+m)},
\end{equation}
and
\begin{equation}
\widetilde{\delta}_0=\widetilde{\Delta}_0-\frac{i}{2}\log \frac{2^{-4m}\Gamma(1-2\ta+m)\Gamma(1/2-m)}{\Gamma(1-2\ta-m)\Gamma(1/2+m)}.
\end{equation}
Finally, comparing the off-diagonal elements of monodromy matrices \eqref{eq:MAB} and \eqref{eq:MBA}, we get
\begin{equation}\label{eq:diffdelta0s}
\Delta_0-\widetilde{\Delta}_0=\frac{i}{2}\log \frac{-i e^{-\pi i m+i \nu/2}\sin\pi(2\ta-m)}{\cos \pi m-e^{i\nu/2}\cos 2\pi \ta}.
\end{equation}

\begin{remark}[Redundancies in monodromy coordinates]
The variables $(a,\,\nu,\,m)$ are not actual coordinates on the character variety, since certain discrete transformations are mapped to the same point in monodromy space. We list them here:
\begin{enumerate}
    \item The transformation \(m\mapsto-m\) is a symmetry of the original problem, but to preserve monodromy data we should also transform \(\nu\) and \(\widetilde{\nu}\):
\begin{equation}
\label{eq:msymmetry}
m\mapsto-m,\qquad
\nu\mapsto \nu+2i\log\frac{\sin\pi(2a+m)}{\sin\pi(2a-m)},\qquad
\widetilde{\nu}\mapsto \widetilde{\nu}+2i\log\frac{\sin\pi(2\widetilde{a}+m)}{\sin\pi(2\widetilde{a}-m)};
\end{equation}
    \item Coordinates of opposite sign describe the same point in the character variety:
\begin{equation}
\label{eq:invsymmetry1}
\nu\mapsto-\nu,\qquad a\mapsto-a
\end{equation}
and
\begin{equation}
\label{eq:invsymmetry2}
\widetilde{\nu}\mapsto-\widetilde{\nu},\qquad \widetilde{a}\mapsto-\widetilde{a};
\end{equation}
    \item We also have the four trivial transformations that can be applied independently, signaling the fact that the monodromies only depend on exponentiated variables:
\begin{equation}
\label{eq:shiftsymmetries}
a\mapsto a+k_a, \quad
\widetilde{a}\mapsto \widetilde{a}+k_{\widetilde{a}},\quad
\nu\mapsto \nu+4\pi  k_{\nu},\quad
\widetilde{\nu}\mapsto \widetilde{\nu}+4\pi k_{\widetilde{\nu}},\quad
m\mapsto m+k_m,\quad
k_{\bullet}\in\mathbb{Z}.
\end{equation}
\end{enumerate}
Summarizing this we notice that monodromy manifold is not \(\mathbb{C}^3\) with coordinates \(m,a,\,\nu\) or \(m,\,\at,\,\nut\), but instead
\begin{equation}
\mathbb{C}^3/\{\eqref{eq:msymmetry},\eqref{eq:invsymmetry1},\eqref{eq:shiftsymmetries}\}\simeq
\mathbb{C}^3/\{\eqref{eq:msymmetry},\eqref{eq:invsymmetry2},\eqref{eq:shiftsymmetries}\}.
\end{equation}
The tau function is a function on $\mathbb{C}^3_{(a,\nu,m)}$, but it doesn't descend to a function on the quotient. Rather, it is a section of a line bundle on the quotient.
\end{remark}

\section{The modular connection constant}\label{sec:Upsilon}

The data obtained in the previous section sets the stage to compute the connection constant.
\begin{theorem}\label{eq:thm:ConnCon}
    The connection constant \eqref{def:Upsilon} is given by
    \begin{align}
        \Upsilon_S &=  \frac{G(1+2a)G(1-2a)}{G(1-m+2a)G(1-m-2a)}\frac{G(1-m+2\ta)G(1-m-2\ta)}{G(1+2\ta)G(1-2\ta) }\,\widehat{\Upsilon}_S(a,\nut), \label{eq:ConnConst}
    \end{align}
    where
    \begin{equation}\label{eq:Uhat}
        \widehat{\Upsilon}_S(a,\nut)=e^{i\,\ta\, \tnu}\frac{\widehat{G}\left(a-m/2+\tnu/(4\pi)\right)\widehat{G}\left(a-m/2-\tnu/(4\pi)\right)}{\widehat{G}\left(a+m/2+\tnu/(4\pi)\right)\widehat{G}\left(a+m/2-\tnu/(4\pi)\right)}\frac{(2\pi)^m\widehat{G}(m)}{e^{i\pi m^2/2}},
    \end{equation}
    \begin{equation}\label{def:Ghat}
\widehat{G}(x)=\frac{G(1+x)}{G(1-x)},
\end{equation}
and $G(.)$ denotes the Barnes G-function.
\end{theorem}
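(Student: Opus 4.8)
The plan is to integrate the closed $1$-form $\dd\log\Upsilon_S=-\omega_{3pt}^A+\omega_{3pt}^B$ of \eqref{def:Upsilon}, feeding in the explicit diagonalization matrices of Propositions \ref{Prop:GAs} and \ref{Prop:GBs}. First I would substitute those matrices into \eqref{def:omega3A} and \eqref{def:omega3B}. Each $G_\bullet^{A/B}$ factorizes as (diagonal matrix)$\,\times\,$(a constant matrix or a unipotent matrix); since the trace against $\sigma_3$ annihilates both strictly-triangular and scalar blocks, only the diagonal logarithmic derivatives survive. The upshot is that $\omega_{3pt}^{A}$ reduces (up to explicit numerical coefficients) to a combination of $a\,\dd\nu$, $a\,\dd\log\Lambda_A$, $m\,\dd\delta_0$ and $a\,\dd\delta_{-i\infty}$, with $\Lambda_A:=\tfrac{\Gamma(1-2a)\Gamma(2a-m)}{\Gamma(2a)\Gamma(1-2a-m)}$, and likewise $\omega_{3pt}^{B}$ reduces to the analogous expression in $\ta,\nut,m$ and the phases $\widetilde\delta_\bullet$, with $\Lambda_B$ the $\ta$-analogue of $\Lambda_A$.

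The second step is to dispose of the arbitrary diagonal phases $\delta_\bullet,\widetilde\delta_\bullet$, which must drop out because $\Upsilon_S$ is intrinsically defined on the character variety. This is where the gluing data of Section \ref{sec:pants} enters: imposing $C_0^A=C_0^B=\mathbb{1}$, using the expressions chosen there for $\delta_0,\widetilde\delta_0$, the relation \eqref{eq:diffdelta0s} between $\Delta_0$ and $\widetilde\Delta_0$, and the residual matching of the off-diagonal monodromy entries, one fixes all the $\delta$-contributions (the $\delta_{\pm i\infty}$ pieces cancel outright, while the $m(\dd\delta_0-\dd\widetilde\delta_0)$ piece is evaluated explicitly). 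After this, $\dd\log\Upsilon_S$ is an explicit $1$-form in the monodromy coordinates; substituting the relations \eqref{rel:a_as_atnu}--\eqref{rel:nut_as_aatnu} and \eqref{eq:nueqamnut} one checks directly that it is closed, which is a useful internal consistency test.

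Third, I would pass to the independent coordinates $(a,\ta,m)$ (equivalently $(a,\nut,m)$, in which \eqref{eq:Uhat} is naturally phrased) and integrate. Two antiderivative identities do all the work: $\tfrac{d}{dx}\log G(1+x)=\tfrac12\log(2\pi)-\tfrac12-x+x\,\psi(1+x)$ and $\tfrac{d}{dx}\log\widehat G(x)=\log(2\pi)-\pi x\cot(\pi x)$. The $a\,\dd\log\Lambda_A$ and $\ta\,\dd\log\Lambda_B$ contributions, after integration by parts in $a,\ta$ and $m$, produce exactly the Barnes products $\tfrac{G(1+2a)G(1-2a)}{G(1-m+2a)G(1-m-2a)}$ and $\tfrac{G(1-m+2\ta)G(1-m-2\ta)}{G(1+2\ta)G(1-2\ta)}$ of \eqref{eq:ConnConst} (the $\ta$-sector appears inverted because of the relative sign in $\dd\log\Upsilon_S$); the residual $\dd\nu,\dd\nut$ pieces, once $\nu$ is traded for $(\ta,\nut)$ via \eqref{eq:nueqamnut}, involve cotangents of $\pi\!\left(a\mp m/2\pm\nut/(4\pi)\right)$ and integrate through $\widehat G$ into the four shifted $\widehat G$-factors of \eqref{eq:Uhat}, while the surviving bilinear term assembles into $e^{i\ta\nut}$ --- this is precisely the ``generating function'' part anticipated in Proposition \ref{prop:GenFn}. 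The pure $\dd m$ terms integrate to $(2\pi)^m$, $\widehat G(m)$ and $e^{-i\pi m^2/2}$. Finally, the single overall multiplicative constant (independent of $a,\ta,\nu,\nut,m$) is fixed either by comparing with the known leading asymptotics of $\T^A,\T^B$ at $\tau\to i\infty$ and $\tau\to 0$ normalized as in \eqref{eq:KyivIntro}, or equivalently by imposing consistency of \eqref{eq:ConnConst} with $S^2=-\mathbb{1}$ together with the discrete symmetries \eqref{eq:msymmetry}, \eqref{eq:shiftsymmetries} of the monodromy data.

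The main obstacle is the second step: carefully showing that every arbitrary diagonal phase either cancels or is pinned by the gluing of the two parametrices, rather than contaminating the answer. Closely tied to this are the branch choices of Remark \ref{rmk:minus1}, in particular $\arg(-1)=i\pi$ in \eqref{eq:Y3pt0B}, which govern the precise phase $e^{-i\pi m^2/2}$ and the powers of $2\pi$; mishandling them shifts the result by $m$-dependent roots of unity. The lengthy part is the bookkeeping in the third step --- tracking every $\Gamma$, every sign and every factor of $2$ so that the digamma and cotangent pieces reassemble into exactly \eqref{eq:Uhat} --- but this is mechanical once the closed $1$-form is in hand.
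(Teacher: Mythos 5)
Your proposal follows essentially the same route as the paper: substitute the explicit diagonalization matrices of Propositions \ref{Prop:GAs} and \ref{Prop:GBs} into $-\omega_{3pt}^A+\omega_{3pt}^B$, note that only the diagonal logarithmic derivatives survive the trace against $\sigma_3$, pin the residual phase ambiguity via the gluing condition $C_0^A=C_0^B=\mathbb{1}$ and relation \eqref{eq:diffdelta0s}, and then recognize the resulting closed $1$-form as the total logarithmic derivative of the Barnes-$G$ prefactor times $\widehat{\Upsilon}_S$. The only cosmetic difference is that the paper verifies the candidate antiderivative $\widehat{\Upsilon}_S$ directly rather than building it up from digamma/cotangent primitives, and it does not separately discuss fixing an overall multiplicative constant (which is a normalization convention, since \eqref{def:Upsilon} only determines $\Upsilon_S$ through its logarithmic derivative).
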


\begin{proof}
Let us begin by computing the following term
\begin{align}
- \omega^{A}_{3pt}+ {\omega}^B_{3pt}&\mathop{=}^{\eqref{def:omega3A},\eqref{def:omega3B}}\tr \mathbf{a}\,dG_{-}^{A}\left(G_{-}^{A}\right)^{-1} - \tr \mathbf{a} \, dG_{+}^{A}\left(G_{+}^{A}\right)^{-1}- \tr \mathbf{m} \, dG_0^{A}\left(G_0^{A}\right)^{-1} \nonumber\\
&-\left( \tr \widetilde{\mathbf{a}}\, dG_{-}^{B}\left(G_{-}^{B}\right)^{-1} + \tr \widetilde{\mathbf{a}}\, dG_{+}^{B}\left(G_{+}^{B}\right)^{-1}+\tr \mathbf{m}\, dG_0^{B}\left(G_0^{B}\right)^{-1}\right)\nonumber\\
&\mathop{=}^{\textrm{Props.} \ref{Prop:GAs}, \ref{Prop:GBs}}-a d \left( i\nu + 2 \log \frac{\Gamma(1-2a)\Gamma(2a-m)}{\Gamma(2a)\Gamma(1-2a-m)} \right)-m d\log \frac{2^{-4m}\Gamma(2a+m)\Gamma(1/2-m)}{\Gamma(2a-m)\Gamma(1/2+m)}
\nonumber\\
&+\ta d \left( i\tnu+ 2\log \frac{\Gamma(1-2\ta)\Gamma(2\ta-m)}{\Gamma(1-2\ta-m)} \right)+m d\log \frac{2^{-4m}\Gamma(1-2\ta+m)\Gamma(1/2-m)}{\Gamma(1-2\ta-m)\Gamma(1/2+m)}\nonumber\\
&-2im\, d(\Delta_0-\widetilde{\Delta}_0) \nonumber\\
&\mathop{=}^{\eqref{eq:diffdelta0s}}-d\log \frac{G(1-m+2a)G(1-m-2a)}{G(1+2a)G(1-2a)}\frac{G(1+2\ta)G(1-2\ta)}{G(1-m+2\ta)G(1-m-2\ta)}\left( \frac{\sin\pi(2\ta-m)}{\sin\pi(2a+m)} \right)^m
\nonumber\\
&-dm \,\log \frac{\sin\pi(2a+m)}{\sin\pi(2\ta-m)}+m d\log \frac{-i e^{-\pi i m+i\nu/2}\sin\pi(2\ta-m)}{\cos\pi m-e^{i\nu/2}\cos\pi \ta}-i ad\nu+i\ta d\tnu
\nonumber\\
&=d\log \frac{G(1+2a)G(1-2a)}{G(1-m+2a)G(1-m-2a)}\frac{G(1-m+2\ta)G(1-m-2\ta)}{G(1+2\ta)G(1-2\ta) }\nonumber\\
&+m d\log \frac{-i e^{-\pi i m+i\nu/2}\sin\pi(2 a+m)}{\cos\pi m-e^{i\nu/2}\cos\pi \ta}-i ad\nu+ i\ta d\tnu.\label{comp:conn-const}
\end{align}
For the further analysis we define the new 1-form
\begin{equation}
\omega_1=-i\,a\,d\nu+i\,\ta\, d\tnu+m\,d\log \frac{-i e^{-\pi i m+i\nu/2}}{\cos\pi m-e^{i\nu/2}\cos\pi \ta}.
\end{equation}
We can check by explicit computation that this 1-form is closed: rewriting the above expression as
\begin{equation}
\omega_1=-i \, a\,d\nu + i\, \ta\, d \tnu+2\pi i \,m\, d \ta+\pi i\, m\,dm+m\, d\log \frac{\sin(\pi a -\pi m/2-\tnu/4)\sin(\pi a +\pi m/2+\tnu/4)}{e^{\pi i m-2\pi i\ta}\sin\pi m},
\end{equation}
one can check that this 1-form is an external logarithmic derivative of the function \(\widehat{\Upsilon}_S\):
\begin{align}
\omega_1=d\log \widehat{\Upsilon}_S, &&
\widehat{\Upsilon}_S=e^{i\,\ta\, \tnu}\frac{(2\pi)^m\widehat{G}(m)}{e^{i\pi m^2/2}}\frac{\widehat{G}\left(a-m/2+\tnu/(4\pi)\right)\widehat{G}\left(a-m/2-\tnu/(4\pi)\right)}{\widehat{G}\left(a+m/2+\tnu/(4\pi)\right)\widehat{G}\left(a+m/2-\tnu/(4\pi)\right)},\label{def:Upsilon1}
\end{align}
with $\widehat{G}$ defined in \eqref{def:Ghat}. Then substituting \eqref{def:Upsilon1} in \eqref{comp:conn-const}, we obtain the statement of the theorem.
\end{proof}

\section{Modular transformation of the tau function}\label{sec:SDual}
In this section, we detail the action of the modular transformation $\tau\mapsto-1/\tau$ on the tau function, justifying the name of ``modular'' connection constant from the previous section. We first note that the modular group $SL(2,\mathbb{Z})$ of the torus is a symmetry of the isomonodromic system, as can be easily verified. 
\begin{proposition}\label{prop:ModNAECM}
Let $\taut:=-1/\tau$.  The following properties hold:
\begin{align}\label{eq:modtraPQ}
\Qt:=Q(\taut; \ta,\nut) = -Q(\tau; a,\nu)/\tau, && \Pt:=P(\taut; \ta, \nut) = -\tau P(\tau; a, \nu)+ 2\pi i Q(\tau; a,\nu).
\end{align}
Furthermore, $\Qt$ solves the equation 
\begin{align}\label{modeq}
(2\pi i \partial_{\widetilde{\tau}})^2 \widetilde{Q} = m^2 \wp'\left(2 \widetilde{Q} \vert \widetilde{\tau} \right).
\end{align}
\end{proposition}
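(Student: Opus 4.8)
The statement is an \emph{equality} between two a priori different solutions of the non-autonomous Calogero--Moser system \eqref{eq:NAECM}: on one side $\Qt(\taut)=Q(\taut;\ta,\nut)$, the solution in the variable $\taut$ whose monodromy data, organised in the $A$-channel of the $\taut$-problem, is parametrized by $(\ta,\nut,m)$; on the other the modular transform $-Q(\tau;a,\nu)/\tau$ of the solution with data $(a,\nu,m)$. Simply verifying that the latter satisfies \eqref{modeq} is not sufficient, since that equation has many solutions; a particular solution is pinned down only by its monodromy data. The plan is therefore threefold: (i) check by direct computation that $-Q/\tau$ solves the transformed system and reproduces the claimed $\Pt$; (ii) show that the monodromy data of this transformed solution, read in the $A$-channel of the $\taut$-problem, is exactly $(\ta,\nut,m)$; and (iii) invoke the extended Riemann--Hilbert correspondence between $\cA_{1,1}$ and $\cM_{1,1}$, i.e.\ uniqueness of the isomonodromic solution for fixed modular parameter and monodromy data, to conclude that the two solutions coincide.

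For step (i) I would use the modular covariance of the Weierstrass function, $\wp(z/\tau\,|\,{-1/\tau})=\tau^2\,\wp(z\,|\,\tau)$ and hence $\wp'(z/\tau\,|\,{-1/\tau})=\tau^3\,\wp'(z\,|\,\tau)$, together with $\partial_{\taut}=\tau^2\partial_\tau$ coming from $\tau=-1/\taut$. Setting $\Qt:=-Q/\tau$ and substituting the original Hamilton equations $2\pi i\,\partial_\tau Q=P$ and $2\pi i\,\partial_\tau P=m^2\wp'(2Q|\tau)$, a one-line computation gives $2\pi i\,\partial_{\taut}\Qt=-\tau P+2\pi i Q$, which is precisely the claimed $\Pt$; applying $2\pi i\,\partial_{\taut}$ once more and using $\wp'(2\Qt|\taut)=\wp'(-2Q/\tau\,|\,{-1/\tau})=-\tau^3\wp'(2Q|\tau)$ (oddness of $\wp'$) yields $(2\pi i\,\partial_{\taut})^2\Qt=m^2\wp'(2\Qt\,|\,\taut)$, establishing \eqref{modeq} and showing that $(\Qt,\Pt)$ is an isomonodromic family for the $\taut$-system.

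Step (ii) is the crux. The rescaling $z\mapsto\zt=z/\tau$ that accompanies $\tau\mapsto\taut=-1/\tau$ implements the $S$-transformation on homology: the new $A$-cycle $\zt\mapsto\zt+1$ is the old $B$-cycle $z\mapsto z+\tau$, while the new $B$-cycle $\zt\mapsto\zt+\taut$ is the inverse of the old $A$-cycle $z\mapsto z-1$, and the puncture at $\zt=0$ is fixed so its local exponent $m$ is unchanged. Hence the $A$-monodromy of the transformed linear system equals the old $B$-monodromy $M_B$, whose trace is $2\cos 2\pi\ta$ by \eqref{rel:MBs}, identifying the new $A$-exponent as $\ta$; matching the remaining data reproduces exactly the relations of Proposition \ref{prop:SMonodromy}, so the transformed monodromy representation is the one parametrized by $(\ta,\nut,m)$. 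The same conclusion can be read off asymptotically: by \eqref{0asymp:Q}--\eqref{def:teta} the $\taut\to i\infty$ expansion of $-Q/\tau$ is $\ta\taut+\teta$ with $(\ta,\nut)$ fixed by \eqref{def:teta}, which matches the asymptotics of $Q(\taut;\ta,\nut)$.

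With (i) and (ii) established, step (iii) is immediate: $Q(\taut;\ta,\nut)$ and $-Q(\tau;a,\nu)/\tau$ are both solutions of the $\taut$-NAECM with identical monodromy data $(\ta,\nut,m)$, hence coincide by Riemann--Hilbert, giving the first equality in \eqref{eq:modtraPQ}; the second follows by applying $2\pi i\,\partial_{\taut}$ to this equality and using $2\pi i\,\partial_{\taut}Q(\taut;\ta,\nut)=P(\taut;\ta,\nut)$ on the left and the computation of (i) on the right. The main obstacle is step (ii): one must track the gauge prefactor accompanying $z\mapsto z/\tau$ carefully enough to confirm that the transformed system is again of Calogero--Moser form with position $\Qt$ and $B$-cycle twist $e^{2\pi i\Qt\sigma_3}$, so that the \emph{full} monodromy representation—not merely its traces—is genuinely identified with $(\ta,\nut,m)$. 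This is precisely what Proposition \ref{prop:SMonodromy} and the $B$-pants parametrix construction already supply, so I would cite that material rather than reprove it.
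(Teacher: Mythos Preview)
Your argument is correct, and in fact more complete than the paper's own proof. The paper establishes only your step (i): it quotes the modular transformation $\wp'\!\big(z/(c_3\tau+c_4)\,\big|\,(c_1\tau+c_2)/(c_3\tau+c_4)\big)=(c_3\tau+c_4)^{3}\wp'(z|\tau)$ from Manin and observes that the transformed pair $\big((c_3\tau+c_4)\,Q,\ (c_3\tau+c_4)^{-1}P+2\pi i\,c_3\,Q\big)$ again solves \eqref{eq:NAECM}, hence \eqref{modeq}. No separate argument is given there for why the transformed solution is precisely $Q(\taut;\ta,\nut)$ rather than some other member of the family; that identification is implicitly inherited from Section~\ref{sec:pants}, where $(\ta,\nut)$ are introduced via the asymptotics \eqref{0asymp:Q} of $-Q/\tau$ as $\taut\to i\infty$ and then matched to the $B$-cycle monodromy in Proposition~\ref{prop:SMonodromy}.

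Your steps (ii)--(iii) make this implicit logic explicit: the $S$-transformation exchanges the cycles, so the new $A$-exponent is $\ta$ and the remaining data is $\nut$ by Proposition~\ref{prop:SMonodromy}; uniqueness of the isomonodromic solution with prescribed monodromy (or, equivalently, prescribed leading asymptotics at $\taut\to i\infty$) then forces the equality \eqref{eq:modtraPQ}. This is a genuine improvement in rigour over the paper's terse treatment, which conflates ``solves the transformed equation'' with ``equals the particular solution labelled by $(\ta,\nut)$''. Your only caveat is well placed: the gauge twist $e^{-2\pi i z Q\sigma_3/\tau}$ accompanying $z\mapsto z/\tau$ (made explicit later in the paper as \eqref{def:Ltil}) is what guarantees that the transformed Lax matrix is again of Calogero--Moser type with position $\Qt$ and $B$-twist $e^{2\pi i\Qt\sigma_3}$, so that the full monodromy representation, not merely traces, is identified.
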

\begin{proof}
The derivative of Weierstrass $\wp$-function has the following transformation (see \cite{manin1996sixth} (1.12)):
\begin{align*}
    \wp'\left(\frac{z}{c_3\tau+c_4}\Big| \frac{c_1\tau+c_2}{c_3\tau+c_4} \right) = (c_3\tau+c_4)^3 \wp'\left(z\vert \tau \right),
\end{align*}
Therefore, if $Q(\tau)$ and $P(\tau)$ solve equation \eqref{eq:NAECM}, then the modified variables
\begin{align}
(c_3\tau+c_4)Q\left(\frac{c_1\tau+c_2}{c_3\tau+c_4} \right), && \frac{P\left(\frac{c_1\tau+c_2}{c_3\tau+c_4}\right)}{c_3\tau+c_4}+2\pi ic_3\,Q\left(\frac{c_1\tau+c_2}{c_3\tau+c_4}\right)
\end{align}
solve the equation \eqref{modeq}.
\end{proof}
We can study the modular transformation of the linear system in a similar manner. Indeed, we obtain the following result.
\begin{corollary}
The Lax matrix $L_{z}$ and the solution of the linear system transform as
\begin{align}\label{def:Ltil}
    \widetilde{L}_{z} = e^{2\pi i Q(\tau) z \sigma_3/\tau} L_{z} e^{-2\pi i Q(\tau) z \sigma_3/\tau} - 2\pi i Q \sigma_3/\tau, &&    \widetilde{Y} = C_S Y e^{-2\pi i Q(\tau) z \sigma_3/\tau}, 
\end{align}
 where $C_S$ determines the choice of normalisation.
\end{corollary}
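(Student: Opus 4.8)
The corollary follows from a short gauge-transformation computation, in the same spirit as the proof of Proposition~\ref{prop:ModNAECM} (the modular covariance of the linear system being the counterpart of the modular covariance of the Calogero--Moser equations). Write $D(z,\tau):=e^{-2\pi i Q(\tau)z\sigma_3/\tau}$, so that the proposed relation reads $\widetilde Y=C_S\,Y\,D$ with $C_S$ independent of $z$. For any such transform,
\begin{equation}
\widetilde Y^{-1}\partial_z\widetilde Y=D^{-1}\bigl(Y^{-1}\partial_z Y\bigr)D+D^{-1}\partial_z D .
\end{equation}
Since $D$ is a scalar exponential of $\sigma_3$, it commutes with $\sigma_3$ and satisfies $D^{-1}\partial_z D=-\tfrac{2\pi i Q}{\tau}\sigma_3$; using $Y^{-1}\partial_z Y=L_z$ this is exactly the stated $\widetilde L_z$ in \eqref{def:Ltil}. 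The constant $C_S$ cancels from the logarithmic derivative, which is why it is left free: it will be fixed a posteriori by matching $\widetilde Y$ with the $B$-pants parametrix constructed in Section~\ref{sec:pants}.

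What remains is to explain why $D$ is the geometrically correct conjugation, i.e., why $\widetilde Y$ and $\widetilde L_z$ really describe the linear system at the dual modular parameter $\widetilde\tau=-1/\tau$. The form of $D$ is forced by the quasi-periodicity $Y(z+\tau,\tau)=M_B Y(z,\tau)e^{2\pi i Q\sigma_3}$: a direct check gives $\widetilde Y(z+\tau)=M_B\widetilde Y(z)$ and $\widetilde Y(z+1)=M_A\widetilde Y(z)e^{-2\pi i Q\sigma_3/\tau}$ (both up to overall conjugation by $C_S$), and since $-Q/\tau=\widetilde Q$ by Proposition~\ref{prop:ModNAECM} the twist $e^{-2\pi i Q\sigma_3/\tau}=e^{2\pi i\widetilde Q\sigma_3}$ is precisely the one appropriate to the dual torus; in the rescaled coordinate $\widetilde z:=z/\tau$ these two relations are, up to the orientation of the homology cycles, the quasi-periodicities of a once-punctured torus linear system at $\widetilde\tau$. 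To identify $\widetilde L_z$ with the elliptic Calogero--Moser Lax matrix at $\widetilde\tau$, one feeds into the conjugation the modular transformation of the Lam\'e function $x(u,z\vert\tau)=\widetilde\tau\,e^{-2\pi i z u\widetilde\tau}x(u\widetilde\tau,z\widetilde\tau\vert\widetilde\tau)$ established just before Proposition~\ref{Prop:GBs}: with $u=\mp2Q$ the exponentials $e^{\pm4\pi i Q z/\tau}$ created by the conjugation cancel the $e^{-2\pi i z u\widetilde\tau}$, the diagonal shift $-\tfrac{2\pi i Q}{\tau}\sigma_3$ turns the entry $P$ into $-\widetilde P/\tau$ via $\widetilde P=-\tau P+2\pi i Q$, and, after the rescaling $z\mapsto\widetilde z$, $\widetilde L_z$ reduces to the Calogero--Moser Lax matrix at $\widetilde\tau$ up to an overall Jacobian factor, a $z$-independent conjugation absorbed into $C_S$, and the discrete identifications of monodromy coordinates and branch conventions recorded in Remark~\ref{rmk:minus1}.

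The first step is routine linear algebra; the real bookkeeping lies in the second, where one must keep careful track of the powers of $\tau$ and $\widetilde\tau$, of the branches of $(-1)^\bullet$ and the convention $\arg\tau\in(0,\pi)$ of Remark~\ref{rmk:minus1}, and of the orientations of the $A$- and $B$-cycles under $\widetilde z=z/\tau$. A fully symmetric statement would also record the transformation of $L_\tau$; this is obtained along the same lines but is slightly more delicate, since $\partial_{\widetilde\tau}=\tau^2\partial_\tau$ combines with the implicit $\tau$-dependence of $z=\tau\widetilde z$ to produce extra terms that recombine via the second relation in \eqref{eq:modtraPQ}. As only the $z$-part is needed in the sequel, we state it alone here.
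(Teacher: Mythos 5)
Your proof is correct and follows essentially the same route as the paper's: the substance in both cases is the modular transformation of the Lam\'e function $x(u,z|\tau)=\widetilde\tau\,e^{-2\pi i zu\widetilde\tau}x(u\widetilde\tau,z\widetilde\tau|\widetilde\tau)$ combined with the relations $\widetilde Q=-Q/\tau$ and $\widetilde P=-\tau P+2\pi iQ$ from Proposition~\ref{prop:ModNAECM}, which identify the conjugated-and-shifted Lax matrix with the dual one. Your reorganisation (trivial gauge computation first, then the identification, plus the quasi-periodicity check motivating the choice of $D$) is a presentational variant rather than a different argument.
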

\begin{proof}
    Let us begin with the linear system under the transformation $z\to z\, \widetilde{\tau}$:
    \begin{align}\label{eq:transf_linsys}
      \frac{1}{\widetilde{\tau}} Y^{-1}\partial_z Y = \left(\begin{array}{cc}
        \widetilde{P}(\tau) & m\,x(-2\widetilde{Q}(\tau),z\, \widetilde{\tau}) \\
        m\,x(2\widetilde{Q}(\tau),z\, \widetilde{\tau}) & -\widetilde{P}(\tau)
    \end{array}\right).
    \end{align}
    To get the above expression we look at the modular transformation of $x(\xi, z)$. We know that
    \begin{align}
        (- i \tau)^{1/2} \theta_1(z\vert \tau) = -i e^{i\pi z^2 \ttau} \theta_1(z\ttau\vert \ttau), &&  (- i \tau)^{1/2} \theta_1'(0\vert \tau) = - i \ttau \theta_1'(0\vert \ttau), \label{mod:theta}
    \end{align}
    and so,
    \begin{align}
         x(\xi\, \widetilde{\tau}, z\, \widetilde{\tau}) = \frac{\theta_{1}((z- \xi)\widetilde{\tau}) \theta_{1}'(0\vert \widetilde{\tau})}{\theta_{1}(z\, \widetilde{\tau}) \theta_{1}(\xi\, \widetilde{\tau})} = -\tau e^{-2\pi i z \xi /\tau} x(\xi, z). 
    \end{align}
    The RHS of \eqref{eq:transf_linsys}, with the expressions \eqref{eq:modtraPQ} reads
    \begin{align}
        \frac{1}{\widetilde{\tau}} Y^{-1}\partial_z Y &= \left(\begin{array}{cc}
        \widetilde{P}(\tau) & m\,x(-2\widetilde{Q}(\tau),z\, \widetilde{\tau}) \\
        m\,x(2\widetilde{Q}(\tau),z\, \widetilde{\tau}) & -\widetilde{P}(\tau)
    \end{array}\right) \nonumber\\
    &= \left(\begin{array}{cc}
        - \tau P + 2\pi i Q & -\tau e^{4\pi i Q(\tau)z/\tau} m\,x(-2{Q}(\tau),z) \\
      -\tau e^{-4\pi i Q(\tau) z/\tau} m\,x(2{Q}(\tau),z) &  \tau P - 2\pi i Q
    \end{array}\right) \nonumber\\
    & = -\tau \left(e^{2\pi i Q(\tau) z \sigma_3/\tau} L_{CM} e^{-2\pi i Q(\tau) z \sigma_3/\tau} - 2\pi i Q \sigma_3/\tau \right).
    \end{align}
    Therefore, the solution transforms as
    \begin{align}
        \widetilde{Y} = C_S Y e^{-2\pi i Q(\tau) z \sigma_3/\tau}. 
    \end{align}

\end{proof}
In the following theorem, we will relate precisely the normalised tau function $\T^B$ to the modular transformation of the tau function $\T^A$.
\begin{theorem}\label{thm:taumodular}
	The modular transformation $\widetilde{\T}^A$ of the tau function $\T^A$, depending on monodromy coordinates $\at,\,\nut,\,m$ is given by
	\begin{equation}\label{eq:SDualTau}
		\widetilde{\T}^A(\ttau;\ta,\nut,m)=\frac{1}{\Upsilon_S(a,\nu,m)}\,\T^A(\tau;a,\nu,m)\,e^{-2\pi i Q(\tau;a,\nu)^2/\tau}\,\tau^{m^2}  e^{- i\pi m^2}.
	\end{equation}
 where $\Upsilon_S$ is the modular connection constant \eqref{eq:ConnConst}. Here $\widetilde{\T}^A$ is given by
     \begin{equation}
    	\dd\log\widetilde{\T}^A:=2\Pt\dd_\cM \Qt+\Ht\frac{\dd\tau}{2\pi i}+\tr\left(m\sigma_3\dd_\cM \Gt\,\Gt^{-1} \right)-\omega_{3pt}^B,
    \end{equation}
    where the dual Hamiltonian is
\begin{equation}\label{def:dualHam}
	\widetilde{H}:=\widetilde{P}^2-m^2\wp(2\widetilde{Q}|\ttau)-2m^2\eta_1(\ttau),
\end{equation}
and the matrix $\widetilde{G}$ diagonalises $\widetilde{L}_z$ in \eqref{def:Ltil} near $z=0$
\begin{align}
    \res_{z=0}\widetilde{L}_{z}=-m\sigma_1=:\widetilde{G}^{-1}m\sigma_3 \widetilde{G}.
\end{align}
\end{theorem}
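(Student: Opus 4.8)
The strategy is to compare the total logarithmic derivative of the modular transform $\widetilde{\T}^A$ with that of $\T^A$, and integrate the difference. By definition, $\dd\log\widetilde{\T}^A = 2\Pt\,\dd_\cM\Qt + \Ht\,\dd\tau/(2\pi i) + \tr(m\sigma_3\,\dd_\cM\Gt\,\Gt^{-1}) - \omega_{3pt}^B$, so the first step is to rewrite each of these three ``$\omega$-type'' terms in terms of the original variables $(\tau;a,\nu,m)$ using Proposition \ref{prop:ModNAECM} and its Corollary: namely $\Qt = -Q/\tau$, $\Pt = -\tau P + 2\pi i Q$, $\taut = -1/\tau$, and $\widetilde{Y} = C_S Y e^{-2\pi i Qz\sigma_3/\tau}$, hence $\Gt$ is $G$ conjugated/multiplied by the corresponding factors evaluated at $z=0$. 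The plan is to show that under this change of variables
\begin{equation*}
2\Pt\,\dd_\cM\Qt + \Ht\,\frac{\dd\tau}{2\pi i} + \tr\!\left(m\sigma_3\,\dd_\cM\Gt\,\Gt^{-1}\right) = \omega + \dd\!\left(-\frac{2\pi i Q^2}{\tau} + m^2\log\tau - i\pi m^2\right),
\end{equation*}
where $\omega$ is the pants-independent one-form of \eqref{def:omega}. The $m^2\log\tau$ and $-i\pi m^2$ pieces should emerge from the $\eta_1(\taut)$ term in $\Ht$ versus $\eta_1(\tau)$ in $H$ (using the modular transformation $\eta_1(-1/\tau) = \tau^2\eta_1(\tau) - 2\pi i \tau$, up to the precise constant, which fixes the $\tau^{m^2}$ and $e^{-i\pi m^2}$ prefactors), together with a careful accounting of the $z$-dependent conjugation in $\Gt$ and the branch choice $\arg(-1/\tau)$ fixed in Remark \ref{rmk:minus1}.

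The key computational steps, in order, are: (i) compute $2\Pt\,\dd_\cM\Qt$ directly, getting $2P\,\dd_\cM Q$ plus exact terms and terms proportional to $\dd_\cM\tau = 0$; (ii) compute $\Ht\,\dd\tau/(2\pi i)$ using the Hamiltonian transformation implied by \eqref{def:Ham} and \eqref{def:dualHam} — here $\Ht\,\dd\taut = -\Ht\,\dd\tau/\tau^2$ and one must expand $\Pt^2 - m^2\wp(2\Qt|\taut)$ via the transformation law of $\wp$ and of $\eta_1$; (iii) verify that the $\tr(m\sigma_3\,\dd_\cM\Gt\,\Gt^{-1})$ term reproduces $m\,\dd_\cM g$ plus an exact piece coming from the $e^{-2\pi i Qz\sigma_3/\tau}$ factor evaluated near $z=0$ (this factor is regular at $z=0$, so only its value/derivative structure at the puncture contributes). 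Adding (i)–(iii) to $-\omega_{3pt}^B$ and comparing with $\dd\log\T^B = \omega - \omega_{3pt}^B$ from \eqref{def:Upsilon}, we get $\dd\log\widetilde{\T}^A = \dd\log\T^B + \dd(\text{exact})$, and then $\dd\log\T^B - \dd\log\T^A = -\dd\log\Upsilon_S$ by \eqref{def:Upsilon} and Theorem \ref{eq:thm:ConnCon}. Exponentiating and fixing the ($\tau$-independent, monodromy-independent) integration constant via a single limiting check (e.g. the leading $\tau\to i\infty$ behaviour of both sides, or consistency of the $c=1$ Kyiv formula \eqref{eq:KyivIntro}) yields \eqref{eq:SDualTau}.

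The main obstacle I anticipate is step (ii) combined with the branch-tracking: getting the prefactor $\tau^{m^2}e^{-i\pi m^2}$ exactly right requires the precise modular anomaly of $\eta_1$ (equivalently, of $\log\eta(\tau)$, since $\eta_1 = -2\pi i\,\partial_\tau\log\eta$) and consistent use of $\arg(-1/\tau) = i\pi/2$ from Remark \ref{rmk:minus1}; an error here shifts the answer by a constant power of $\tau$ or a phase. A secondary subtlety is that $\Gt$ is defined only up to left multiplication by a constant diagonal matrix (as are all the $G$'s in Propositions \ref{Prop:GAs}, \ref{Prop:GBs}), so one must check that the $\delta$-ambiguities cancel between $\tr(m\sigma_3\,\dd_\cM\Gt\,\Gt^{-1})$ and $\omega_{3pt}^B$ in exactly the same way they did in the proof of Theorem \ref{eq:thm:ConnCon} — this is what makes the combination well-defined and the appearance of $\Upsilon_S$ (rather than some other normalization) forced.
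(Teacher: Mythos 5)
Your proposal is correct and follows essentially the same route as the paper: transform $2\Pt\,\dd_\cM\Qt$, $\Ht$ (via the modular anomaly $\eta_1(\ttau)=\tau^2\eta_1-i\pi\tau$, which produces the $\dd\tau$-part of $\dd\log\tau^{m^2}$), and $\tr(m\sigma_3\,\dd_\cM\Gt\,\Gt^{-1})$ (via $\Gt=e^{-i\pi m\sigma_3}\tau^{m\sigma_3}G$ from Remark \ref{rmk:minus1}, which produces the $\dd m$-parts and the phase $e^{-i\pi m^2}$), then recognize the difference from $\dd\log\T^A$ as $\omega_{3pt}^A-\omega_{3pt}^B=-\dd\log\Upsilon_S$ plus exact terms and integrate. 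The intermediate identity you state for the sum of the three transformed terms matches the paper's computation exactly.
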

\begin{proof}
    To prove the theorem, we consider the logarithmic exterior derivatives of the tau function. Recall that
    \begin{equation}
    	\dd\log\T^A=2P\dd_\cM Q+H\frac{\dd\tau}{2\pi i}+\tr\left(m\sigma_3\dd_\cM G\,G^{-1} \right)-\omega_{3pt}^A.
    \end{equation}

Let us now study the form of the dual Hamiltonian \eqref{def:dualHam}. From equation \eqref{eq:modtraPQ}, we have
\begin{equation}
	2\Pt\dd_\cM \Qt=2\left(-\tau P+2\pi i Q \right)\dd_\cM\left(-Q/\tau \right)=2P\dd_\cM Q-\frac{4\pi i}{\tau}Q\dd_\cM Q.
\end{equation}
Moreover, the following identities of theta function for $\ttau = -1/\tau$ \cite[20.7]{NIST:DLMF}:
\begin{align*}
   (- i \tau)^{1/2} \theta_1(z\vert \tau) = -i e^{i\pi z^2 \ttau} \theta_1(z\ttau\vert \ttau), &&  (- i \tau)^{1/2} \theta_1'(0\vert \tau) = - i \ttau \theta_1'(0\vert \ttau),
\end{align*}
imply that
\begin{equation}\label{mod:eta}
	\eta_1(\ttau)=\tau^2\eta_1-i\pi\tau.
\end{equation}
Substituting the above expressions in \eqref{def:dualHam}, we have
\begin{equation}\label{eq:TrasfHam}
\begin{split}
	\Ht & = \Pt^2-m^2\wp(2\Qt|\taut)-2m^2\teta_1=(2\pi i Q-\tau P)^2-\tau^2 m^2\wp(2Q|\tau)-2m^2(\tau^2\eta_1-i\pi\tau) \\
	& =\tau^2 H-4\pi i\tau P Q-4\pi^2 Q^2+2\pi i m^2\tau.
\end{split}
\end{equation}

The remaining step to prove this theorem is to find the relation between $\Gt$ and $G$. To this end, consider the modular transformation of solution $Y$ given by \eqref{def:Ltil}: 
\begin{equation}
	\Yt(z)=Y(-z/\tau,\ttau;\at,\nut,m)=Y(z,\tau;a,\nu,m)e^{-2\pi i zQ\sigma_3/\tau}.
\end{equation}
Its local behaviour around the origin will then be
\begin{equation}
	\Yt(z)\simeq C_0\left(-\frac{z}{\tau} \right)^{m\sigma_3}\Gt=C_0z^{m\sigma_3}G e^{-2\pi izQ \sigma_3/\tau}.
\end{equation}
Using remark \ref{rmk:minus1}, we see that the branch for $(-1)^m$ has to be chosen so that $(-z/\tau)^{m\sigma_3}=e^{i\pi m\sigma_3}(z/\tau)^{m\sigma_3} $, implying that
\begin{equation}
\Gt= e^{-i \pi m\sigma_3}\tau^{m\sigma_3}G.
\end{equation}
Therefore,
\begin{equation}\label{transfTerm2}
\tr\left(m\sigma_3\dd_{\cM}\Gt\,\Gt^{-1} \right)=\tr\left(m\sigma_3 \dd_{\cM} G\,G^{-1}\right)+2m\log\tau\,\dd m - 2\pi i m \dd m.
\end{equation}
Putting together the equations \eqref{eq:TrasfHam}, \eqref{transfTerm2}, we find
\begin{equation}
\begin{split}
	\dd\log\Tt^A & = 2P\dd_{\cM}Q-\frac{4\pi i}{\tau}Q\dd_{\cM}Q+\left(\tau^2 H-4\pi i\tau P Q-4\pi^2 Q^2+2\pi i m^2\tau \right)\frac{\dd\tau}{2\pi i \tau^2} \\
	& +2m\log\tau\,\dd m- 2\pi i m \dd m-\omega_{3pt}^B \\
	& =\dd\log\T^A+\omega_{3pt}^A-\omega_{3pt}^B-2\pi i\dd_{\cM}\left(Q^2/\tau \right)-2\pi i\dd_\tau\left(Q^2/\tau\right)+\dd_\tau\log\tau^{m^2}\\
    &+\dd_{\cM}\log\tau^{m^2} + \dd_{\mathcal{M}} e^{- i\pi m^2} \\
	& =\dd\log\left(\Upsilon_S^{-1}\T^A e^{-2\pi i Q^2/\tau}\tau^{m^2}  e^{- i\pi m^2}\right).
\end{split}
\end{equation}
Integrating the above expression, we find \eqref{eq:SDualTau}, proving this theorem.
\end{proof}
Through Theorem \ref{thm:taumodular}, we obtain the following characterisation of the connection constant $\Upsilon_S$, as a ratio of different asymptotics of isomonodromic tau functions:
\begin{corollary}
    The modular connection constant $\Upsilon_S$ can be characterised in the following manner:
    \begin{equation}
        e^{i\pi m^2} \Upsilon_S=e^{i\at(\nut+\delta\nut)}\frac{\lim_{\tau\rightarrow 0}\left( \tau^{m^2}\T(\tau)\right)}{\lim_{\tau\rightarrow i\infty}\left(e^{-2\pi ia^2\tau}\T(\tau)\right)}.
    \end{equation}
\end{corollary}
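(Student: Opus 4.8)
The strategy is to obtain the corollary directly from the modular transformation identity \eqref{eq:SDualTau} of Theorem \ref{thm:taumodular} by examining its $\tau\to i0$ behaviour; throughout, $\T=\T^{A}$. First I would solve \eqref{eq:SDualTau} for $\T^{A}$ and multiply by $\tau^{m^{2}}$, obtaining
\begin{equation}
\tau^{m^{2}}\,\T^{A}(\tau;a,\nu,m)=e^{i\pi m^{2}}\,\Upsilon_{S}(a,\nu,m)\,\widetilde{\T}^{A}(\ttau;\at,\nut,m)\,e^{2\pi i Q(\tau;a,\nu)^{2}/\tau},\qquad \ttau=-1/\tau,
\end{equation}
so that the $\tau\to i0$ limit of the left-hand side is controlled by the product of the two explicit factors on the right, which I would analyse in turn.

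For the factor $e^{2\pi i Q^{2}/\tau}$ I would invoke the $\tau\to i0$ asymptotics of the Calogero--Moser variable, $Q(\tau)\sim\at-\teta\,\tau$, so that $Q^{2}/\tau=\at^{2}/\tau-2\at\teta+O(\tau)$ and hence $e^{2\pi i Q^{2}/\tau}=e^{-2\pi i\at^{2}\ttau}\,e^{-4\pi i\at\teta}\,(1+o(1))$. For the factor $\widetilde{\T}^{A}(\ttau;\at,\nut,m)$ I would use that, by its definition in Theorem \ref{thm:taumodular}, it obeys the same structural relation as $\T^{A}$ but with the $B$-pants one-form $\omega_{3pt}^{B}$, so by the genus-one Kyiv formula \eqref{eq:KyivIntro} applied in the dual ($B$-pants) channel it has the convergent expansion $\widetilde{\T}^{A}(\ttau;\at,\nut,m)=e^{2\pi i\at^{2}\ttau}\,(1+o(1))$ as $\ttau\to i\infty$, with the \emph{same} $\ttau$-independent leading constant as the expansion $\T^{A}(\tau;a,\nu,m)=e^{2\pi i a^{2}\tau}(1+o(1))$, $\tau\to i\infty$, that defines the denominator in the corollary. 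Multiplying the two factors, the exponentials $e^{\pm2\pi i\at^{2}\ttau}$ cancel and the remaining tails contribute only $1+o(1)$, leaving
\begin{equation}
\lim_{\tau\to i0}\bigl(\tau^{m^{2}}\T(\tau)\bigr)=e^{i\pi m^{2}}\,\Upsilon_{S}\,e^{-4\pi i\at\teta}\,\lim_{\tau\to i\infty}\bigl(e^{-2\pi i a^{2}\tau}\T(\tau)\bigr).
\end{equation}

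Finally I would rewrite $e^{-4\pi i\at\teta}$ using the definition of $\teta$ in \eqref{def:teta}: since $4\pi\teta=\nut-2i\log\frac{\Gamma(1-2\at)\Gamma(2\at-m)}{\Gamma(2\at)\Gamma(1-2\at-m)}=:\nut+\delta\nut$, we have $e^{-4\pi i\at\teta}=e^{-i\at(\nut+\delta\nut)}$, and dividing by $\lim_{\tau\to i\infty}(e^{-2\pi i a^{2}\tau}\T(\tau))$ gives precisely the claimed formula. I expect the genuine content --- and hence the main obstacle --- to be the verification that the $\ttau$-independent leading constants of $\widetilde{\T}^{A}$ at $\ttau\to i\infty$ and of $\T^{A}$ at $\tau\to i\infty$ coincide: this does not follow formally from Theorem \ref{thm:taumodular}, but must be traced to the explicit normalisation of the Fredholm-determinant / conformal-block representation of \cite{DMDG2020,desiraju2022painleve}; it is exactly at this point that the Gamma-function correction $\delta\nut$ is forced upon the prefactor. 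All the remaining steps are elementary asymptotics of $Q(\tau)$ and of the theta-function expansions already used above.
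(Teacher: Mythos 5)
Your proposal is correct and follows essentially the same route as the paper: invert \eqref{eq:SDualTau}, expand $e^{2\pi iQ^2/\tau}$ using $Q\sim\at-\teta\tau$ so that the $e^{\pm 2\pi i\at^2\taut}$ factors cancel against the leading behaviour of $\Tt$, and identify $4\pi\teta=\nut+\delta\nut$. The normalisation issue you flag as the main obstacle is exactly what the paper settles by explicitly extracting the dominant ($k=0$, $n=-1$) term of the dual partition function $Z_D$ from the Kyiv formula, which fixes the leading constant of $\T$ (and, in the dual channel, of $\Tt$) to be $1$.
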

\begin{proof}
 To prove this relation, let us consider first the $\tau\rightarrow i\infty$ behaviour of the tau function $\T^A$. This can be computed through the relation from 
 \cite[eq. (3.51)]{Bonelli2019}, proved in \cite{DMDG2020}:
\begin{equation}\label{eq:this1}
	\frac{1}{\eta(\tau)^2}\T(a,\nu,m,\tau)\theta_1(Q(\tau)+\rho|\tau)\theta_1(Q(\tau)-\rho|\tau)=Z_D(a,\nu,m,\tau,\rho),
\end{equation}
\begin{equation}\label{eq:ZDCB}
	Z_D(a,\nu,m,\tau,\rho)=\frac{1}{\eta(\tau)}\sum_{n,k \in \mathbb{Z}}e^{\frac{i n}{2}(\nu+\delta\nu)}e^{2\pi i\tau\left(k+\frac{n}{2}+\frac{1}{2} \right)^2}e^{4\pi i \left(k+\frac{n}{2}+\frac{1}{2} \right)\left(\rho+\frac{1}{2} \right)}\cB\left(a+\frac{n}{2},m,\tau \right).
\end{equation}
 As $\tau\rightarrow i\infty$, we have
 \begin{equation}
     \eta(\tau)\sim e^{\frac{i\pi \tau}{12}},\qquad Q\sim a\tau+\frac{\nu+\delta\nu}{4\pi} ,\qquad\theta_1(Q|\tau)\sim -ie^{\frac{i\pi\tau}{4}}e^{i\pi (Q\pm\rho)}. 
 \end{equation}
The function $\cB$ has the form
\begin{equation}
    \cB(a,m,\tau)=e^{2\pi i\tau a^2}\sum_{\ell=0}^{\infty}e^{2\pi i \ell\tau}\cB_\ell(a,m),\qquad \cB_0=1.
\end{equation}
Then the asymptotics of $Z_D$ is dominated by the term $k=0,\,n=-1$, and
\begin{equation}
    Z_D(a,\nu,m,\tau,\rho)\sim e^{-\frac{i\pi\tau}{12}}e^{-\frac{i}{2}(\nu+\delta\nu)}e^{2\pi i\tau a^2}e^{-2\pi i\tau a}.
\end{equation}
 Then,
 \begin{equation}
     \T^A\mathop{\sim}_{\tau\rightarrow i\infty}
     e^{2\pi i\tau a^2}.
 \end{equation}
 On the other hand, from \eqref{eq:SDualTau} we have
 \begin{gather}
     \T^A=\Upsilon_S e^{2\pi i Q^2/\tau}\tau^{-m^2}\Tt\mathop{\sim}_{\tau\rightarrow 0}
     =e^{2\pi i\at^2\taut}e^{2\pi i\left(\at-\frac{\nut+\delta\nut}{4\pi} \right)^2/\tau^{1-m^2}} \nonumber\\
     =\Upsilon_S 
     e^{-i\at(\nut+\delta\nut)}\tau^{-m^2}  e^{i\pi m^2}.
 \end{gather}
 The ratio of the two equations gives the proof.
\end{proof}

\section{The $c=1$ Virasoro modular kernel}\label{sec:c1}

In this section, we will use equation \eqref{eq:Uhat} for the connection constant to compute the $c=1$ modular kernel of Virasoro conformal blocks. 

\begin{lemma}\label{thm:UShift}
    The modular connection constant $\widehat{\Upsilon}_S$ in \eqref{eq:ConnConst} satisfies
    \begin{equation}\label{eq:UShift}
        \widehat{\Upsilon}_S\left(\at+\frac{n}{2},a \right)=e^{\frac{i\nu n}{2}}\widehat{\Upsilon}_S(\at,a).
    \end{equation}
\end{lemma}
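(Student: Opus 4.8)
The plan is to prove \eqref{eq:UShift} by direct inspection of the closed formula \eqref{eq:Uhat}. The key structural remark is that in \eqref{eq:Uhat} the dual momentum $\ta$ appears \emph{only} in the exponential prefactor $e^{i\ta\tnu}$: the ratio of Barnes functions is built entirely from the four arguments $a\pm m/2\pm\tnu/(4\pi)$, and the leftover factor $(2\pi)^m\widehat{G}(m)e^{-i\pi m^2/2}$ depends on $m$ alone. Hence it suffices to show that the shift $\ta\mapsto\ta+n/2$, performed at fixed $a$ and $m$ (the natural slice, also for the derivative $\partial\nu/\partial\ta$ entering \eqref{eq:c1modular}), leaves $\tnu$ invariant; then every Barnes factor is untouched, only the prefactor changes, and collecting that change gives the quasi-periodicity phase of \eqref{eq:UShift}.

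To establish the invariance of $\tnu$ I would use Proposition \ref{prop:SMonodromy}, or more efficiently its inverse \eqref{eq:nueqamnut}. In the trace identities \eqref{rel:MAs}, \eqref{rel:MBs}, \eqref{rel:MABs} the variable $\ta$ enters only through $\sin\pi(2\ta\pm m)$ and $\sin2\pi\ta$, each of which picks up the \emph{same} factor $(-1)^n$ under $\ta\mapsto\ta+n/2$; hence the ratios $\sin\pi(2\ta\pm m)/\sin 2\pi\ta$ — and with them $e^{4\pi i\ta}$ — are invariant. Tracking $\tr M_A$, $\tr M_B$ and $\tr M_AM_B$ one then finds that on the character variety this shift is realised by $\nu\mapsto\nu+2\pi n$ with $a$ and $\tnu$ unchanged. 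The quickest verification is to substitute $\ta\mapsto\ta+n/2$, $\nu\mapsto\nu+2\pi n$ directly into \eqref{rel:nut_as_aatnu}: the arguments of the two sines there move by $\pi n/2-\pi n/2=0$, so $\tnu$ is literally fixed, and the same substitution leaves \eqref{rel:a_as_atnu} invariant, confirming that $a$ does not move. Feeding this into \eqref{eq:Uhat} — all Barnes factors unchanged, only the prefactor $e^{i\ta\tnu}$ shifted — then yields \eqref{eq:UShift}.

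The step I expect to be the main obstacle is precisely the character-variety bookkeeping above: the relations of Proposition \ref{prop:SMonodromy} are multivalued and only well defined up to the discrete redundancies listed after \eqref{eq:shiftsymmetries}, so one must check that the simultaneous substitution $(\ta,\nu)\mapsto(\ta+n/2,\nu+2\pi n)$ is the analytic continuation of the identity at $n=0$, and in particular is not confused with the involution exchanging the two roots of the Fricke cubic \eqref{eq:Fricke}. The half-integer increment $n/2$ (as opposed to a full integer) is exactly what makes the arguments of the Barnes functions in \eqref{eq:Uhat} return to themselves, so that the identity holds uniformly in $n\in\mathbb{Z}$; once the branch is pinned down, the remainder of the proof is the one-line prefactor computation.
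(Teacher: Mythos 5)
Your proposal follows essentially the same route as the paper's proof: reduce the claim to showing that $\tnu$ is invariant under $\at\mapsto\at+\frac{n}{2}$ at fixed $(a,m)$, so that all the Barnes factors in \eqref{eq:Uhat} are untouched and only the prefactor $e^{i\at\tnu}$ produces the phase. The one step you assert rather than derive is that this shift is realised on the dual side by \emph{exactly} $\nu\mapsto\nu+2\pi n$ — not merely modulo $4\pi$, which is all the exponentiated trace relations can see, and the exact value matters because $\widehat{G}$ is not periodic. The paper pins this down from the $\nu\rightarrow+i\infty$ asymptotics of \eqref{eq:aeqatmnu}, which gives $\nu(\at+\tfrac12,a)=\nu(\at,a)+2\pi$ on the correct branch; once that input is in place, your observation that the sine arguments in \eqref{rel:nut_as_aatnu} move by $\pi n/2-\pi n/2=0$ (so no logarithmic cut is crossed and $\tnu$ is literally fixed) resolves the branch-tracking issue you correctly single out as the main obstacle, in essentially the same way the paper does.
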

\begin{proof}
    As is written in \eqref{eq:ConnConst}, $\widehat{\Upsilon}_S$ depends explicitly on $a,\at,\nut$ and $m$. Under the shift $\at\mapsto\at+\frac{n}{2}$ $m$ is left unchanged, so we have to understand what happens to $\nut=\nut(a,\at)$. In fact, we have to first understand the behaviour of $\nu(a,\at)$. Consider the asymptotics as $\nu\rightarrow+i\infty$ of \eqref{eq:aeqatmnu}:
    \begin{equation}
        e^{4\pi ia}\mathop{\approx}_{\nu\rightarrow+i\infty}\frac{e^{i\pi m-\frac{i\nu}{2}}-e^{-2\pi i\at}}{e^{-i\pi m-\frac{i\nu}{2}}-e^{-2\pi i\at}}.
    \end{equation}
    From this equation we see that
    \begin{equation}
        \nu\left(\at+\frac{1}{2},a\right)=\nu(\at,a)+2\pi.
    \end{equation}
    To understand what happens to $\nut$, consider equation \eqref{eq:nueqamnut}, that we reproduce here for convenience
    \begin{equation}
\nu=4\pi \ta-2i\log \frac{\sin(\pi a-\pi m/2-\tnu/4)}{\sin(\pi a+\pi m/2-\tnu/4)}.
\end{equation}
From the transformation of $\nu$ derived above, we see that the second term in the equation above must be invariant. This means that $\nut$ must be invariant. This is because in order for the ratio of sine functions to be unchanged, one needs $\nut$ to be shifted by integer multiples of $4\pi$. However, this would amount to a closed loop around either a zero of the pole of the argument of the logarithm, so that the logarithm would undergo monodromy from crossing of a log-cut. Then \eqref{eq:UShift} follows from the exponential factor $e^{i\at\nut}$ in \eqref{eq:ConnConst}.
\end{proof}

Recall the relation between the tau function $\T$ and $c=1$ Virasoro conformal blocks $\cB$ \eqref{eq:this1}. The same relation can be derived in an identical manner for the isomonodromic problem and conformal blocks in the S-dual channel $\tau\rightarrow-1/\tau$:
\begin{equation}
	\frac{1}{\eta(\taut)^2}\T(\at,\teta,m,\taut)\theta_1(\Qt(\taut)+\rhot|\taut)\theta_1(\Qt(\taut)-\rhot|\taut)=Z_D(\at,\nut,m,\taut,\rhot),
\end{equation}
\begin{equation}
	Z_D(\at,\nut,m,\taut,\rhot)=\frac{1}{\eta(\taut)}\sum_{n,k}e^{\frac{i\nut n}{2}}e^{2\pi i\taut\left(k+\frac{n}{2}+\frac{1}{2} \right)^2}e^{4\pi i \left(k+\frac{n}{2}+\frac{1}{2} \right)\left(\rhot+\frac{1}{2} \right)}\cB\left(\at+\frac{n}{2},m,\taut \right).
\end{equation}
The conformal blocks can then be obtained by simply inverting the Fourier series:
\begin{align}
	\cB(a,m,\tau)&=\int_0^1\dd\rho\int_0^{4\pi}\frac{\dd\nu}{4\pi}\,\eta(\tau)e^{-i \pi \tau/2} e^{-2\pi i (\rho +1/2)}Z_D(a,\nu,m,\tau,\rho), \label{eq:this2}\\
 \cB(\at,m,\taut)&=\int_0^1\dd\rhot\int_0^{4\pi}\frac{\dd\nut}{4\pi}\, \eta(\taut) e^{-i \pi \taut/2} e^{-2\pi i (\rhot +1/2)} Z_D(\at,\nut,m,\taut,\rhot)
\end{align}

We will now use the connection constant \eqref{def:Upsilon1}, together with the modular transformation of the tau function \eqref{eq:SDualTau}, to obtain the $c=1$ Virasoro modular kernel $S(a,\at)$, defined by
\begin{equation}\label{eq:Skerneldef}
	\cB(a,m,\tau)=(e^{-i\pi}\tau)^{-\Delta(m)}\int_{-\infty+i\Lambda}^{\infty+i\Lambda}\dd\at\, S(a,\at)\cB(\at,m,\taut).
\end{equation}
\begin{theorem}\label{thm:c1ker}
    The $c=1$ Virasoro modular kernel is given by
    \begin{equation}\label{eq:c1modular}
        S(a,\at)=\frac{\sqrt{2}}{4\pi}\frac{\partial\nu}{\partial\at}\,\widehat{\Upsilon}_S(a,\at),
    \end{equation}
    where $\widehat{\Upsilon}_S(a,\at)$ is the modular connection constant \eqref{eq:ConnConst}.
\end{theorem}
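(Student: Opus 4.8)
The plan is to feed the modular transformation of the tau function (Theorem~\ref{thm:taumodular}) through the Kyiv-type dictionary~\eqref{eq:this1}--\eqref{eq:this2} relating $\T$ to the $c=1$ blocks $\cB$, and then to perform a Fourier/unfolding manipulation that turns the resulting sum over blocks into the single integral transform~\eqref{eq:Skerneldef} defining $S(a,\at)$.

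First I would start from the inversion formula~\eqref{eq:this2}, which writes $\cB(a,m,\tau)$ as an integral over $(\rho,\nu)\in[0,1]\times[0,4\pi]$ of $Z_D(a,\nu,m,\tau,\rho)=\eta(\tau)^{-2}\,\T(a,\nu,m,\tau)\,\theta_1(Q+\rho|\tau)\,\theta_1(Q-\rho|\tau)$, and substitute the modular transforms of its three building blocks: equation~\eqref{eq:SDualTau} for $\T$, the Jacobi identities~\eqref{mod:theta} for $\theta_1$ and the induced $\eta(\tau)=(-i\tau)^{-1/2}\eta(\taut)$, together with $\Qt=-Q/\tau$. With the natural choice $\rhot:=-\rho/\tau$ one checks $\theta_1(Q\pm\rho|\tau)=\frac{-i}{\sqrt{-i\tau}}\,e^{-i\pi(Q\pm\rho)^2/\tau}\,\theta_1(\Qt\pm\rhot|\taut)$, so that all $\tau$-dependent Gaussian factors (the $Q^2/\tau$ exponential from~\eqref{eq:SDualTau} and the $e^{-i\pi[(Q+\rho)^2+(Q-\rho)^2]/\tau}$ coming from the two thetas) collapse to $e^{-2\pi i\rho^2/\tau}$. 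Recognising on the right the dual-channel combination $\eta(\taut)^{-2}\,\widetilde{\T}^A(\taut;\at,\nut,m)\,\theta_1(\Qt+\rhot|\taut)\theta_1(\Qt-\rhot|\taut)$ as $Z_D(\at,\nut,m,\taut,\rhot)$ via the $S$-dual version of~\eqref{eq:this1} recorded above, I obtain a functional equation
\[
Z_D(a,\nu,m,\tau,\rho)=-\,\Upsilon_S(a,\nu,m)\,(e^{-i\pi}\tau)^{-\Delta(m)}\,e^{-2\pi i\rho^{2}/\tau}\,Z_D(\at,\nut,m,\taut,\rhot),
\]
in which $\tau^{m^{2}}e^{-i\pi m^{2}}=(e^{-i\pi}\tau)^{-\Delta(m)}$ (with $\Delta(m)=m^{2}$) is precisely the tensorial prefactor of~\eqref{eq:Skerneldef}.

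Next I would expand $Z_D(\at,\nut,m,\taut,\rhot)$ back into blocks $\cB(\at+\tfrac{n}{2},m,\taut)$ via~\eqref{eq:ZDCB}, substitute into~\eqref{eq:this2}, and change variables $(\rho,\nu)\mapsto(\rhot,\at)$ at fixed $a$: the $\nu$-Jacobian is $\partial\nu/\partial\at$, the factor appearing in the statement, while the $\rhot$-integral against the $k$-sum (a Jacobi theta function), the surviving Gaussian and the $\eta$-ratios should reduce, by a Jacobi-type identity, to a pure number, which with the $d\nu/4\pi$ accounts for the constant $\sqrt{2}/(4\pi)$. The decisive step is the unfolding of the $\at$-contour: the range $\nu\in[0,4\pi]$ corresponds, by the periodicity $\nu(a,\at+\tfrac{1}{2})=\nu(a,\at)+2\pi$ established in the proof of Lemma~\ref{thm:UShift}, to $\at$ running over a unit interval, and the sum over $n$ then tiles the full contour $(-\infty+i\Lambda,\infty+i\Lambda)$; the quasi-periodicity $\widehat{\Upsilon}_S(a,\at+\tfrac{n}{2})=e^{i\nu n/2}\widehat{\Upsilon}_S(a,\at)$ of Lemma~\ref{thm:UShift}, the invariance of $\nut$ under $\at\mapsto\at+\tfrac{1}{2}$, and the factor $e^{i\nut n/2}$ in~\eqref{eq:ZDCB} conspire so that the integrand in each tile is one and the same function of $\at$, so that a well-defined kernel results. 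The Barnes-$G$ prefactor of $\Upsilon_S$ in~\eqref{eq:ConnConst} factorises as $f(a,m)/f(\at,m)$ with $f(x,m)=\tfrac{G(1+2x)G(1-2x)}{G(1-m+2x)G(1-m-2x)}$, i.e.\ it is a channel-independent rescaling of the blocks and drops out of $S$ in the normalisation for which~\eqref{eq:this1} and~\eqref{eq:Skerneldef} hold, leaving only $\widehat{\Upsilon}_S$. Assembling the pieces gives $S(a,\at)=\tfrac{\sqrt{2}}{4\pi}\,\tfrac{\partial\nu}{\partial\at}\,\widehat{\Upsilon}_S(a,\at)$.

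The hard part will be the unfolding/resummation: one must verify that the sums over $n$ and $k$, the quasi-periodicities of $\widehat{\Upsilon}_S$ and of the exponential factors, and the change-of-variables Jacobians really do conspire so that the tiled integrand depends on $\at$ alone (hence defines a genuine kernel), and one must keep careful track of the branch choices (see Remark~\ref{rmk:minus1}, which fixes $\arg(-1/\tau)$) and of the $\pm1$ and $\sqrt{-i\tau}$ factors, so that the overall constant $\sqrt{2}/(4\pi)$ and the disappearance of the Barnes-$G$ dressing both come out as claimed.
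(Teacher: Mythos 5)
Your proposal follows essentially the same route as the paper's proof: transform the dual partition function $Z_D$ using \eqref{eq:SDualTau} together with the modular transformations of $\theta_1$ and $\eta$, invert the Fourier series, evaluate the Gaussian $\rho$-integral by unfolding the $k$-sum, change variables $\nu\mapsto\at$ to produce the Jacobian $\partial\nu/\partial\at$, and use the quasi-periodicity of Lemma \ref{thm:UShift} to assemble the $n$-sum into the full contour integral of \eqref{eq:Skerneldef}. The only points left implicit (the $\sqrt{-i\tau/2}$ from the Gaussian, the factor of $2$ from the half-period tiling, and the replacement of $\Upsilon_S$ by $\widehat{\Upsilon}_S$) are exactly the ones the paper works out, and your treatment of the Barnes-$G$ prefactor is consistent with what the paper does.
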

\begin{proof}
First, let us use the modular transformation of $\theta_1$ and of the Dedekind eta function $\eta(\taut)=\sqrt{-i\tau}\,\eta(\tau)$ in \eqref{mod:theta}, \eqref{mod:eta} to write the relation between dual partition functions:
\begin{gather}
	\Zt_D=\frac{\theta_1(\Qt+\rhot|\taut)\theta_1(\Qt-\rhot|\taut)}{\eta(\taut)^2}\Tt=-e^{2\pi i(Q^2+\rho^2)/\tau}\frac{\theta_1(Q+\rho|\tau)\theta_1(Q-\rho|\tau)}{\eta(\tau)^2}\Tt \nonumber \\
	\mathop{=}^{\eqref{eq:SDualTau}} -\frac{1}{\widehat{\Upsilon}_S}\T e^{2\pi i\rho^2/\tau}e^{- i\pi m^2}\tau^{m^2}\frac{\theta_1(Q+\rho|\tau)\theta_1(Q-\rho|\tau)}{\eta(\tau)^2}=-\frac{1}{\widehat{\Upsilon}_S}e^{- i\pi m^2}\tau^{m^2}e^{2\pi i\rho^2/\tau}Z_D.
\end{gather}
Now we consider this relation in terms of the conformal block $\cB$, with a shifted contour of integration in $\nu$ that gives it a large imaginary part:
\begin{align}
	\cB(a,m,\tau)  &=\int_0^1\dd\rho\int_{0+i\Lambda}^{4\pi+i\Lambda}\frac{\dd\nu}{4\pi}\,\eta(\tau)e^{-i \pi \tau/2} e^{-2\pi i (\rho +1/2)}Z_D(a,\nu,m,\tau,\rho) \nonumber\\
 &=-e^{i\pi m^2}\tau^{-m^2}\eta(\tau)e^{-i \pi \tau/2}\int_0^1\dd\rho\int_{0+i\Lambda}^{4\pi+i\Lambda}\frac{\dd\nu}{4\pi}\, e^{-2\pi i\rho^2/\tau}\widehat{\Upsilon}_S(\at,\nu) e^{-2\pi i (\rho +1/2)}\widetilde{Z}_D(a,\nu,m,\tau,\rho).
	\end{align}
As the next step, we use the expression
\begin{gather}
	e^{-2\pi i\rho^2/\tau}e^{4\pi i\left(k+\frac{n}{2}+\frac{1}{2} \right)\left(-\frac{\rho}{\tau}+\frac{1}{2} \right)}=e^{-\frac{2\pi i}{\tau}\left(\rho+k+\frac{n}{2}+\frac{1}{2} \right)^2}e^{\frac{2\pi i}{\tau}\left(k+\frac{n}{2}+\frac{1}{2} \right)^2}e^{2\pi ik+i\pi n+i\pi},
\end{gather}
in the definition of $Z_D$ \eqref{eq:ZDCB} to obtain
\begin{gather}
	\cB(a,m,\tau)  =-\frac{e^{i\pi m^2}\tau^{-m^2}\eta(\tau) e^{-i\pi \tau/2}}{\eta(\taut)}\int_0^1\dd\rho\int_{0+i\Lambda}^{4\pi+i\Lambda}\frac{d\nu}{4\pi}\,\widehat{\Upsilon}_S(\at,\nu)\sum_{n,k\in\mathbb{Z}}(-1)^n e^{\frac{i\nut n}{2}}e^{-\frac{2\pi i}{\tau}\left(k+\frac{n}{2}+\frac{1}{2} \right)^2}\nonumber \\
 \times \quad e^{-\frac{2\pi i}{\tau}\left(\rho+k+\frac{n}{2}+\frac{1}{2} \right)^2} e^{-2\pi i\rho} e^{\frac{2\pi i}{\tau}\left(k+\frac{n}{2}+\frac{1}{2} \right)^2}\cB\left(\at+\frac{n}{2},m,\taut \right) \nonumber\\
 =-\frac{e^{i\pi m^2}\tau^{-m^2}e^{-i\pi\tau/2}}{\sqrt{-i\tau}}\int_{0+i\Lambda}^{4\pi+i\Lambda} \frac{\dd\nu}{4\pi}\,\widehat{\Upsilon}_S(\at,\nu)\sum_{n}(-1)^ne^{\frac{i\nut n}{2}}\cB\left(\at+\frac{n}{2},m,\taut \right)\nonumber \\
 \times\left[\sum_k\int_0^1\dd\rho\, e^{-\frac{2\pi i}{\tau}(\rho+k+\frac{n}{2}+\frac{1}{2})^2} e^{-2\pi i \rho} \right].
\end{gather}
We now simplify the following term in the expression above
\begin{align}
    \sum_{k\in\mathbb{Z}}\int_0^1\dd\rho\, e^{-\frac{2\pi i}{\tau}(\rho+k+\frac{n}{2}+\frac{1}{2})^2} e^{-2\pi i \rho}&= \sum_{k\in\mathbb{Z}}\int_0^1\dd\rho\, e^{-\frac{2\pi i}{\tau}(\rho+\frac{n}{2}+\frac{1}{2}+\frac{\tau}{2})^2} e^{i\pi \tau/2} e^{2\pi i (k + \frac{n}{2} + \frac{1}{2})}\nonumber\\
    &=e^{i\pi \tau/2} e^{i\pi (n+1)}\int_{-\infty}^\infty \dd \rho\, e^{-\frac{2\pi i}{\tau}(\rho+\frac{n}{2}+\frac{1}{2}+\frac{\tau}{2})^2} \nonumber\\
    &=e^{i\pi \tau/2} e^{i\pi (n+1)}\sqrt{\frac{-i\tau}{2}},
\end{align}
along with the identity for $\eta(\tau)$
to find
\begin{gather}
   \cB(a,m,\tau)=\frac{e^{i\pi m^2}\tau^{-m^2}}{\sqrt{2}}\int_{0+i\Lambda}^{4\pi+i\Lambda}\frac{\dd\nu}{4\pi}\,\widehat{\Upsilon}_S(\at,\nu)\sum_n e^{\frac{in\nut}2} \cB\left(\at+\frac{n}{2},m,\taut \right) \nonumber \\
   =\frac{e^{i\pi m^2}\tau^{-m^2}}{\sqrt{2}}\int_{0+i\Lambda'}^{1+i\Lambda'}\frac{\dd\at}{4\pi}\,\frac{\partial\nu}{\partial\at}\,\widehat{\Upsilon}_S\left(\at,\nu(a,\at)\right)\sum_n e^{\frac{in\nut}2} \cB\left(\at+\frac{n}{2},m,\taut \right),
\end{gather}
where $\Lambda':=\Lambda/4\pi$. To see that the interval $[0+i\Lambda,4\pi+i\Lambda]$ in $\nu$ gets mapped to the interval $[0+i\Lambda',1+i\Lambda']$ in $\at$, we used \eqref{eq:aeqatmnu} and the second equation of \eqref{eq:nueqamnut} when $\at,\,\nu\rightarrow i\infty$: a shift of $\nu$ by $2\pi$ becomes a shift in $\at$ by $1/2$, while $\nut$ is periodic.  Using Lemma \ref{thm:UShift}, and switching summation with integration, we obtain
\begin{gather}
    \cB(a,m,\tau)=\frac{e^{i\pi m^2}\tau^{-m^2}}{\sqrt{2}}\int_{0+i\Lambda'}^{1+i\Lambda'}\frac{\dd\at}{4\pi}\,\frac{\partial\nu}{\partial\at}\,\widehat{\Upsilon}_S\left(\at,\nu(a,\at)\right)\sum_n e^{\frac{in\nut}2} \cB\left(\at+\frac{n}{2},m,\taut \right) \nonumber \\
    = \frac{e^{i\pi m^2}\tau^{-m^2}}{\sqrt{2}}\sum_{n\in\mathbb{Z}}\int_{0+i\Lambda'}^{1+i\Lambda'}\frac{\dd\at}{4\pi}\,\frac{\partial\nu}{\partial\at}\,e^{-\frac{in\nut}{2}}\widehat{\Upsilon}_S\left(\at+\frac{n}{2},\nu(a,\at)\right) e^{\frac{in\nut}2} \cB\left(\at+\frac{n}{2},m,\taut \right).
\end{gather}
Note that $\partial_{\at} \nu\left(\at+\frac{n}{2},a \right)=\partial_{\at} \nu\left(\at,a \right)$. Just as we did before, we can combine the summation and integration into a single integral\footnote{The shift is twice smaller than the integration domain, so each point is counted twice, which produces an extra factor of 2.}. Shifting the contour along the imaginary axis in order for the integral to be well-behaved, we obtain
\begin{equation}
    \cB(a,m,\tau)=\sqrt{2}(e^{-i\pi}\tau)^{-m^2}\int_{-\infty+i\Lambda'}^{\infty+i\Lambda'}\frac{\dd\at}{4\pi}\,\frac{\partial\nu}{\partial\at} \,\widehat{\Upsilon}_S(a,\at)\cB(\at,m,\taut),
\end{equation}
from which we can simply read off the modular kernel:
\begin{equation}
    S(a,\at)=\frac{\sqrt{2}}{4\pi}\frac{\partial\nu}{\partial\at} \,\widehat{\Upsilon}_S(a,\at),
\end{equation}
with the derivative of $\nu$ obtained simply from differentiating the two sides of \eqref{rel:MBs}.
\end{proof}
\section{Semiclassical modular kernel and complex Chern-Simons amplitudes}\label{sec:cinfty}

In this section, we will relate the $c=1$ modular kernel, that we computed in the previous section, to the $c\rightarrow\infty$ modular kernel, that we will now compute, and we will show that they represent two different generating functions of coordinate changes on the character variety. At the end of the section, we will give an interpretation of this result in terms of semiclassical $SL(2,\mathbb{C})$ Chern-Simons amplitudes. A completely analogous derivation can be done for the fusion kernel of four-point conformal blocks, and it would relate the $c\rightarrow\infty$ fusion kernel to the Painlev\'e VI connection constant computed in \cite{Iorgov2013,ILP2016}.

The Virasoro modular kernel for central charge $c\in\mathbb{C}\setminus (-\infty,1] $ was computed for the first time in \cite{Teschner2003} (see also \cite{Nemkov2015} for an alternative representation valid in the same region). Here we will consider the following expression, see \cite[eq. (3.88)]{Eberhardt2023}:
\begin{gather}\label{eq:PT}
	\mathcal{S}_{\sigma,\sigmat}(\mu)=
	\int\frac{\dd\zeta}{i}e^{4\pi i\zeta\sigma}\prod_{\e,\e'=\pm}S_b\left(\frac{Q}{4}+\frac{\mu}{2}+\e \sigmat +\e'\zeta\right).
\end{gather}
Here $b$ is the ``Liouville coupling constant'', related to the Virasoro central charge $c$ via
\begin{equation}
    c=1+6Q^2,\qquad Q:=b+b^{-1},
\end{equation}
and $S_b$ is the double-sine function,
\begin{equation}
    S_b:=\frac{\Gamma_b(z)}{\Gamma_b(Q-z)},
\end{equation}
defined through the double-gamma function $\Gamma_b$
\begin{equation}
    \log\Gamma_b(z):=\int_0^\infty\frac{\dd t}{t}\left(\frac{e^{\frac{t}{2}(Q-2z)}-1}{4\sinh\left(\frac{b t}{2}\right)\sinh\left(\frac{t}{2b}\right)}-\frac{1}{8}(Q-2z)^2e^{-t}-\frac{Q-2z}{2t} \right),\qquad \Re z>0.
\end{equation}
For more details on these functions we refer to the very thorough \cite[Appendix B]{Eberhardt2023}.
\begin{remark}
    To compare with the notation of loc. cit. set $\mu\equiv p_0,\,\sigma\equiv p_1,\,\sigmat\equiv p_2$. For simplicity, we have also stripped the kernel of an extra factor 
\begin{align}
\frac{\sqrt{2}}{S_b(2\sigmat)S_b(-2\sigmat)}\left(\frac{e^{\frac{i\pi}{2}\left(\frac{Q}{2}+\mu\right)\left(\frac{Q}{2}-\mu \right)}}{S_b\left(\frac{Q}{2}+\mu \right)}\frac{\Gamma_b(Q+2\sigma)\Gamma_b(Q-2\sigma)\Gamma_b\left(\frac{Q}{2}-\mu+2\sigmat \right)\Gamma_b\left(\frac{Q}{2}-\mu-2\sigmat \right)}{\Gamma_b\left(\frac{Q}{2}-\mu+2\sigma \right)\Gamma_b\left(\frac{Q}{2}-\mu-2\sigma \right)\Gamma_b(Q+2\sigmat)\Gamma_b(Q-2\sigmat)}\right). \label{extra-fac}
\end{align}
The semiclassical limit of the expression in round brackets above essentially reproduces the first line of \eqref{eq:ConnConst} with an extra $m$-dependent factor, while the first term has regular semiclassical limit.
\end{remark}
\begin{theorem}\label{thm:KernelGen}
The semiclassical $b\rightarrow0$ asymptotics of the $c>1$ Virasoro modular kernel $\mathcal{S}_{\sigma,\sigmat}$ from \eqref{eq:PT}, under further rescalings
\begin{equation}\label{eq:param_relabel}
\sigma=\frac{a}{b},\qquad\sigmat=\frac{\at}{b},\qquad \mu=\frac{m}{b}-\frac{Q}{2}, \qquad 4\pi\zeta=\frac{\nu}{b},
\end{equation}
is given by
    \begin{equation}
        \lim_{b\rightarrow 0}b^2 \log \mathcal{S}_{\sigma,\sigmat}(\mu)=-\mathcal{G},
    \end{equation}
    where $\mathcal{G}$ is the generating function defined in \eqref{def:G}, \eqref{def:G0}.
\end{theorem}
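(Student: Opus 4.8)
The plan is a stationary-phase (WKB) analysis of the $\zeta$-integral in \eqref{eq:PT} in the limit $b\to0$. First I would change the integration variable to $\nu$ via $4\pi\zeta b=\nu$, so that $\dd\zeta=\dd\nu/(4\pi b)$, and insert the scalings \eqref{eq:param_relabel}: the argument of each double-sine becomes $\tfrac{Q}{4}+\tfrac{\mu}{2}+\e\sigmat+\e'\zeta=\tfrac1b\bigl(\tfrac m2+\e\at+\e'\tfrac{\nu}{4\pi}\bigr)$, of order $b^{-1}$, while the prefactor is $e^{4\pi i\zeta\sigma}=e^{i\nu a/b^{2}}$. Hence the integrand takes the form $\exp\!\bigl(b^{-2}\Phi(\nu;a,\at,m)+O(\log b)\bigr)$, and $\lim_{b\to0}b^{2}\log\mathcal{S}_{\sigma,\sigmat}(\mu)$ will equal the critical value $\Phi(\nu_{\ast})$ of the phase, provided the $\nu$-contour can be deformed so as to pass through $\nu_{\ast}$ and the phase has a non-degenerate critical point there; the $\dd\nu$-measure and the Gaussian fluctuation around the saddle only contribute at order $b^{2}\log b\to0$. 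Both the contour deformation and the non-degeneracy I would check directly from the explicit expressions obtained below.

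The phase $\Phi$ is read off from the semiclassical asymptotics of the double-sine function: as $b\to0$, in an appropriate strip,
\begin{equation}
b^{2}\log S_{b}\!\left(\frac{y}{b}\right)\ \longrightarrow\ \ell(y),\qquad \ell'(y)=\log\bigl(2\sin\pi y\bigr),\qquad \ell(y)=\frac{1}{2\pi i}\,\Li_2\!\bigl(e^{-2\pi i y}\bigr)+\frac{i\pi}{2}\,y^{2}+\text{(lower order in }y),
\end{equation}
where $\ell'$ also follows directly from the shift relation $S_{b}(x+b)=2\sin(\pi b x)\,S_{b}(x)$. Summing over $\e,\e'=\pm$ and restoring the linear term gives $\Phi(\nu)=i\nu a+\sum_{\e,\e'=\pm}\ell\!\bigl(\tfrac m2+\e\at+\e'\tfrac{\nu}{4\pi}\bigr)$, so the saddle condition $\partial_{\nu}\Phi=0$ reads
\begin{equation}
e^{-4\pi i a}=\frac{\sin\pi\!\bigl(\tfrac m2+\at+\tfrac{\nu}{4\pi}\bigr)\,\sin\pi\!\bigl(\tfrac m2-\at+\tfrac{\nu}{4\pi}\bigr)}{\sin\pi\!\bigl(\tfrac m2+\at-\tfrac{\nu}{4\pi}\bigr)\,\sin\pi\!\bigl(\tfrac m2-\at-\tfrac{\nu}{4\pi}\bigr)},
\end{equation}
which is exactly relation \eqref{rel:a_as_atnu} of Proposition \ref{prop:SMonodromy}. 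Thus the saddle point $\nu_{\ast}=\nu_{\ast}(a,\at,m)$ is precisely the monodromy coordinate $\nu$, and $W(a,\at,m):=\Phi(\nu_{\ast})$ is a well-defined function on the character variety.

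Finally I would identify $W$ with $-\mathcal{G}$. By the envelope theorem $\partial_{a}W=i\nu_{\ast}$, while $\partial_{\at}W=\sum_{\e,\e'}\e\,\ell'(\cdots)|_{\nu_{\ast}}$ and $\partial_{m}W=\tfrac12\sum_{\e,\e'}\ell'(\cdots)|_{\nu_{\ast}}$ are explicit ratios of sines; using the remaining trace identities of Proposition \ref{prop:SMonodromy} (notably \eqref{rel:nut_as_aatnu} and \eqref{eq:nueqamnut}) these collapse to $\nut$ and to an explicit logarithmic $m$-derivative, so that $\dd W$ equals minus the canonical $1$-form whose primitive \emph{defines} $\mathcal{G}$ in \eqref{def:G}, \eqref{def:G0} (cf.\ Proposition \ref{prop:GenFn}). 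Equivalently, inserting the $\Li_2$-form of $\ell$ and integrating $\partial_{a}W=i\nu_{\ast}$ by parts against \eqref{rel:a_as_atnu} expresses $W$ directly as the same dilogarithm combination as $\mathcal{G}$; the integration constant is pinned down at a degenerate configuration (e.g.\ $m\to0$), and the sum $\tfrac{i\pi}{2}\sum_{\e,\e'}(\cdots)^{2}$ together with the linear pieces reproduces the $m^{2}$-, $\at$- and $\nu$-dependent prefactors. The main obstacle I anticipate is twofold: first, pinning down the correct branch/strip of the $S_{b}$-asymptotics so that the saddle $\nu_{\ast}$ lies in the region where it is valid — the double-sine asymptotics changes form across $\Re z=0$ and $\Re z=Q$, and this is what controls which dilogarithm branches enter and hence the precise $\Li_2$-content of $\mathcal{G}$; and second, matching not merely the saddle equation but the full on-shell value $W$ — including the $e^{i\pi m^{2}/2}$-type and $(2\pi)^{m}$-type normalisations that in the $c=1$ case sit inside $\widehat{\Upsilon}_S$ — to the exact definition of $\mathcal{G}$, which may require invoking a dilogarithm functional equation.
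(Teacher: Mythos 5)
Your proposal follows essentially the same route as the paper: a saddle-point evaluation of the $\zeta$-integral after the rescaling $4\pi\zeta=\nu/b$, with the crucial observation that the stationarity condition $\partial_\nu\Phi=0$ is exactly the character-variety relation \eqref{rel:a_as_atnu}, so that the saddle value of $\nu$ is the monodromy coordinate. Where you diverge is the final step: the paper substitutes the Barnes-$G$ form of the double-sine asymptotics, $S_b(x/b)\approx\bigl[\widehat{G}(x)(\sin\pi x/\pi)^x\bigr]^{1/b^2}$ (its eq.~\eqref{eq:SbGhat}), directly into the on-shell integrand and reads off $e^{-\mathcal{G}}$ from the definition \eqref{def:G0}, whereas you characterise the critical value $W=\Phi(\nu_{\ast})$ by its partial derivatives (envelope theorem) and match it to $-\mathcal{G}$ as a generating function, fixing an integration constant separately. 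Both work; the paper's direct substitution makes the identification with \eqref{def:G0} essentially term-by-term, while yours outsources the same tedium to verifying that $\partial_{\at}W$ collapses to $-i\nut$ etc.\ --- which is precisely the content of Proposition \ref{prop:GenFn}, itself checked in the paper ``by direct but very tedious computation'' --- and additionally requires pinning the constant at a configuration ($m\to0$) where the saddle equation degenerates ($e^{4\pi ia}=1$), so some care is needed there. Your second anticipated obstacle is real and is exactly how the paper resolves it: the integral alone produces the factor $\pi^{-2m}$ rather than the $e^{-i\pi m^2/2}/\bigl(\widehat{G}(m)(\sin\pi m)^m\bigr)$ appearing in \eqref{def:G0}, and the discrepancy is absorbed into the $m$-dependent normalisation stripped off in \eqref{extra-fac} (see the footnote to Proposition \ref{prop:GenFn}); your derivative-matching in $m$ would have to account for the same convention, but this does not affect the $(a,\at)$-structure that the theorem is really about.
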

\begin{proof}
We will use the asymptotics \cite[eq. (B.37)]{Eberhardt2023}
\begin{equation}\label{eq:SbAs1}
	S_b\left(\frac{x}{b}\right)\approx \exp\left\{\frac{i}{4\pi b^2}\left[\Li_2(e^{2\pi ix})-\Li_2(e^{-2\pi ix}) \right]+O(b^0) \right\},
\end{equation}
where $\Li_2$ is the dilogarithm function, with integral representation
\begin{equation}
    \Li_2(z)=-\int_0^z\log(1-x)\frac{\dd x}{x},
\end{equation}
with choice of logarithmic cut $(1,\infty)$. 
One can use the inversion relation $\Li_2(z^{-1})=-\Li_2(z)-\frac{\pi^2}{6}-\frac{1}{2}\log^2(-z)$ of the dilogarithm, and the well-known relation between the dilogarithm and the Barnes G-function, \cite[above eq. (4.68)]{ILP2016}
\begin{equation}
	\Li_2(e^{2\pi ix})=-2\pi i\log\frac{G(1+x)}{G(1-x)}-2\pi ix\log\frac{\sin\pi x}{\pi}-\pi^2x(1-x)+\frac{\pi^2}{6}
\end{equation}
to rewrite \eqref{eq:SbAs1} as
\begin{equation}\label{eq:SbGhat}
    S_b\left(\frac{x}{b}\right)\approx \left[\widehat{G}(x)\left(\frac{\sin\pi x}{\pi} \right)^x \right]^{\frac{1}{b^2}}\times O(b^0).
\end{equation}
The nontrivial term is given by the integral:
\begin{equation}
\mathcal{I}:=\int\frac{\dd\zeta}{i}e^{4\pi i\sigma\zeta}\prod_{\e,\e'=\pm1}S_b\left(\frac{\mu+Q/2}{2}+\e \sigmab+\e'\zeta \right) 
=\int\frac{\dd\nu}{4\pi ib}e^{ia\nu/b^2}\prod_{\e,\e'=\pm1}S_b\left(\frac{1}{b}\left(\frac{m}{2}+\e\at+\e'\frac{\nu}{4\pi} \right) \right).
\end{equation}
We will compute the $b\rightarrow0$ through a saddle-point approach. To do this, it is convenient to use a slightly different form of the $b\rightarrow 0$ asymptotics of the double-sine function:
\begin{equation}\label{eq:SbClass}
S_b(x/b)\approx\exp\left\{\frac{i}{2\pi b^2}\left[\Li_2(e^{2\pi ix})-\frac{\pi^2}{6}-\pi^2x(x-1) \right] \right\},
\end{equation}
obtained from the previous one by using the previous inversion relation of the dilogarithm. Then the saddle-point equation takes the form
\begin{gather}
\frac{i}{b^2}\frac{\partial}{\partial\nu}\bigg\{a\nu+\frac{1}{2\pi}\bigg[\Li_2\left(e^{2\pi i\left(\frac{m}{2}+\at+\frac{\nu}{4\pi} \right)} \right)+\Li_2\left(e^{2\pi i\left(\frac{m}{2}-\at+\frac{\nu}{4\pi} \right)} \right)\nonumber \\
+\Li_2\left(e^{2\pi i\left(\frac{m}{2}+\at-\frac{\nu}{4\pi} \right)} \right)+\Li_2\left(e^{2\pi i\left(\frac{m}{2}-\at-\frac{\nu}{4\pi} \right)} \right)-\frac{\pi^2}{3}(3m^2-6m+12\at^2+2)-\frac{\nu^2}{4} \bigg] \bigg\}=0,
\end{gather}
and using $\partial_x\Li_2(e^{2\pi i x})=-2\pi i\log(1-e^{2\pi i x}) $, this is
\begin{equation}
a-\frac{\nu}{4\pi}+\frac{i}{4\pi}\log\left[\frac{\left(1-e^{2\pi i\left(\frac{m}{2}+\at -\frac{\nu}{4\pi} \right)} \right)\left(1-e^{2\pi i\left(\frac{m}{2}-\at -\frac{\nu}{4\pi} \right)} \right)}{\left(1-e^{2\pi i\left(\frac{m}{2}+\at +\frac{\nu}{4\pi} \right)} \right)\left(1-e^{2\pi i\left(\frac{m}{2}-\at +\frac{\nu}{4\pi} \right)} \right)} \right]=0,
\end{equation}
which is just the character variety relation \eqref{rel:a_as_atnu},
\begin{equation}
a=\frac{1}{4\pi i}\log\left[\frac{\sin\pi\left(\frac{m}{2}+\at-\frac{\nu}{4\pi} \right)\sin\pi\left(\frac{m}{2}-\at-\frac{\nu}{4\pi} \right)}{\sin\pi\left(\frac{m}{2}+\at+\frac{\nu}{4\pi} \right)\sin\pi\left(\frac{m}{2}-\at+\frac{\nu}{4\pi} \right)} \right],
\end{equation}
meaning that we can indeed identify the saddle-point values of $a,\at,\nu$ as the monodromy coordinates of the previous sections. 
We evaluate the integrand at the saddle point using \eqref{eq:SbGhat}, obtaining
\begin{gather}
\mathcal{I}\approx\frac{1}{4\pi ib}\bigg[e^{ia\nu}\widehat{G}\left(\frac{m}{2}+\at+\frac{\nu}{4\pi} \right)\widehat{G}\left(\frac{m}{2}-\at+\frac{\nu}{4\pi} \right)\widehat{G}\left(\frac{m}{2}+\at-\frac{\nu}{4\pi} \right)\widehat{G}\left(\frac{m}{2}-\at-\frac{\nu}{4\pi} \right) \nonumber \\
\times\frac{1}{\pi^{2m}}\prod_{\epsilon,\epsilon'=\pm}\sin\pi\left(\frac{m}{2}+\epsilon\at+\epsilon'\frac{\nu}{4\pi} \right)^{\frac{m}{2}+\epsilon\at+\epsilon'\frac{\nu}{4\pi}} \bigg]^{\frac{1}{b^2}} \\
=\frac{1}{4\pi ib}\left[\frac{e^{ia\nu}}{\pi^{2m}}\frac{\widehat{G}\left(\frac{m}{2}+\at+\frac{\nu}{4\pi} \right)\widehat{G}\left(\frac{m}{2}+\at-\frac{\nu}{4\pi} \right)}{\widehat{G}\left(\at-\frac{m}{2}+\frac{\nu}{4\pi} \right)\widehat{G}\left(\at-\frac{m}{2}-\frac{\nu}{4\pi} \right)}\prod_{\epsilon,\epsilon'=\pm}\sin\pi\left(\frac{m}{2}+\epsilon\at+\epsilon'\frac{\nu}{4\pi} \right)^{\frac{m}{2}+\epsilon\at+\epsilon'\frac{\nu}{4\pi}}\right]^{\frac{1}{b^2}} .
\end{gather}
Noting that the expression between square brackets is nothing but $e^{-\mathcal{G}}$, and taking the logarithm, we obtain the Theorem.
\end{proof}

The link between the connection constant and the $c\rightarrow\infty$ modular kernel is obtained through the following two functions \(\mathcal{G}\) and \(\mathcal{G}_0\).
\begin{proposition}\label{prop:GenFn}
Consider the following functions of monodromy data\footnote{The $m$ dependent factor comes through  the semiclassical limit of the term $\frac{e^{\frac{i\pi}{2}\left(\frac{Q}{2}+\mu\right)\left(\frac{Q}{2}-\mu \right)}}{S_b\left(\frac{Q}{2}+\mu \right)}$ in the expression \eqref{extra-fac} with a suitable multiplication by the term $\pi^{2m}$, which is simply a convenient normalization for what follows.}:
\begin{equation}
    e^{-\mathcal{G}_0}:=\frac{e^{-i \pi m^2/2}}{\widehat{G}(m) (\sin \pi m)^m } 
    \frac{\widehat{G}\left(\frac{m}{2}+\at+\frac{\nu}{4\pi} \right)\widehat{G}\left(\frac{m}{2}+\at-\frac{\nu}{4\pi} \right)}{\widehat{G}\left(\at-\frac{m}{2}+\frac{\nu}{4\pi} \right)\widehat{G}\left(\at-\frac{m}{2}-\frac{\nu}{4\pi} \right)}\prod_{\epsilon,\epsilon'=\pm}\sin\pi\left(\frac{m}{2}+\epsilon\at+\epsilon'\frac{\nu}{4\pi} \right)^{\frac{m}{2}+\epsilon\at+\epsilon'\frac{\nu}{4\pi}}, \label{def:G0}
\end{equation}
\begin{equation}
    \mathcal{G}:=\mathcal{G}_0-ia\nu.\label{def:G}
\end{equation}
These are generating functions on the character variety, namely
\begin{equation}\label{eq:geng0}
    \frac{\partial\mathcal{G}_0}{\partial\nu}=ia,\qquad \frac{\partial\mathcal{G}_0}{\partial \at}=i\nut,
\end{equation}
and
\begin{equation}\label{eq:geng}
    \frac{\partial\mathcal{G}}{\partial a}=-i\nu,\qquad\frac{\partial\mathcal{G}}{\partial \at}=i\nut.
\end{equation}
Furthermore, the generating function $\mathcal{G}_0$ is related to the connection constant $\widehat{\Upsilon}_S$ defined in \eqref{def:Upsilon1} in the following way: 
\begin{equation}
    e^{(\mathcal{G}_0-\widetilde{a}\partial_{\widetilde{a}}\mathcal{G}_0-m\partial_m\mathcal{G}_0)}= \widehat{\Upsilon}_S. \label{eq:G0Up}
\end{equation}
\end{proposition}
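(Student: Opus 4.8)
The plan is to prove the three assertions in turn; each reduces, after using the character-variety relations of Proposition~\ref{prop:SMonodromy}, to an elementary manipulation of Barnes' $\widehat{G}$. The single analytic input is that, setting
\[
h(z):=\log\bigl[\widehat{G}(z)\,(\sin\pi z)^{z}\bigr],
\]
one has $h'(z)=\log(2\pi\sin\pi z)$; this follows from $\tfrac{d}{dx}\log\widehat{G}(x)=\log(2\pi)-\pi x\cot\pi x$, itself obtained by differentiating the dilogarithm representation $\Li_2(e^{2\pi i x})=-2\pi i\log\widehat{G}(x)-2\pi i x\log\tfrac{\sin\pi x}{\pi}-\pi^{2}x(1-x)+\tfrac{\pi^{2}}{6}$ recalled in the proof of Theorem~\ref{thm:KernelGen}, together with $1-e^{2\pi i x}=2\sin(\pi x)e^{i\pi(x-1/2)}$. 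Since $\widehat{G}(-x)=1/\widehat{G}(x)$, the quadruple of $\widehat{G}$'s in \eqref{def:G0} can be rewritten with uniform signs, giving
\[
e^{-\mathcal{G}_0}=e^{-i\pi m^{2}/2}\,e^{-h(m)}\prod_{\epsilon,\epsilon'=\pm}e^{\,h(z_{\epsilon\epsilon'})},\qquad z_{\epsilon\epsilon'}:=\tfrac{m}{2}+\epsilon\widetilde a+\epsilon'\tfrac{\nu}{4\pi}.
\]
Then $\partial_{\nu}$ and $\partial_{\widetilde a}$ are immediate: in each of the resulting four-term alternating sums of $h'$ the $\log(2\pi)$ pieces cancel, and what remains is the logarithm of precisely the sine cross-ratio of \eqref{rel:a_as_atnu}, respectively \eqref{rel:nut_as_aatnu}, so that $\partial_{\nu}\mathcal{G}_0=ia$ and $\partial_{\widetilde a}\mathcal{G}_0=i\widetilde\nu$. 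The first of these is the saddle-point equation in the proof of Theorem~\ref{thm:KernelGen} read in reverse; for the second one fixes the branch of $\log(-1)$ exactly as in Lemma~\ref{thm:UShift}. This establishes \eqref{eq:geng0}.

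Equation \eqref{eq:geng} is then formal. Writing $\mathcal{G}=\mathcal{G}_0-ia\nu$ and viewing it, through $\nu=\nu(a,\widetilde a)$, as a function of $(a,\widetilde a)$, the chain rule together with \eqref{eq:geng0} yields $\partial_a\mathcal{G}=\partial_\nu\mathcal{G}_0\,\partial_a\nu-i\nu-ia\,\partial_a\nu=-i\nu$ and $\partial_{\widetilde a}\mathcal{G}=\partial_{\widetilde a}\mathcal{G}_0+\partial_\nu\mathcal{G}_0\,\partial_{\widetilde a}\nu-ia\,\partial_{\widetilde a}\nu=i\widetilde\nu$; this is a Legendre transform in the conjugate pair $\nu\leftrightarrow ia$. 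Its consistency — equivalently, the symmetry of the mixed second derivatives of $\mathcal{G}_0$, i.e.\ the fact that the $S$-move is a canonical transformation of the character variety — is what makes $\mathcal{G}$ well defined as a function of $(a,\widetilde a,m)$.

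For \eqref{eq:G0Up} one first records, from the same computation, $\partial_m\mathcal{G}_0=i\pi m-\log(2\pi)+\log\sin\pi m-\tfrac12\log\prod_{\epsilon,\epsilon'}\sin\pi z_{\epsilon\epsilon'}$. Substituting \eqref{eq:geng0} into $d\mathcal{G}_0=ia\,d\nu+i\widetilde\nu\,d\widetilde a+\partial_m\mathcal{G}_0\,dm$, the differential of the left-hand exponent of \eqref{eq:G0Up} collapses to
\[
d\bigl(\mathcal{G}_0-\widetilde a\,\partial_{\widetilde a}\mathcal{G}_0-m\,\partial_m\mathcal{G}_0\bigr)=ia\,d\nu-i\widetilde a\,d\widetilde\nu-m\,d\bigl(\partial_m\mathcal{G}_0\bigr).
\]
It remains to identify this with $\omega_1=d\log\widehat{\Upsilon}_S=-ia\,d\nu+i\widetilde a\,d\widetilde\nu+m\,d\log\dfrac{-i\,e^{-i\pi m+i\nu/2}}{\cos\pi m-e^{i\nu/2}\cos\pi\widetilde a}$ found in the proof of Theorem~\ref{eq:thm:ConnCon}, i.e.\ to check that $2i\bigl(a\,d\nu-\widetilde a\,d\widetilde\nu\bigr)=m\,d\bigl(\partial_m\mathcal{G}_0\bigr)+m\,d\log\dfrac{-i\,e^{-i\pi m+i\nu/2}}{\cos\pi m-e^{i\nu/2}\cos\pi\widetilde a}$. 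Expanding $d\widetilde\nu$ via \eqref{rel:nut_as_aatnu}, inserting the explicit $\partial_m\mathcal{G}_0$ and the relations of Proposition~\ref{prop:SMonodromy}, and comparing the coefficients of $d\nu$, $d\widetilde a$, $dm$ separately, this reduces to elementary trigonometric identities of exactly the kind (sine cross-ratios, half-angle formulae) already used in Proposition~\ref{prop:SMonodromy}; note that no $\widehat{G}$-function is ever transformed, only its logarithmic derivative, which is rational in $\cot$. The multiplicative constant in \eqref{eq:G0Up} is then pinned down by matching both sides in the regime $\nu\to i\infty$ used throughout the paper. I expect this last identification of one-forms to be the main obstacle: it is the generating-function shadow of the fact that $\mathcal{G}_0$ and $\widehat{\Upsilon}_S$, though expressed through $\widehat{G}$'s of the different arguments $\tfrac m2\pm\widetilde a\pm\tfrac{\nu}{4\pi}$ and $a\pm\tfrac m2\pm\tfrac{\widetilde\nu}{4\pi}$, encode one and the same symplectic datum, so the bookkeeping of branches and of the $m$-dependent prefactors has to be carried out with care.
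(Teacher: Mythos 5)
Your proposal is correct and follows the same overall skeleton as the paper's proof (direct differentiation using Proposition \ref{prop:SMonodromy} for \eqref{eq:geng0}, the Legendre transform for \eqref{eq:geng}, and a normalisation argument for \eqref{eq:G0Up}), but it is organised differently and in one respect more transparently. Your observation that $h(z):=\log\bigl[\widehat{G}(z)(\sin\pi z)^{z}\bigr]$ satisfies $h'(z)=\log(2\pi\sin\pi z)$ turns the verification of \eqref{eq:geng0} into a four-term telescoping of sine cross-ratios, recovering \eqref{rel:a_as_atnu} and (after combining it with \eqref{rel:nut_as_aatnu}, which you need for the $\partial_{\at}$ derivative — the two relations enter together there, not ``respectively'') the value $i\nut$; this replaces the paper's appeal to ``direct substitution'' verified by computer algebra. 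For \eqref{eq:G0Up} the two routes diverge: the paper first evaluates the Legendre-type combination in closed form, obtaining \eqref{ew:step1_prop7}, and then reduces the claim to a single functional identity for $\widehat{G}$ under $(a,\nu)\leftrightarrow(\at,\nut)$; you instead compare the one-form $ia\,\dd\nu-i\at\,\dd\nut-m\,\dd(\partial_m\mathcal{G}_0)$ with $\omega_1$ from the proof of Theorem \ref{eq:thm:ConnCon} and fix the constant asymptotically. Both are legitimate, and both leave the final tedious trigonometric verification undone (the paper defers it to a Mathematica file; you flag it as the main obstacle). Note, though, that your own intermediate formula already yields the paper's shortcut for free: since $h(z)-zh'(z)=\log\widehat{G}(z)-z\log(2\pi)$ and $\tfrac{\nu}{4\pi}\sum_{\epsilon,\epsilon'}\epsilon'h'(z_{\epsilon\epsilon'})=-ia\nu$, the combination $\mathcal{G}_0-\at\partial_{\at}\mathcal{G}_0-m\partial_m\mathcal{G}_0$ evaluates in closed form to the logarithm of
\begin{equation*}
e^{ia\nu}\,\frac{\widehat{G}\bigl(\at+\tfrac{\nu}{4\pi}-\tfrac m2\bigr)\widehat{G}\bigl(\at-\tfrac{\nu}{4\pi}-\tfrac m2\bigr)}{\widehat{G}\bigl(\at+\tfrac{\nu}{4\pi}+\tfrac m2\bigr)\widehat{G}\bigl(\at-\tfrac{\nu}{4\pi}+\tfrac m2\bigr)}\,(2\pi)^m\widehat{G}(m)\,e^{-i\pi m^2/2},
\end{equation*}
i.e.\ exactly \eqref{ew:step1_prop7}; all sine powers cancel. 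This reduces the remaining work to the single statement that this expression is symmetric under $(a,\nu)\leftrightarrow(\at,\nut)$ (which identifies it with $\widehat{\Upsilon}_S$ of \eqref{def:Upsilon1}), a cleaner target than matching three components of a one-form plus a constant. Finally, be aware that the $m\to0$ regime is degenerate for the coordinates $(\nu,\at)$ — there $\partial_\nu\mathcal{G}_0$ sits on a nontrivial branch of the logarithm — so the asymptotic matching used to pin the constant should be done at $\nu\to i\infty$ with $m$ generic, as you propose, and not by specialising $m$.
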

\begin{proof}
The proof is just a direct but very tedious computation of the derivatives, using the character variety relation in Proposition \ref{prop:SMonodromy}, that can be verified through the accompanying Mathematica file\footnote{We use the monodromy coordinate $\eta := \nu/4\pi$ in the mathematica file associated to this article for ease of computation.}. The sketch of the computation is as follows. 

We first verify the relations \eqref{eq:geng0} by direct substitution, and \eqref{eq:geng} follows by using \eqref{def:G} and \eqref{eq:geng0}. The relation to the connection constant follows two steps. In the first, we verify that the following relation holds
\begin{align}\label{ew:step1_prop7}
     e^{(\mathcal{G}_0-\widetilde{a}\partial_{\widetilde{a}}\mathcal{G}_0-m\partial_m\mathcal{G}_0)}= e^{i a \nu} \frac{  \left(\widehat{G}\left(\widetilde{a}+\nu/4\pi -\frac{m}{2}\right) \widehat{G}\left(\widetilde{a}-\nu/4\pi -\frac{m}{2}\right)\right)}{\left(\widehat{G}\left(\widetilde{a}+\nu/4\pi +\frac{m}{2}\right) \widehat{G}\left(\widetilde{a}-\nu/4\pi +\frac{m}{2}\right)\right)}(2 \pi )^m \widehat{G}(m) e^{-i\pi m^2/2},
\end{align}
and in the final step, using the character variety relations, we obtain the relation below 
\begin{align}
    \text{RHS of } \eqref{ew:step1_prop7} =  \widehat{\Upsilon}_S(a, \widetilde{\nu}),
\end{align}
which proves the relation \eqref{eq:G0Up}.
\end{proof}

In this way, we have obtained exact relations between $c=1$ and $c=\infty$ Virasoro modular kernels, through their relation with the modular transformations of isomonodromic tau functions. Let us conclude with a few remarks.

\begin{remark}
    Recall the general definition of the modular kernel \eqref{eq:Skerneldef}, that we recall here for convenience:
    \begin{equation}
	\cB(\sigma,\mu,\tau)=(e^{-i\pi}\tau)^{-\Delta(m)}\int_{-\infty+i\Lambda}^{\infty+i\Lambda}\dd \widetilde{\sigma}\,\mathcal{S}_{\sigma,\sigmat}(\mu)\widetilde{\cB}(\sigmat,m,\taut).
\end{equation}
    Taking the semi-classical limit, one obtains from a saddle-point computation the following relations
    \begin{equation}
        \frac{\partial}{\partial\at}\left(b^2\lim_{b\rightarrow0}\log\cBt(\at)\right)=\frac{\partial}{\partial\at}\lim_{b\rightarrow 0}b^2\log \mathcal{S}_{\sigma,\sigmat}(\mu),\qquad         \frac{\partial}{\partial a}\left(b^2\lim_{b\rightarrow0}\log(\mathcal{S}_{\sigma,\sigmat}(\mu)\cB(a))\right)=0,
    \end{equation}
    
    where $\sigma=a/b$ and $\sigmat=\at/b$, and $\mu=\frac{m}{b}-\frac{Q}{2}$. On the other hand, we already showed in Theorem \ref{thm:KernelGen} that
\begin{equation}
    \lim_{b\rightarrow 0}b^2\log \mathcal{S}_{\sigma,\sigmat}(\mu)=-\mathcal{G}(a,\at),\quad \frac{\partial \mathcal{G}}{\partial a}=-i\nu,\quad \frac{\partial \mathcal{G}}{\partial\at}=i\nut.
\end{equation}
If we define $F_{NS}(a):=\lim_{b\rightarrow 0} b^2\mathcal{B}(\sigma,\mu,\tau) $ and $\widetilde{F}_{NS}(\at):=\lim_{b\rightarrow 0} b^2\mathcal{\widetilde{B}}(\sigmat,\mu,\taut) $, we find that
\begin{equation}
    \frac{\partial}{\partial \at} \widetilde{F}_{NS}(\at)=i\nut,\qquad \frac{\partial}{\partial a}F_{NS}(a)=i\nu,
\end{equation}
consistent with the semiclassical limit of so-called ``Blowup relations'' for conformal blocks\footnote{The name ``blowup relation'' stems from their original derivation in the context of supersymmetric partition functions on manifolds with a blown-up point \cite{Nakajima2003}.} from \cite{Bershtein2021}. In fact, a rigorous proof of this statement also follows from the probabilistic approach to conformal blocks on the torus \cite{ghosal2020probabilistic,desiraju2024proof} and will be the subject of an upcoming paper by one of the authors.
\end{remark}

\begin{remark}
The explicit form of \eqref{eq:G0Up} can also be obtained from the blowup relations from \cite[7.23]{Bershtein2021} using the idea from \cite[sec. 7.4]{OlegMovie,Gavrylenko:2025nuo}.
Namely, we can study the connection constant for the tau functions specialised to the point where \(Q\) vanishes.
This point is described by the equation \(\frac{\partial}{\partial a}F_{NS}(a)=i\nu\).
The value of the tau function itself is described by
\begin{multline}
\widehat{\mathcal{T}}(a,m,\tau)=\exp\left(F^{NS}(a,m,\tau)-m\partial_m F^{NS}(a,m,\tau)-a\partial_aF^{NS}(a,m,\tau)\right)=\\=
\exp\left(F^{NS}(a,m,\tau)-m\partial_m F^{NS}(a,m,\tau)-ia\nu\right).
\end{multline}
If we now substitute the transformation property of classical conformal blocks
\begin{equation}
F^{NS}(a,m,\tau)=m^2\log(e^{-\pi i}\tau)+F^{NS}(\widetilde{a},m,\widetilde{\tau})-\mathcal{G}(a,\widetilde{a},m)
\end{equation}
into this formula, we will get the following sequence of equalities
\begin{equation}
\begin{gathered}
\begin{split}
&\widehat{\mathcal{T}}(a,m,\tau)=\exp \left(F^{NS}(\widetilde{a},m,\widetilde{\tau})-m\partial_m F^{NS}(\widetilde{a},m,\widetilde{\tau}) -\mathcal{G}(a,\widetilde{a},m)+m\partial_m \mathcal{G} -i a\nu -m^2\log(e^{-\pi i}\tau)\right)
\\&=
(e^{-\pi i}\tau)^{-m^2}\exp \left(F^{NS}(\widetilde{a},m,\widetilde{\tau})-m\partial_m F^{NS}(\widetilde{a},m,\widetilde{\tau}) -\mathcal{G}_0(\nu,\widetilde{a},m)+m\partial_m \mathcal{G}(a,\widetilde{a},m) \right)
\\&=
(e^{-\pi i}\tau)^{-m^2}\exp \left(F^{NS}(\widetilde{a},m,\widetilde{\tau})-m\partial_m F^{NS}(\widetilde{a},m,\widetilde{\tau}) -i \widetilde{a}\widetilde{\nu}\right)
\exp\left( -\mathcal{G}_0(\nu,\widetilde{a},m)+m\partial_m \mathcal{G}_0(\nu,\widetilde{a},m)+i\widetilde{a}\widetilde{\nu} \right)
\\&=
e^{\pi im^2}\tau^{-m^2}\widehat{\mathcal{T}}(\widetilde{a},m,\widetilde{\tau}) \exp \left( -\mathcal{G}_0(\nu,\widetilde{a},m)+m\partial_m \mathcal{G}_0(\nu,\widetilde{a},m)+\widetilde{a}\partial_{\widetilde{a}} \mathcal{G}_0(\nu,\widetilde{a},m) \right),
\end{split}
\end{gathered}
\end{equation}
which precisely reproduces the formula \eqref{eq:SDualTau}, after setting $Q=0$ and using \eqref{eq:G0Up}. The tau function defined in this way has a different normalisation than the one we used throughout the paper, which leads to a connection constant $\widehat{\Upsilon}_S$ given by \eqref{eq:Uhat}, instead of the full expression $\Upsilon_S$ of \eqref{eq:ConnConst}. 
\end{remark}

\begin{remark}
    It has been known since the early 1990s \cite{Reshetikhin1992,Harnad1994} that isomonodromic deformations can be understood as arising from the semi-classical limit of the Knzhnik-Zamolodchikov-Bernard (KZB) equations \cite{Knizhnik1984,Bernard1987,Bernard1988}. From the perspective of conformal field theory, these equations originate from an underlying affine current algebra, which in our case would be $\widehat{\mathfrak{sl}_2}$. In this framework, the isomonodromic tau function is the semiclassical wave function of the KZB equations. 
    
    On the other hand, KZB equations are also expected to arise as the equations governing the complex Chern-Simons partition function on the space $C_{g,n}\times S^1$, where $C_{g,n}$ is a Riemann surface of genus $g$ with $n$ marked points \cite{Witten1989}. More recently, the modular kernel has been also studied from the point of view of complex Chern-Simons theory, and it is given by the partition function on a 3-manifold formed by a cylinder connecting two tori with dual pants decompositions \cite{Dimofte2013}. This 3-manifold is called the mapping cylinder $M_S$, and it is the complement of the Hopf link in the 3-sphere $S^3$ (for a pictorial representation of this space, see \cite{Dimofte2015}, Figure 37). Quite naturally from this perspective, if tau functions are semiclassical limits of wave-functions, then connection constants are semiclassical limits of transition amplitudes.
\end{remark}

\appendix
\section{Character variety}\label{sec:charvar}
In this section, we explain the character variety of the one point torus. Let us start by listing the monodromies on the one-point torus:
\begin{align}
M_A=e^{2\pi ia\sigma_3}, && M_0 \sim e^{2\pi i m \sigma_3},  && M_B=\frac{1}{\sin2\pi a}\left( \begin{array}{cc}
\sin(\pi(2a-m))e^{-i\eta/2} & \sin(\pi m)e^{i\eta/2} \\
\sin(\pi m)e^{-i\eta/2} & \sin(\pi(2a+m))e^{i\eta/2} 
\end{array} \right).
\end{align}
The space of monodromies is then defined as follows
\begin{equation}
\mathcal{M}_{1,1}:=\left\{M_A,M_B\in SL(2):\, \left(M_AM_BM_A^{-1}M_B^{-1} \right)=M_0 \right\}/\sim,
\end{equation}
where $/\sim$ means up to conjugation. 

In order to obtain the cubic equation of the character variety, we define the following coordinates:
\begin{align}
A:=\tr M_A, && B:=\tr M_B, && C:=\tr M_AM_B,
\end{align}
and explicitly,
\begin{align}
A &= e^{2\pi i a} + e^{-2\pi i a} \nonumber \\
B &=\frac{\sin(\pi(2a-m))}{\sin2\pi a}e^{-i\eta/2}+\frac{\sin(\pi(2a+m))}{\sin2\pi a}e^{i\eta/2}, \nonumber, \\
C &=\frac{\sin(\pi(2a-m))}{\sin2\pi a}e^{i(2\pi a-\eta/2)}+\frac{\sin(\pi(2a+m))}{\sin2\pi a}e^{-i(2\pi a-\eta/2)}.
\end{align}
The trace of the constraint on the monodromies is then given by
\begin{equation}
\begin{split}
\tr M_0=\tr M_AM_BM_A^{-1}M_B^{-1}&=\tr M_AM_BM_A^{-1}\tr M_B-\tr M_AM_BM_A^{-1}M_B\\
&=(\tr M_B)^2-\tr M_AM_BM_A^{-1}M_B \\
& =(\tr M_B)^2-\tr(M_AM_B)\tr(M_B^{-1}M_A)+\tr(M_AM_BM_B^{-1}M_A)\\
&=(\tr M_B)^2-\tr(M_AM_B)\tr(M_AM_B^{-1})+\tr(M_A^2) \\
&=(\tr M_B)^2-\tr(M_AM_B)\left[\tr M_A\tr M_B-\tr(M_AM_B) \right]+\tr(M_A)^2-2 \\
&=B^2-ABC+C^2+A^2-2,
\end{split}
\end{equation}
where we used the following identities for $SL(2)$ traces
\begin{align}\label{eq:SL2id}
\tr(xy)+\tr(xy^{-1})=\tr x\tr y, && \tr(x^2)=(\tr x)^2-2.
\end{align}
Therefore, the character variety is described by the cubic curve\footnote{To match with the conventions of \cite{NRS2011}, set $e^{2\pi i m}+ e^{-2\pi i m} \equiv \mathsf{m}$.}
\begin{equation}
\mathcal{W}_{1,1}=A^2+B^2+C^2-ABC-(e^{2\pi i m}+ e^{-2\pi i m})-2=0.
\end{equation}
Furthermore, the functions $A,B,C$ satisfy the Goldman bracket:
\begin{equation}
\{A,B\}=C-\frac{1}{2}AB,
\end{equation}
and $a, \eta$ are the Darboux coordinates
\begin{equation}
\{a,2\pi\eta \}=1.
\end{equation}
Under modular transformation, one obtains the dual monodromy coordinates $\widetilde{a}, \widetilde{\eta}$, and (upto factors of $2\pi$)
\begin{align}
    \eta&=\widetilde{a}-2i\log \frac{\sin( a- m/2-\widetilde{\eta})}{\sin(a+ m/2-\widetilde{\eta})}.
\end{align}

\section{Modular transformation for zero mass}\label{sec:zerom}
Here, we give a list of formulas to explain what happens in the case of free particle (\(m=0\)) and to provide an extra consistency check of our computations.

Solution of the linear system \eqref{linear_systemCM} is given by
\begin{equation}
Y(z,\tau)= \begin{pmatrix} e^{2\pi i \frac{dQ}{d\tau} z} & 0\\0 & e^{-2\pi i \frac{dQ}{d\tau} z} \end{pmatrix}.
\end{equation}
Its monodromies are
\begin{equation}
M_A= \begin{pmatrix} e^{2\pi i \frac{dQ}{d\tau}} & 0\\0 & e^{-2\pi i \frac{dQ}{d\tau}} \end{pmatrix},
\qquad
M_B= \begin{pmatrix} e^{2\pi i (\tau\frac{dQ}{d\tau}-Q)} & 0\\0 & e^{-2\pi i (\tau\frac{dQ}{d\tau}-Q)} \end{pmatrix}.
\end{equation}
Comparing these monodromies with \eqref{eq:MAM0MB} we identify parameters of solution:
\begin{equation}
Q=a \tau+\frac{\nu}{4\pi}.
\end{equation}
Using the definition \eqref{def:Tau} we find the tau function
\begin{equation}
\mathcal{T}=e^{2\pi i a^2 \tau}.\label{eq:this3}
\end{equation}
Dual partition function \eqref{eq:ZDCB} in this case becomes
\begin{equation}
\label{eq:ZDm0}
Z_D(a,\nu,0,\tau,\rho)=\frac{e^{2\pi i a^2 \tau}}{\eta(\tau)^2}\theta_1\left(a \tau+\frac{\nu}{4\pi}+\rho\middle|\tau\right)\theta_1\left(a \tau+\frac{\nu}{4\pi}-\rho\middle|\tau\right).
\end{equation}

Modular transformation of solution described by \eqref{eq:modtraPQ} reads
\begin{equation}
\widetilde{Q}=-Q/\tau=-a-\frac{\nu}{4 \pi \tau}=\frac{\nu}{4\pi}\widetilde{\tau}-a.
\end{equation}
This defines a mapping between parameters
\begin{equation}
\widetilde{a}=\frac{\nu}{4\pi},\qquad \frac{\widetilde{\nu}}{4\pi}=-a,
\end{equation}
which is consistent with the equations \eqref{rel:MAs}, \eqref{rel:MBs}, \eqref{rel:MABs}.

Modular transformation of the tau function \eqref{eq:SDualTau} reads
\begin{multline}
\widetilde{\mathcal{T}}^A(\widetilde{\tau};\widetilde{a},\widetilde{\nu},0)=\frac1{\Upsilon_S(a,\nu,0)}\mathcal{T}^A(\tau;a,n,0)e^{-2\pi i Q(\tau;a,\nu)^2/\tau}=
\frac1{\Upsilon_S(a,\nu,0)}e^{2\pi i a^2\tau}e^{-2\pi i \left(a\tau+\frac{\nu}{4\pi}\right)^2/\tau}=\\=\frac{e^{-i\nu a}}{\Upsilon_S(a,\nu,0)}e^{2\pi i \left( \frac{\nu}{4\pi} \right)^2\widetilde{\tau}} = \frac{e^{-i\nu a}}{\Upsilon_S(a,\nu,0)} \mathcal{T}^A(\widetilde{\tau};\widetilde{a},\widetilde{\nu},0)
\end{multline}
Looking at this formula, we conclude that
\begin{equation}
\Upsilon_S(a,\nu,0)=\widehat{\Upsilon}_S(a,\nu,0)=e^{-i\nu a}=e^{i\widetilde{\nu} \widetilde{a}},
\end{equation}
which is consistent with \eqref{def:Upsilon}.

The modular kernel \eqref{eq:c1modular} in this case becomes\footnote{To compare with \cite[eq. (2.55)]{Eberhardt2023}, note that our modular kerel involves an integral from $-\infty$ to $+\infty$, while in loc. cit. the integral is from $0$ to $\infty$.}
\begin{equation}
S(a,\tilde{a})=\sqrt{2}e^{-4\pi i a\widetilde{a}}.
\end{equation}
One can also recover the Virasoro conformal block in the following way. Combining \eqref{eq:this1}, \eqref{eq:this2}, and \eqref{eq:this3} we find
\begin{align}
    \mathcal{B}(a,0,\tau) = \int_0^1 \dd\rho \int_0^{4\pi} \frac{\dd\nu}{4\pi} \frac{e^{2\pi i a^2 \tau}}{\eta(\tau)} e^{-i \pi \tau/2} e^{-2\pi i (\rho+1/2)} \theta_1(a\tau + \frac{\nu}{4\pi}+ \rho)\theta_1(a\tau + \frac{\nu}{4\pi}- \rho).
\end{align}
With the series expression for the theta function
\begin{align}
    \theta_1(z) = - i \sum_{n\in \mathbb{Z}} (-1)^n q^{(n+1/2)^2/2} e^{2\pi i z (n+1/2)},
\end{align}
the above expression is
\begin{align}
   & \mathcal{B}(a,0,\tau) 
   =\int_0^1 \dd\rho \int_0^{4\pi} \frac{\dd\nu}{4\pi} \frac{e^{2\pi i a^2 \tau}}{\eta(\tau)}  \nonumber\\
   &\times
\sum_{m,n\in \mathbb{Z}}(-1)^{m+n+1} q^{(m^2 + n^2)/2+ (m+n)/2}\textrm{exp}\left(  2\pi i( (a\tau + \frac{\nu}{4\pi})(m+n+1) + \rho(m-n+1)) \right) \nonumber\\
   & =  \frac{e^{2\pi i a^2 \tau}}{\eta(\tau)}.
\end{align}
In the last line, we use the fact that the only terms that survive the integration are constant in $\nu, \rho$. 
Therefore,
\begin{equation}
\mathcal{B}(a,0,\tau)=\frac{e^{2\pi i a^2 \tau}}{\eta(\tau)}.
\end{equation}
The above expression can also be derived by using the AGT relation \cite[eq. (D.10)]{Bonelli2019}.
On the CFT side, it is simply the character of the Virasoro Verma module with the highest weight \(a^2\).

Modular transformation of \(m=0\) conformal block is defined by \eqref{eq:Skerneldef} and equals explicitly
\begin{equation}
\int_{-\infty}^{\infty}d\widetilde{a}\, \mathcal{B}(\widetilde{a},0,\tau)S(a,\widetilde{a}) = \int_{-\infty}^{\infty} d\widetilde{a}\, \frac{e^{-2\pi i \widetilde{a}^2/\tau}}{\eta(-1/\tau)} \sqrt{2} e^{-4\pi i a\widetilde{a}} = \frac{e^{2\pi i a^2\tau}}{\eta(-1/\tau)/\sqrt{-i\tau}} = \mathcal{B}(a,0,\tau),
\end{equation}
which reproduces the desired expression.

Another object that is worth to be computed explicitly in the \(m=0\) limit is the Fredholm determinant \cite[eq. (1.17)]{DMDG2020}:
\begin{multline}
K_{1,1}(z,w;\tau)=
\begin{pmatrix}\frac{-e^{-2\pi i\rho+2\pi i Q \sigma_3}}{e^{2\pi i(z-w+\tau)}-1} &0 \\ 0& \frac{e^{2\pi i\rho-2\pi i Q\sigma_3}}{e^{2\pi i(z-w-\tau)}-1}\end{pmatrix}
=\\=
\begin{pmatrix} \sum_{n=0}^{\infty}e^{-2\pi i\rho+2\pi iQ\sigma_3+2\pi n\tau}e^{2\pi i n(z-w)} & 0 \\ 0 & \sum_{n=1}^{\infty}e^{2\pi i\rho-2\pi iQ\sigma_3+2\pi in\tau}e^{2\pi in(w-z)}\end{pmatrix}.
\end{multline}
We see that this matrix is already diagonal in the Fourier basis, so its determinant equals
\begin{multline}
\det(1-K_{1,1})=\prod_{\epsilon=\pm1}\prod_{n=0}^{\infty}(1-e^{2\pi in\tau}e^{2\pi i(\epsilon Q-\rho)})\prod_{n=1}^{\infty}(1-e^{2\pi in\tau}e^{-2\pi i(\epsilon Q-\rho)})=\\=
e^{-2\pi i\rho}\prod_{\epsilon=\pm1}(e^{\pi i(\rho-\epsilon Q)}-e^{-\pi i(\rho-\epsilon Q)})
\prod_{n=1}^{\infty}(1-e^{2\pi in\tau}e^{2\pi i(\epsilon Q-\rho)})(1-e^{2\pi in\tau}e^{-2\pi i(\epsilon Q-\rho)})
=\\=
\frac{q^{1/12}}{\eta(\tau)^2} e^{-2\pi i\rho} q^{-1/4}\prod_{\epsilon=\pm1}\theta_1(\rho-\epsilon Q|\tau)=\frac{q^{-1/6}}{\eta(\tau)^2}e^{-2\pi i\rho}\theta_1(\rho+Q|\tau)\theta_1(\rho-Q|\tau).
\end{multline}
We see that it is also consistent with the dual partition function \eqref{eq:ZDm0}.

\bibliographystyle{JHEP}
\bibliography{Biblio}

\providecommand{\href}[2]{#2}\begingroup\raggedright\begin{thebibliography}{10}

\bibitem{tracy1991asymptotics}
C.~A. Tracy, \emph{Asymptotics of a $\tau$-function arising in the two-dimensional ising model}, \href{https://doi.org/10.1007/BF02102065}{\emph{Communications in mathematical physics} {\bfseries 142} (1991) 297}.

\bibitem{GavrylenkoLisovyi2016b}
P.~Gavrylenko and O.~Lisovyy, \emph{{Fredholm Determinant and Nekrasov Sum Representations of Isomonodromic Tau Functions}}, \href{https://doi.org/10.1007/s00220-018-3224-7}{\emph{Commun. Math. Phys.} {\bfseries 363} (2018) 1} [\href{https://arxiv.org/abs/1608.00958}{{\ttfamily 1608.00958}}].

\bibitem{Iorgov2013}
N.~Iorgov, O.~Lisovyy and Y.~Tykhyy, \emph{{Painlev\'e VI connection problem and monodromy of $c=1$ conformal blocks}}, \href{https://doi.org/10.1007/JHEP12(2013)029}{\emph{JHEP} {\bfseries 12} (2013) 029} [\href{https://arxiv.org/abs/1308.4092}{{\ttfamily 1308.4092}}].

\bibitem{Its2014}
A.~Its, O.~Lisovyy and Y.~Tykhyy, \emph{{Connection Problem for the Sine-Gordon/Painlev{\'e} III Tau Function and Irregular Conformal Blocks}}, \href{https://doi.org/10.1093/imrn/rnu209}{\emph{Int. Math. Res. Not.} {\bfseries 2015} (2015) 8903} [\href{https://arxiv.org/abs/1403.1235}{{\ttfamily 1403.1235}}].

\bibitem{ILP2016}
A.~Its, O.~Lisovyy and A.~Prokhorov, \emph{{Monodromy dependence and connection formulae for isomonodromic tau functions}}, \href{https://doi.org/10.1215/00127094-2017-0055}{\emph{Duke Math. J.} {\bfseries 167} (2018) 1347} [\href{https://arxiv.org/abs/1604.03082}{{\ttfamily 1604.03082}}].

\bibitem{DMDG2020}
F.~Del~Monte, H.~Desiraju and P.~Gavrylenko, \emph{{Isomonodromic Tau Functions on a Torus as Fredholm Determinants, and Charged Partitions}}, \href{https://doi.org/10.1007/s00220-022-04458-y}{\emph{Commun. Math. Phys.} {\bfseries 398} (2023) 1029} [\href{https://arxiv.org/abs/2011.06292}{{\ttfamily 2011.06292}}].

\bibitem{Bonelli2019}
G.~Bonelli, F.~Del~Monte, P.~Gavrylenko and A.~Tanzini, \emph{{${\mathcal {N}}$ = $2^*$ Gauge theory, free fermions on the torus and Painlev\'e VI}}, \href{https://doi.org/10.1007/s00220-020-03743-y}{\emph{Commun. Math. Phys.} {\bfseries 377} (2020) 1381} [\href{https://arxiv.org/abs/1901.10497}{{\ttfamily 1901.10497}}].

\bibitem{bonelli2021a}
G.~Bonelli, F.~Del~Monte, P.~Gavrylenko and A.~Tanzini, \emph{{Circular quiver gauge theories, isomonodromic deformations and $ W_N $ fermions on the torus}}, \href{https://doi.org/10.1007/s11005-020-01343-4}{\emph{Lett. Math. Phys.} {\bfseries 111} (2021) 1} [\href{https://arxiv.org/abs/1909.07990}{{\ttfamily 1909.07990}}].

\bibitem{DMDG2022}
F.~Del~Monte, H.~Desiraju and P.~Gavrylenko, \emph{{Monodromy dependence and symplectic geometry of isomonodromic tau functions on the torus}}, \href{https://doi.org/10.1088/1751-8121/acdc6c}{\emph{J. Phys. A} {\bfseries 56} (2023) 294002} [\href{https://arxiv.org/abs/2211.01139}{{\ttfamily 2211.01139}}].

\bibitem{BertolaKorotkin2019}
M.~Bertola and D.~Korotkin, \emph{{Tau-Functions and Monodromy Symplectomorphisms}}, \href{https://doi.org/10.1007/s00220-021-04224-6}{\emph{Commun. Math. Phys.} {\bfseries 388} (2021) 245} [\href{https://arxiv.org/abs/1910.03370}{{\ttfamily 1910.03370}}].

\bibitem{ItsProkhorov2018}
A.~R. Its and A.~Prokhorov, \emph{{On Some Hamiltonian Properties of the Isomonodromic Tau Functions}}, \href{https://doi.org/10.1142/S0129055X18400081}{\emph{Rev. Math. Phys.} {\bfseries 30} (2018) 1840008} [\href{https://arxiv.org/abs/1803.04212}{{\ttfamily 1803.04212}}].

\bibitem{Alday:2009aq}
L.~F. Alday, D.~Gaiotto and Y.~Tachikawa, \emph{{Liouville Correlation Functions from Four-dimensional Gauge Theories}}, \href{https://doi.org/10.1007/s11005-010-0369-5}{\emph{Lett. Math. Phys.} {\bfseries 91} (2010) 167} [\href{https://arxiv.org/abs/0906.3219}{{\ttfamily 0906.3219}}].

\bibitem{desiraju2022painleve}
H.~Desiraju, \emph{Painlev{\'e}/{CFT} correspondence on a torus}, \href{https://doi.org/10.1063/5.0089867}{\emph{Journal of Mathematical Physics} {\bfseries 63} (2022) } [\href{https://arxiv.org/abs/2305.04240}{{\ttfamily 2305.04240}}].

\bibitem{Nekrasov2006}
N.~Nekrasov and A.~Okounkov, \emph{{Seiberg-Witten theory and random partitions}}, \href{https://doi.org/10.1007/0-8176-4467-9_15}{\emph{Prog. Math.} {\bfseries 244} (2006) 525} [\href{https://arxiv.org/abs/hep-th/0306238}{{\ttfamily hep-th/0306238}}].

\bibitem{NIST:DLMF}
``{\it NIST Digital Library of Mathematical Functions}.'' \url{https://dlmf.nist.gov/}, Release 1.2.4 of 2025-03-15.

\bibitem{manin1996sixth}
Y.~I. Manin, \emph{{Sixth Painlev\'e Equation, Universal Elliptic Curve, and Mirror of $\bold{P}^2$}}, {\emph{arXiv preprint alg-geom/9605010} (1996) } [\href{https://arxiv.org/abs/alg-geom/9605010}{{\ttfamily alg-geom/9605010}}].

\bibitem{NRS2011}
N.~Nekrasov, A.~Rosly and S.~Shatashvili, \emph{{Darboux coordinates, Yang-Yang functional, and gauge theory}}, \href{https://doi.org/10.1016/j.nuclphysbps.2011.04.150}{\emph{Nucl. Phys. B Proc. Suppl.} {\bfseries 216} (2011) 69} [\href{https://arxiv.org/abs/1103.3919}{{\ttfamily 1103.3919}}].

\bibitem{Teschner2003}
J.~Teschner, \emph{{From Liouville theory to the quantum geometry of Riemann surfaces}},  in \emph{{14th International Congress on Mathematical Physics}}, 8, 2003, \href{https://arxiv.org/abs/hep-th/0308031}{{\ttfamily hep-th/0308031}}.

\bibitem{Nemkov2015}
N.~Nemkov, \emph{{On modular transformations of toric conformal blocks}}, \href{https://doi.org/10.1007/JHEP10(2015)039}{\emph{JHEP} {\bfseries 10} (2015) 039} [\href{https://arxiv.org/abs/1504.04360}{{\ttfamily 1504.04360}}].

\bibitem{Eberhardt2023}
L.~Eberhardt, \emph{{Notes on crossing transformations of Virasoro conformal blocks}},  \href{https://arxiv.org/abs/2309.11540}{{\ttfamily 2309.11540}}.

\bibitem{Nakajima2003}
H.~Nakajima and K.~Yoshioka, \emph{{Instanton counting on blowup. 1.}}, \href{https://doi.org/10.1007/s00222-005-0444-1}{\emph{Invent. Math.} {\bfseries 162} (2005) 313} [\href{https://arxiv.org/abs/math/0306198}{{\ttfamily math/0306198}}].

\bibitem{Bershtein2021}
M.~Bershtein, P.~Gavrylenko and A.~Grassi, \emph{{Quantum Spectral Problems and Isomonodromic Deformations}}, \href{https://doi.org/10.1007/s00220-022-04369-y}{\emph{Commun. Math. Phys.} {\bfseries 393} (2022) 347} [\href{https://arxiv.org/abs/2105.00985}{{\ttfamily 2105.00985}}].

\bibitem{ghosal2020probabilistic}
P.~Ghosal, G.~Remy, X.~Sun and Y.~Sun, \emph{Probabilistic conformal blocks for {L}iouville {CFT} on the torus}, \href{https://doi.org/10.1215/00127094-2023-0031}{\emph{Duke Math. J.} {\bfseries 173} (2024) 1085} [\href{https://arxiv.org/abs/2003.03802}{{\ttfamily 2003.03802}}].

\bibitem{desiraju2024proof}
H.~Desiraju, P.~Ghosal and A.~Prokhorov, \emph{{Proof of Zamolodchikov conjecture for semi-classical conformal blocks on the torus}},  \href{https://arxiv.org/abs/2407.05839}{{\ttfamily 2407.05839}}.

\bibitem{OlegMovie}
O.~Lisovyy, ``Painlev\'e functions, accessory parameters and conformal blocks.'' \url{https://sms.cam.ac.uk/media/3088980}, October, 2019.

\bibitem{Gavrylenko:2025nuo}
P.~Gavrylenko, \emph{{Riemann-Hilbert problems, Fredholm determinants, explicit combinatorial expansions, and connection formulas for the general $q$-Painlev{\'e} III$_3$ tau functions}},  \href{https://arxiv.org/abs/2501.01419}{{\ttfamily 2501.01419}}.

\bibitem{Reshetikhin1992}
N.~Reshetikhin, \emph{{The Knizhnik-Zamolodchikov system as a deformation of the isomonodromy problem}}, \href{https://doi.org/10.1007/BF00420750}{\emph{Letters in Mathematical Physics} {\bfseries 26} (1992) 167}.

\bibitem{Harnad1994}
J.~P. Harnad, \emph{{Quantum isomonodromic deformations and the Knizhnik-Zamolodchikov equations}},  in \emph{{Workshop on Symmetries and Integrability of Difference Equations}}, pp.~155--161, 6, 1994, \href{https://arxiv.org/abs/hep-th/9406078}{{\ttfamily hep-th/9406078}}.

\bibitem{Knizhnik1984}
V.~G. Knizhnik and A.~B. Zamolodchikov, \emph{{Current Algebra and Wess-Zumino Model in Two-Dimensions}}, \href{https://doi.org/10.1016/0550-3213(84)90374-2}{\emph{Nucl. Phys. B} {\bfseries 247} (1984) 83}.

\bibitem{Bernard1987}
D.~Bernard, \emph{{On the Wess-Zumino-Witten Models on the Torus}}, \href{https://doi.org/10.1016/0550-3213(88)90217-9}{\emph{Nucl. Phys. B} {\bfseries 303} (1988) 77}.

\bibitem{Bernard1988}
D.~Bernard, \emph{{On the {Wess-Zumino-Witten} Models on Riemann Surfaces}}, \href{https://doi.org/10.1016/0550-3213(88)90236-2}{\emph{Nucl. Phys. B} {\bfseries 309} (1988) 145}.

\bibitem{Witten1989}
E.~Witten, \emph{{Quantization of {Chern-Simons} Gauge Theory With Complex Gauge Group}}, \href{https://doi.org/10.1007/BF02099116}{\emph{Commun. Math. Phys.} {\bfseries 137} (1991) 29}.

\bibitem{Dimofte2013}
T.~Dimofte and S.~Gukov, \emph{{Chern-Simons Theory and S-duality}}, \href{https://doi.org/10.1007/JHEP05(2013)109}{\emph{JHEP} {\bfseries 05} (2013) 109} [\href{https://arxiv.org/abs/1106.4550}{{\ttfamily 1106.4550}}].

\bibitem{Dimofte2015}
T.~Dimofte, D.~Gaiotto and R.~van~der Veen, \emph{{RG Domain Walls and Hybrid Triangulations}}, \href{https://doi.org/10.4310/ATMP.2015.v19.n1.a2}{\emph{Adv. Theor. Math. Phys.} {\bfseries 19} (2015) 137} [\href{https://arxiv.org/abs/1304.6721}{{\ttfamily 1304.6721}}].

\end{thebibliography}\endgroup

\end{document}